\documentclass[a4paper, 10pt]{article}

\usepackage{arxiv}

\usepackage[utf8]{inputenc}
\usepackage[LSF, T1]{fontenc}
\usepackage{chessfss}
\usepackage{dkTensor}
\usepackage{graphicx}
\usepackage[
    style = authoryear,
    isbn = false,
    maxbibnames = 50,
    maxcitenames = 2,
    autocite = inline,
    block = space,
    backrefstyle = three+,
    date = year,
    giveninits = false,
    clearlang = false,
    uniquename = false,
    sortcites = false,
    natbib = false,
    dashed = true,
    url = true,
    doi = true,
    bibencoding = utf8
]{biblatex}

\usepackage[pdftex, colorlinks, allcolors=blue]{hyperref}
\usepackage[noabbrev, capitalize, nameinlink]{cleveref}

\title{Generalized Multi-Linear Models for Sufficient Dimension Reduction on Tensor Valued Predictors}
\author{
    \textbf{Daniel Kapla} \\
    Institute of Statistics and Mathematical Methods in Economics \\
    Faculty of Mathematics and Geoinformation \\
    TU Wien, Vienna, Austria
    \And \\
    \textbf{Efstathia Bura} \\
    \hspace{.2cm}
    Institute of Statistics and Mathematical Methods in Economics \\
    Faculty of Mathematics and Geoinformation \\
    TU Wien, Vienna, Austria
}
\date{\today}
\AtBeginDocument{
    \hypersetup{
        pdftitle = {GMLM SDR},
        pdfauthor = {Daniel Kapla},
        pdfcreator = {\pdftexbanner}
    }
}

\addbibresource{references.bib}

\theoremstyle{plain}
\newtheorem{theorem}{Theorem}
\newtheorem{lemma}{Lemma}

\newtheorem{example}{Example}
\theoremstyle{definition}
\newtheorem{definition}{Definition}
\newtheorem{condition}{Condition}
\theoremstyle{remark}
\newtheorem{remark}{Remark}

\crefname{condition}{Condition}{Conditions}
\Crefname{condition}{Condition}{Conditions}
\crefrangelabelformat{condition}{#3#1#4-#5#2#6}

\renewcommand*{\t}[1]{{#1^{T}}}
\renewcommand*{\d}{\mathrm{d}}

\setcounter{MaxMatrixCols}{20}

\begin{document}

\maketitle

\begin{abstract}
    We consider supervised learning (regression/classification) problems with tensor-valued input. We derive multi-linear sufficient reductions for the regression or classification problem by modeling the conditional distribution of the predictors given the response as a member of the quadratic exponential family. We develop estimation procedures of sufficient reductions for both continuous and binary tensor-valued predictors. We prove the consistency and asymptotic normality of the estimated sufficient reduction using manifold theory. For continuous predictors, the estimation algorithm is highly computationally efficient and is also applicable to situations where the dimension of the reduction exceeds the sample size. We demonstrate the superior performance of our approach in simulations and real-world data examples for both continuous and binary tensor-valued predictors.
\end{abstract}

\section{Introduction}\label{sec:introduction}
Complex data are collected at different times and/or under conditions often involving a large number of multi-indexed variables represented as tensor-valued data \cite[]{KoldaBader2009}. \textit{Tensors} are a mathematical tool to represent data of complex structure in that they are a generalization of matrices to higher dimensions: A tensor is a multi-dimensional array of numbers. They occur in large-scale longitudinal studies \cite[e.g.][]{Hoff2015}, in agricultural experiments and chemometrics and spectroscopy \cite[e.g.][]{LeurgansRoss1992,Burdick1995}, in signal and video processing where sensors produce multi-indexed data, e.g. over spatial, frequency, and temporal dimensions \cite[e.g.][]{DeLathauwerCastaing2007,KofidisRegalia2005}, and in telecommunications \cite[e.g.][]{DeAlmeidaEtAl2007}. Other examples of multiway data include 3D images of the brain, where the modes are the 3 spatial dimensions, and spatio-temporal weather imaging data, a set of image sequences represented as 2 spatial modes and 1 temporal mode.

Tensor regression models have been proposed to leverage the structure inherent in tensor-valued data. For instance, \cite{HaoEtAl2021,ZhouLiZhu2013} focus on tensor covariates, while \cite{RabusseauKadri2016,LiZhang2017,ZhouLiZhu2013} focus on tensor responses, and \cite{Hoff2015,Lock2018} consider tensor on tensor regression. \cite{HaoEtAl2021} modeled a scalar response as a flexible nonparametric function of tensor covariates. \cite{ZhouLiZhu2013} assume the scalar response has a distribution in the exponential family given the tensor-valued predictors with the link modeled as a multi-linear function of the predictors. \cite{RabusseauKadri2016} model the tensor-valued response as a linear model with tensor-valued regression coefficients subject to a multi-linear rank constraint. \cite{LiZhang2017} approach the problem with a similar linear model but instead of a low-rank constraint the error term is assumed to have a separable Kronecker product structure while using a generalization of the envelope model \cite[]{CookLiChiaromonte2010}. \cite{ZhouEtAl2023} focus on partially observed tensor response given vector-valued predictors with mode-wise sparsity constraints in the regression coefficients. \cite{Hoff2015} extends an existing bilinear regression model to a tensor on tensor of conformable modes and dimensions regression model based on a Tucker product. \cite{Lock2018} uses a tensor contraction to build a penalized least squares model for a tensor with an arbitrary number of modes and dimensions.

We consider the general regression problem of fitting a response of general form (univariate, multivariate, tensor-valued) on a tensor-value predictor. We operate in the context of sufficient dimension reduction (SDR) \cite[e.g.][]{Cook1998,Li2018} based on inverse regression, which leads us to regressing the tensor-valued predictor on tensor-valued functions of the response (tensor on tensor regression). We assume the conditional distribution of the tensor-valued predictors given the response belongs to the quadratic exponential family for which the first and second moments admit a separable Kronecker product structure. The quadratic exponential family contains the multi-linear normal and the multi-linear Ising distributions, for continuous and binary tensor-valued random variables, respectively. 

Our tensor-to-tensor inverse regression model is a generalized multi-linear model similar to the generalized linear model of \cite{ZhouLiZhu2013, Hoff2015} and \cite{SunLi2017}. Our approach differs in that it is a model-based \emph{Sufficient Dimension Reduction} (SDR) method for tensor-valued data; that is, we obtain the maximum dimension reduction of the tensor-valued predictor without losing any information about the response. Our estimates of the sufficient reduction are maximum likelihood based and enjoy the attendant optimality properties, such as consistency, efficiency and asymptotic normality. The main challenge in estimation in multi-linear tensor regression models is the non-identifiability of the parameters, as has been acknowledged by researchers in this field (e.g., \cite{LiZhang2017, Lock2018}). Our approach  does not require any penalty terms and/or sparse approaches in order to address this issue, in contrast to, e.g., \cite{ZhouLiZhu2013,Hoff2015, SunLi2017, LiZhang2017, LlosaMaitra2023}. Instead, we consider the parameter space as a smooth embedded manifold, which not only leads to identifiability but also allows great modeling flexibility.

Our main contributions are (i) formulating the dimension reduction problem via the quadratic exponential family modeling up to two-way interaction that allows us to derive the minimal sufficient dimension reduction in closed form, (ii) great flexibility in modeling by defining the parameter space as a smooth embedded manifold, (iii) deriving the maximum likelihood estimator of the sufficient reduction subject to multi-linear constraints and overcoming parameter non-identifiability, (iv) establishing the consistency and asymptotic normality of the estimators, and (v) developing computationally efficient estimation algorithms (very fast in the case of multi-linear normal predictors).

We illustrate our proposed methodology and contributions using a simple example where the predictor is matrix-valued (tensor of order $2$) and the response is univariate (binary). The electroencephalography (EEG) data set of $77$ alcoholic and $45$ control subjects is commonly used in EEG signal analysis.\footnote{\textsc{Begleiter, H.} (1999). EEG Database. \textit{Neurodynamics Laboratory, State University of New York Health Center}. Donated by Lester Ingber. \url{http://kdd.ics.uci.edu/databases/eeg/eeg.data.html}}. EEG is a noninvasive neuroimaging technique that involves the placement of electrodes on the scalp to record electrical activity of the brain. Each subject data point consists of a $p_1\times p_2 = 256\times 64$ matrix, with each row representing a time point and each column a channel. The measurements were obtained by exposing each individual to visual stimuli and measuring voltage values from $64$ electrodes placed on the subjects' scalps sampled at $256$ time points over $1$ second ($256$ Hz). Different stimulus conditions were used, and for each condition, $120$ trials were measured. As typically done in the analysis of this data set (e.g., \cite{LiKimAltman2010,PfeifferForzaniBura2012,DingCook2015,PfeifferKaplaBura2021}), we use a single stimulus condition (S1), and for each subject, we average over all the trials under this condition. That is, $(\ten{X}_i, y_i)$, $i = 1, \ldots, 122$, where $\ten{X}_i$ is a $256\times 64$ matrix (two-mode tensor), with each entry representing the mean voltage value of subject $i$ at a combination of a time point and a channel, averaged over all trials under the S1 stimulus condition, and $Y$ is a binary outcome variable with $Y_i = 1$ for an alcoholic and $Y_i = 0$ for a control subject.

We assume $\ten{X} \mid Y$ follows the bilinear model
\begin{equation}\label{eq:bilinear}
    \ten{X} = \bar{\mat{\eta}} + \mat{\alpha}_1\mat{F_Y}\t{\mat{\alpha}_2} + \mat{\epsilon}
\end{equation}
where $\mat{F}_Y\equiv Y$ is a $1\times 1$ matrix (in general, $\mat{F}_Y$ is a tensor-valued function of $Y$), $\vec{\mat{\epsilon}}\sim\mathcal{N}(\mat{0}, \mat{\Sigma}_2\otimes\mat{\Sigma}_1)$, and $\otimes$ signifies the Kronecker product operator. Model \eqref{eq:bilinear} expresses that $\E(\ten{X}\mid Y) = \bar{\mat{\eta}} + \mat{\alpha}_1\mat{F}_Y\t{\mat{\alpha}_2}$ solely contains the information in $Y$ about $\ten{X}$. Alternatively, \eqref{eq:bilinear} can be written as $\ten{X} = \bar{\mat{\eta}} + \mat{\alpha}_1 \mat{F_Y}\t{\mat{\alpha}_2} + \mat{\Sigma}_1^{-1/2}\mat{U}\mat{\Sigma}_2^{-1/2}$, where $\vec{\mat{U}}\sim\mathcal{N}(\mat{0}, \mat{I}_{p_2}\otimes\mat{I}_{p_1})$. Thus, $\ten{X}$ follows a matrix normal distribution with mean $\bar{\mat{\eta}} + \mat{\alpha}_1 \mat{F_Y}\t{\mat{\alpha}_2}$ and variance-covariance structure $\mat{\Sigma}_2\otimes\mat{\Sigma}_1$. The \textit{minimal sufficient reduction} of the predictor $\ten{X}$ for the classification (regression) of $Y$ on $\ten{X}$ is
\begin{displaymath}
    \ten{R}(\ten{X})
    = \t{(\mat{\Sigma}_1^{-1}\mat{\alpha}_1)}(\ten{X}-\bar{\mat{\eta}})(\mat{\Sigma}_2^{-1}\mat{\alpha}_2)
    = \t{\mat{\beta}_1}(\ten{X}-\bar{\mat{\eta}})\mat{\ten{\beta}_2}.
\end{displaymath}
Our approach involves optimization of the likelihood of $\ten{X}$ conditional on $Y$ subject to the non-convex nonlinear constraint that both the mean and covariance parameter spaces have Kronecker product structure, where the components of the Kronecker product are non-identifiable. This is resolved by letting the parameter space be a smooth embedded submanifold. We derive the maximum likelihood estimator of the minimal sufficient reduction and use it to predict the binary response. Our method exhibits uniformly better classification accuracy when compared to competing existing techniques (see Table~\ref{tab:eeg}). An important aspect of our approach is that it bypasses the dimensionality issue ($p_1 \times p_2 = 256 \times 64 = 16384 \gg 122 = n$) without having to artificially reduce the dimension by preprocessing the data, as in \cite{LiKimAltman2010, PfeifferForzaniBura2012, PfeifferKaplaBura2021, DingCook2015}, or imposing simplifying assumptions and/or sparsity or regularization constraints \cite[]{PfeifferKaplaBura2021,LiZhang2017,SunLi2017, Lock2018}.

Even though our motivation is rooted in the SDR perspective, our proposal concerns inference on any regression model with a tensor-valued response and predictors of any type. Thus, our approach can be used as a stand-alone model for such data regardless of whether one is interested in deriving sufficient reductions or simply regressing tensor-valued variables on any type of predictor. In effect, our model can be viewed as an extension of reduced rank regression for tensor-valued response regression.

The structure of this paper is as follows. We introduce our notation in \cref{sec:notation}, define the problem in \Cref{sec:problem-formulation} and introduce our model in \cref{sec:gmlm-model}. We present a general maximum likelihood estimation procedure and derive specialized methods for the multi-linear normal and the multi-linear Ising distributions in \cref{sec:ml-estimation}. \Cref{sec:manifolds} is a short introduction to manifolds and provides the basis for the consistency and asymptotic normality results in \cref{sec:statprop}. Simulations for continuous and binary predictors are carried out in \cref{sec:simulations}. We apply our model to EEG data and perform a proof of concept data analysis where a chess board is interpreted as a collection of binary $8\times 8$ matrices in \cref{sec:data-analysis}. We summarize our contributions and discuss directions for future research in \cref{sec:discussion}.

\section{Notation}\label{sec:notation}

Capital Latin letters in \LaTeX{} fraktur font will denote tensors, and Latin and Greek letters in boldface will denote vectors and matrices throughout the paper. A multiway array $\ten{A}$ of dimension $q_1\times \ldots\times q_r$ is a tensor of order\footnote{Also referred to as rank, hence the variable name $r$. We refrain from using this term to avoid confusion with the rank of a matrix.} $r$, in short $r$-tensor, where $r\in\mathbb{N}$ is the number of its modes or axes (dimensions), and we denote it by $\ten{A}\in\mathbb{R}^{q_1\times \ldots\times q_r}$ where $\ten{A}_{i_1,...,i_r} \in \mathbb{R}$ is its $(i_1, \ldots, i_r)$th entry. For example, a $p \times q$ matrix $\mat{B}$ is a tensor of \textit{order} 2 as it has two modes, the rows and columns. Given a tensor $\ten{A}\in\mathbb{R}^{q_1\times \ldots\times q_r}$ and matrices $\mat{B}_k\in\mathbb{R}^{p_k\times q_k}$, $k\in[r] = \{1, 2, \ldots, r\}$, the \emph{multi-linear multiplication}, or \emph{Tucker operator} \cite[]{Kolda2006}, is defined element-wise as
\begin{displaymath}
    (\ten{A}\times\{\mat{B}_1, \ldots, \mat{B}_r\})_{j_1, \ldots, j_r} = \sum_{i_1, \ldots, i_r = 1}^{q_1, \ldots, q_r} \ten{A}_{i_1, \ldots, i_r}(\mat{B}_{1})_{j_1, i_1} \cdots (\mat{B}_{r})_{j_r, i_r},
\end{displaymath}
and is itself an order $r$ tensor of dimension $p_1\times ...\times p_k$. The \emph{$k$-mode product} of the tensor $\ten{A}$ with the matrix $\mat{B}_k$ is 
\begin{displaymath}
    \ten{A}\times_k\mat{B}_k = \ten{A}\times\{\mat{I}_{q_1}, \ldots, \mat{I}_{q_{k-1}}, \mat{B}_{k}, \mat{I}_{q_{k+1}}, \ldots, \mat{I}_{q_r}\}.
\end{displaymath}
The notation $\ten{A}\mlm_{k\in S}\mat{B}_k$ is shorthand for the iterative application of the mode product for all indices in $S\subseteq[r]$. For example $\ten{A}\mlm_{k\in\{2, 5\}}\mat{B}_k = \ten{A}\times_2\mat{B}_2\times_5\mat{B}_5$. By only allowing $S$ to be a set, this notation is unambiguous because the mode product commutes for different modes; i.e., $\ten{A}\times_j\mat{B}_j\times_k\mat{B}_k = \ten{A}\times_k\mat{B}_k\times_j\mat{B}_j$ for $j\neq k$. For example, let $\mat{A}, \mat{B}_1, \mat{B}_2$ be matrices of matching dimensions. The bilinear mode-product and multi-linear multiplication relate to the well-known matrix-matrix multiplication as
\begin{displaymath}
    \mat{A}\times_1\mat{B}_1 = \mat{B}_1\mat{A}, \qquad
    \mat{A}\times_2\mat{B}_2 = \mat{A}\t{\mat{B}_2}, \qquad
    \mat{A}\mlm_{k = 1}^2\mat{B}_k = \mat{A}\mlm_{k \in \{1, 2\}}\mat{B}_k = \mat{B}_1\mat{A}\t{\mat{B}_2}.
\end{displaymath}
The operator $\vec$ maps an array to a vector. For a tensor $\ten{A}$ of order $r$ and dimensions $q_1, \ldots, q_r$, $\vec(\ten{A})$ is the $q_1 q_2 \ldots q_r \times 1$ vector resulting from stacking the modes of $\ten{A}$ one after the other in order $r$, then $r-1$, and so on. For example, if $\ten{A}$ is a 3-dimensional array, $\vec(\ten{A})=\t{(\t{\vec(\ten{A}_{:,:,1})},\t{\vec(\ten{A}_{:,:,2})},\ldots,\t{\vec(\ten{A}_{:,:,q_3})})}$. We use the notation $\ten{A}\equiv \ten{B}$ if and only if $\vec(\ten{A}) = \vec(\ten{B})$, and similarly for matrices $\mat{A},\mat{B}$. For tensors of order at least $2$, the \emph{flattening} (or \emph{unfolding} or \emph{matricization}) is a reshaping of the tensor into a matrix along a particular mode. For a tensor $\ten{A}$ of order $r$ and dimensions $q_1, \ldots, q_r$, the $k$-mode unfolding $\ten{A}_{(k)}$ is a $q_k\times \prod_{l=1, l\neq k}q_l$ matrix with
\begin{displaymath}
    (\ten{A}_{(k)})_{i_k, j} = \ten{A}_{i_1, \ldots, i_r}\quad\text{ with }\quad j = 1 + \sum_{\substack{l = 1\\l \neq k}}^r (i_l - 1) \prod_{\substack{m = 1\\m\neq k}}^{l - 1}q_m.
\end{displaymath}
Illustration examples of vectorization and matricization are given in \cref{app:examples}.

The \emph{inner product} between two tensors of the same number of elements is $\langle\ten{A}, \ten{B}\rangle = \t{\vec(\ten{A})}\vec(\ten{B})\in\mathbb{R}$, and the \emph{Frobenius norm} for tensors is $\|\ten{A}\|_F = \sqrt{\langle\ten{A}, \ten{A}\rangle}$. The \emph{outer product} between two tensors $\ten{A}$ of dimensions $q_1, \ldots, q_r$ and $\ten{B}$ of dimensions $p_1, \ldots, p_l$ is a tensor $\ten{A}\circ\ten{B}$ of order $r + l$ and dimensions $q_1, \ldots, q_r, p_1, \ldots, p_l$, such that $\ten{A}\circ\ten{B} \equiv (\vec\ten{A})\t{(\vec{\ten{B}})}$. The iterated outer and iterated Kronecker products are written as
\begin{displaymath}
    \bigouter_{k = 1}^{r} \mat{A}_k = \mat{A}_1\circ\ldots \circ\mat{A}_r,
    \qquad
    \bigkron_{k = 1}^{r} \mat{A}_k = \mat{A}_1\otimes\ldots \otimes\mat{A}_r
\end{displaymath}
where the order of iteration is important. Similar to the outer product, we extend the Kronecker product to $r$-tensors $\ten{A}, \ten{B}$ giving a $2r$-tensor $\ten{A}\otimes\ten{B}$.

Finally, the gradient of a function $\ten{F}(\ten{X})$ of any shape, univariate, multivariate, or tensor-valued, with argument $\ten{X}$ of any shape is defined to be the $p\times q$ matrix 
\begin{displaymath}
    \nabla_{\ten{X}}\ten{F} = \frac{\partial\t{(\vec\ten{F}(\ten{X}))}}{\partial(\vec\ten{X})},
\end{displaymath}
where the vectorized quantities $\vec{\ten{X}}\in\mathbb{R}^p$ and $\vec\ten{F}(\ten{X})\in\mathbb{R}^q$. This is consistent with the gradient of a real-valued function $f(\mat{x})$ where $\mat{x}\in\mathbb{R}^p$ as $\nabla_{\mat{x}}f\in\mathbb{R}^{p\times 1}$ \cite[][Ch.~15]{Harville1997}.

\section{Problem Formulation}\label{sec:problem-formulation}

Our goal is to reduce the complexity of inferring the cumulative distribution function (cdf) $F$ of $Y\mid \ten{X}$, where $\ten{X}$ is assumed to admit $r$-tensor structure of dimension $p_1\times ... \times p_r$ with continuous or discrete entries, and the response $Y$ is unconstrained. To this end, we assume there exists a tensor-valued function of lower dimension $\ten{R}:\ten{X}\mapsto \ten{R}(\ten{X})$ such that 
\begin{displaymath}
    F(Y\mid \ten{X}) = F(Y\mid \ten{R}(\ten{X})).
\end{displaymath}
where the tensor-valued $\ten{R}(\ten{X})$ has dimension $q_1\times...\times q_r$ with $q_j\leq p_j$, $j = 1, ..., r$, which represents a dimension reduction along every mode of $\ten{X}$. Since $\ten{R}(\ten{X})$ replaces the predictors without any effect in the conditional cdf of $Y\mid \ten{X}$, it is a \emph{sufficient reduction} for the regression $Y\mid\ten{X}$. This formulation is flexible as it allows, for example, to select ``important'' modes by reducing ``unimportant'' modes to be $1$ dimensional.

To find such a reduction $\ten{R}$, we leverage the equivalence relation pointed out in \cite{Cook2007},
\begin{equation}\label{eq:inverse-regression-sdr}
    Y\mid\ten{X} \sim Y\mid \ten{R}(\ten{X})
        \quad\Longleftrightarrow\quad
    \ten{X}\mid(Y, \ten{R}(\ten{X})) \sim \ten{X}\mid\ten{R}(\ten{X}).
\end{equation}
According to \eqref{eq:inverse-regression-sdr}, a \textit{sufficient statistic} $\ten{R}(\ten{X})$ for $Y$ in the inverse regression $\ten{X}\mid Y$, where $Y$ is considered as a parameter indexing the model, is also a \textit{sufficient reduction} for $\ten{X}$ in the forward regression $Y\mid\ten{X}$.

The factorization theorem is the usual tool to identify sufficient statistics and requires a distributional model. We assume that $\ten{X}\mid Y$ is a full rank quadratic exponential family with density
\begin{align}
    f_{\mat{\eta}_y}(\ten{X}\mid Y = y)&= h(\ten{X})\exp(\t{\mat{\eta}_y}\mat{t}(\ten{X}) - b(\mat{\eta}_y)) \notag\\
    &= h(\ten{X})\exp(\langle \mat{t}_1(\ten{X}), \mat{\eta}_{1y} \rangle + \langle \mat{t}_2(\ten{X}), \mat{\eta}_{2y} \rangle - b(\mat{\eta}_{y})) \label{eq:quad-density}
\end{align}
where $\mat{t}_1(\ten{X})=\vec \ten{X}$ and $\mat{t}_2(\ten{X})$ is linear in $\ten{X}\circ\ten{X}$. The dependence of $\ten{X}$ on $Y$ is fully captured in the natural parameter $\mat{\eta}_y$.\footnote{$\mat{\eta}_y$ is a function of the response $Y$, so it is not a parameter in the formal statistical sense. We view it as a parameter in order to leverage \eqref{eq:inverse-regression-sdr} and derive a sufficient reduction from the inverse regression.} The function $h$ is non-negative real-valued and $b$ is assumed to be at least twice continuously differentiable and strictly convex. An important feature of the \emph{quadratic exponential family} is that the distribution of its members is fully characterized by their first two moments. Distributions within the quadratic exponential family include the \emph{multi-linear normal} (\cref{sec:tensor-normal-estimation}) and \emph{multi-linear Ising model} (\cref{sec:ising_estimation}, a generalization of the (inverse) Ising model which is a multi-variate Bernoulli with up to second-order interactions) and mixtures of these two.

\section{The Generalized Multi-Linear Model}\label{sec:gmlm-model}

In model \eqref{eq:quad-density}, the dependence of $\ten{X}$ and $Y$ is absorbed in $\mat{\eta}_y$, and $\mat{t}(\ten{X})$ is the minimal sufficient statistic for $\mat{\eta}_y = (\mat{\eta}_{1y}, \mat{\eta}_{2y})$ with
\begin{displaymath}\label{eq:t-stat}
    \mat{t}(\ten{X}) = (\mat{t}_1(\ten{X}),\mat{t}_2(\ten{X}))=(\vec{\ten{X}}, \mat{T}_2\vech((\vec\ten{X})\t{(\vec\ten{X})})),
\end{displaymath}
where the $d\times p(p + 1) / 2$ dimensional matrix $\mat{T}_2$ with $p = \prod_{i = 1}^r p_i$ ensures that $\mat{\eta}_{2y}$ is of minimal dimension $d$. The matrix $\mat{T}_2$ is of full rank $d$ and unique to different members of the quadratic exponential family.
We can reexpress the exponent in \eqref{eq:quad-density} as
\begin{align*}
    \t{\mat{\eta}_y} \mat{t}(\ten{X})
        &= \langle \vec \ten{X}, \mat{\eta}_{1y} \rangle + \langle \mat{T}_2\vech(\ten{X}\circ\ten{X}), \mat{\eta}_{2y} \rangle \\
        &= \langle \vec \ten{X}, \mat{\eta}_{1y} \rangle + \langle \vec(\ten{X}\circ\ten{X}), \t{(\mat{T}_2\pinv{\mat{D}_p})}\mat{\eta}_{2y} \rangle
\end{align*}
where $\mat{D}_p$ is the \emph{duplication matrix} \cite[][Ch.~11]{AbadirMagnus2005}, defined so that $\mat{D}_p\vech \mat{A} = \vec \mat{A}$ for every symmetric $p\times p$ matrix $\mat{A}$, and $\pinv{\mat{D}_p}$ is its Moore-Penrose pseudo-inverse. The first natural parameter component, $\mat{\eta}_{1y}$, captures the first order, and $\mat{\eta}_{2y}$, the second-order relationship of $Y$ and $\ten{X}$. The quadratic exponential density of $\ten{X} \mid Y$ can then be expressed as
\begin{equation}\label{eq:quadratic-exp-fam}
    f_{\mat{\eta}_y}(\ten{X}\mid Y = y) = h(\ten{X})\exp\left(\langle \vec \ten{X}, \mat{\eta}_{1y} \rangle + \langle \vec(\ten{X}\circ\ten{X}), \t{(\mat{T}_2\pinv{\mat{D}_p})}\mat{\eta}_{2y} \rangle - b(\mat{\eta}_y)\right)
\end{equation}
The exponential family in \eqref{eq:quadratic-exp-fam} is easily generalizable to any order. This, though, would result in the number of parameters becoming prohibitive to estimate, which is also the reason why we opted for the second-order exponential family in our formulation.

By the equivalence in \eqref{eq:inverse-regression-sdr}, in order to find the sufficient reduction $\ten{R}(\ten{X})$ we need to infer $\mat{\eta}_{1y}$, and $\mat{\eta}_{2y}$. This is reminiscent of generalized linear modeling, which we extend to a multi-linear formulation next.
Suppose $\ten{F}_y$ is a known mapping of $y$ with zero expectation, $\E_Y\ten{F}_Y = 0$. We assume the dependence of $\ten{X}$ and $Y$ is reflected only in $\mat{\eta}_{1y}$ and let
\begin{align}
    \mat{\eta}_{1y} &= \vec{\overline{\ten{\eta}}} + \mat{B}\vec\ten{F}_y, \label{eq:eta1-manifold} \\
    \mat{\eta}_{2}  &= \t{(\pinv{(\mat{T}_2\pinv{\mat{D}_p})})}\vec(c\,\mat{\Omega}), \label{eq:eta2-manifold}
\end{align}
where $\overline{\ten{\eta}}\in\mathbb{R}^{p_1\times\ldots\times p_r}$, $\mat{\Omega} \in \mathbb{R}^{p \times p}$ is positive definite with $p = \prod_{j = 1}^{r} p_j$, and $c\in\mathbb{R}$ is a known constant determined by the distribution to ease modeling. That is, we assume that only $\mat{\eta}_{1y}$ depends on $Y$ through $\mat{B}$. The second parameter $\mat{\eta}_2$ captures the second-order interaction structure of $\ten{X}$, which we assume not to depend on the response $Y$. To relate individual modes of $\ten{X}$ to the response, allowing flexibility in modeling, we assume $\ten{F}_y$ takes values in $\mathbb{R}^{q_1\times ...\times q_r}$; that is, $\ten{F}_y$ is a tensor-valued independent variable. This, in turn, leads to imposing a corresponding tensor structure to the regression parameter $\mat{B}$ and \eqref{eq:eta1-manifold} becomes
\begin{align}
    \mat{\eta}_{1y} &=
    \vec\biggl(\overline{\ten{\eta}} + \ten{F}_y\mlm_{j = 1}^{r}\mat{\beta}_j\biggr), \label{eq:eta1}
\end{align}
with $\mat{B} = \bigotimes_{j = r}^{1}\mat{\beta}_j$ and the component matrices $\mat{\beta}_j\in\mathbb{R}^{p_j\times q_j}$ are of known rank for $j = 1, \ldots, r$. 

As the bilinear form of the matrix normal requires its covariance be separable, the multi-linear structure of $\ten{X}$ also induces separability on its covariance structure (see, e.g., \cite{Hoff2011}). Therefore, we further assume that
\begin{align}
\t{(\mat{T}_2\pinv{\mat{D}_p})}\mat{\eta}_{2y}= \t{(\mat{T}_2\pinv{\mat{D}_p})}\mat{\eta}_{2} &= \vec\biggl(c\bigotimes_{j = r}^{1}\mat{\Omega}_j\biggr), \label{eq:eta2}
\end{align}
where $\mat{\Omega}_j\in\mathbb{R}^{p_j\times p_j}$ are symmetric positive definite matrices for $j = 1, \ldots, r$. Requiring $\mat{\Omega} = \bigotimes_{j = r}^{1}\mat{\Omega}_j$ substantially reduces the number of parameters in $\mat{\Omega}$. The requirement the $\mat{\Omega}_j$s be positive definite is possible due to the constant $c$.

Equation \eqref{eq:eta2} is underdetermined since $\t{(\mat{T}_2\pinv{\mat{D}_p})}$ has full column rank $d < p^2$ (with a non-strict inequality if $\ten{X}$ is univariate) but $\mat{\eta}_2$ is uniquely determined given any $\mat{\Omega}$ as $\t{(\pinv{(\mat{T}_2\pinv{\mat{D}_p})})}$ has full row rank. We let $\mat{\xi} = (\vec{\overline{\ten{\eta}}}, \vec{\mat{B}}, \vech{\mat{\Omega}})$ be the unconstrained $p(p + 2 q + 3) / 2$-parameter vector and $\mat{\theta} = (\vec{\overline{\ten{\eta}}}, \vec{\mat{B}}, \vech{\mat{\Omega}})$ be the constrained parameter vector, where $\mat{B}=\bigotimes_{j = r}^{1}\mat{\beta}_j$ and $\mat{\Omega} = \bigotimes_{j = r}^{1}\mat{\Omega}_j$. We also let $\Xi$ and $\Theta$ denote the unconstrained and constrained parameter spaces, with $\mat{\xi}$ and $\mat{\theta}$ varying in $\Xi$ and $\Theta$, respectively. The parameter space $\Xi$ is an open subset of $\mathbb{R}^{p(p + 2 q + 3) / 2}$ so that \eqref{eq:quadratic-exp-fam} is a proper density. We relax the full rank and positive definite assumptions for $\mat{\beta}_k$ and $\mat{\Omega}_k$ later, as a consequence of \cref{thm:param-manifold} in \cref{sec:kron-manifolds}.

In a classical \emph{generalized linear model} (GLM), the link function connecting the natural parameters to the expectation of the sufficient statistic $\mat{\eta}_y = \mat{g}(\E[\mat{t}(\ten{X}) \mid Y = y])$ is invertible. Such a link may not exist in our setting, but for our purpose what we call the ``inverse'' link suffices. As in the non-minimal formulation \eqref{eq:quadratic-exp-fam}, we define the ``inverse'' link through its tensor-valued components
\begin{align}
    \ten{g}_1(\mat{\eta}_y) &= \E[\ten{X} \mid Y = y], \label{eq:inv-link1}\\
    \ten{g}_2(\mat{\eta}_y) &= \E[\ten{X}\circ\ten{X} \mid Y = y] \label{eq:inv-link2}
\end{align}
as $\widetilde{\mat{g}}(\mat{\eta}_y) = (\vec\ten{g}_1(\mat{\eta}_y), \vec\ten{g}_2(\mat{\eta}_y))$.
Under the quadratic exponential family model \eqref{eq:quadratic-exp-fam}, a sufficient reduction for the regression of $Y$ on $\ten{X}$ is given in \cref{thm:sdr}.
\begin{theorem}[SDR]\label{thm:sdr}
    A sufficient reduction for the regression $Y\mid \ten{X}$ under the quadratic exponential family inverse regression model \eqref{eq:quadratic-exp-fam} with natural parameters \eqref{eq:eta1} and \eqref{eq:eta2} is given by
    \begin{equation}\label{eq:sdr}
        \ten{R}(\ten{X})
            = (\ten{X} - \E\ten{X})\mlm_{k = 1}^{r}\t{\mat{\beta}_j}.
    \end{equation}
    The reduction \eqref{eq:sdr} is minimal if $\mat{\beta}_j$ are full rank for all $j=1,\ldots,r$.
\end{theorem}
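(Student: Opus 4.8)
The plan is to exploit the inverse-regression equivalence \eqref{eq:inverse-regression-sdr}: it suffices to produce a sufficient statistic for $y$, viewed as the parameter indexing the inverse family \eqref{eq:quadratic-exp-fam}, after which the same map is automatically a sufficient reduction for the forward regression $Y\mid\ten{X}$. I would establish sufficiency by the Fisher--Neyman factorization theorem and minimality by the Lehmann--Scheff\'{e} likelihood-ratio criterion.

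For sufficiency, the first step is to isolate where $y$ enters the exponent of \eqref{eq:quadratic-exp-fam}. By the modeling assumption \eqref{eq:eta2} the second natural parameter is $\mat{\eta}_{2y}=\mat{\eta}_2$, free of $y$, so the quadratic term $\langle\vec(\ten{X}\circ\ten{X}),\t{(\mat{T}_2\pinv{\mat{D}_p})}\mat{\eta}_2\rangle$ carries no dependence on $y$. The only $y$-varying contribution comes from the linear term; substituting \eqref{eq:eta1} with $\mat{B}=\bigotimes_{j=r}^{1}\mat{\beta}_j$, its $y$-dependent part is $\langle\vec\ten{X},\mat{B}\vec\ten{F}_y\rangle=\langle\ten{X},\ten{F}_y\mlm_{j=1}^{r}\mat{\beta}_j\rangle$. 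The key algebraic move is the adjoint property of the multilinear multiplication under the inner product, $\langle\ten{X},\ten{F}_y\mlm_{j=1}^{r}\mat{\beta}_j\rangle=\langle\ten{X}\mlm_{j=1}^{r}\t{\mat{\beta}_j},\ten{F}_y\rangle$, which lets me write
\begin{displaymath}
    f_{\mat{\eta}_y}(\ten{X}\mid Y=y)=\kappa(\ten{X})\exp\bigl(\langle\ten{X}\mlm_{j=1}^{r}\t{\mat{\beta}_j},\ten{F}_y\rangle-b(\mat{\eta}_y)\bigr),
\end{displaymath}
where $\kappa(\ten{X})$ absorbs $h(\ten{X})$ together with the two terms free of $y$. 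Factorization then identifies $\ten{X}\mlm_{j=1}^{r}\t{\mat{\beta}_j}$ as sufficient, and since $\ten{R}(\ten{X})=\ten{X}\mlm_{j=1}^{r}\t{\mat{\beta}_j}-(\E\ten{X})\mlm_{j=1}^{r}\t{\mat{\beta}_j}$ differs from it only by a parameter-determined constant, $\ten{R}$ is an equivalent sufficient statistic and hence, via \eqref{eq:inverse-regression-sdr}, a sufficient reduction.

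For minimality I would apply the Lehmann--Scheff\'{e} criterion: $\ten{R}$ is minimal sufficient as soon as, for any two points $\ten{X}_1,\ten{X}_2$, the ratio $f_{\mat{\eta}_y}(\ten{X}_1)/f_{\mat{\eta}_y}(\ten{X}_2)$ is constant in $y$ precisely when $\ten{R}(\ten{X}_1)=\ten{R}(\ten{X}_2)$. The $y$-dependent part of the log-ratio is $\langle\t{\mat{B}}\vec(\ten{X}_1-\ten{X}_2),\vec\ten{F}_y\rangle$, where I use the vectorization identity $\vec(\ten{Z}\mlm_{j=1}^{r}\t{\mat{\beta}_j})=\t{\mat{B}}\vec\ten{Z}$ with $\t{\mat{B}}=\bigotimes_{j=r}^{1}\t{\mat{\beta}_j}$. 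This is constant in $y$ iff $\t{\mat{B}}\vec(\ten{X}_1-\ten{X}_2)$ is orthogonal to the span of the centered family $\{\vec\ten{F}_y-\vec\ten{F}_{y'}\}$, and---when that span is all of $\mathbb{R}^{q_1\cdots q_r}$---iff $\t{\mat{B}}\vec(\ten{X}_1-\ten{X}_2)=\vec(\ten{R}(\ten{X}_1)-\ten{R}(\ten{X}_2))=0$. Full rank of every $\mat{\beta}_j$ makes $\mat{B}$ of full column rank $q=\prod_{j}q_j$, so the $q$ coordinates of $\ten{R}$ are non-redundant and no coarser reduction exists.

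The sufficiency direction reduces to the factorization theorem once the adjoint identity for $\mlm$ is recorded, so I expect the subtle point to be minimality---specifically the richness requirement on $\{\ten{F}_y\}$ that turns ``the log-ratio is constant in $y$'' into ``$\t{\mat{B}}\vec(\ten{X}_1-\ten{X}_2)=0$''. Without it the minimal statistic collapses to the projection of $\t{\mat{B}}\vec\ten{X}$ onto the span of the centered $\ten{F}_y$, which is strictly coarser than $\ten{R}$; I would therefore state this spanning condition explicitly (it is compatible with $\E_Y\ten{F}_Y=0$ and the full-rank hypothesis on the $\mat{\beta}_j$) and verify the two supporting algebraic facts, namely the adjoint property of the multilinear product and the identity $\vec(\ten{Z}\mlm_{j=1}^{r}\t{\mat{\beta}_j})=(\bigotimes_{j=r}^{1}\t{\mat{\beta}_j})\vec\ten{Z}$.
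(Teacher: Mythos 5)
Your proof is correct, and it reaches the result by a more elementary, self-contained route than the paper. The paper's proof is essentially a two-line reduction to Theorem~1 of \cite{BuraDuarteForzani2016}: it computes $\mat{\eta}_y - \E_Y\mat{\eta}_Y = (\mat{B}\vec\ten{F}_y, \mat{0})$, observes that the span of these differences sits inside $\Span(\mat{B},\mat{0})^{T}$, and then invokes the cited theorem both for sufficiency of $\t{\mat{\alpha}}(\mat{t}(\ten{X})-\E\mat{t}(\ten{X}))$ and for minimality under the full-rank hypothesis. You instead re-derive the content of that theorem directly: Fisher--Neyman factorization (after the adjoint identity $\langle\ten{X},\ten{F}_y\mlm_{j}\mat{\beta}_j\rangle=\langle\ten{X}\mlm_{j}\t{\mat{\beta}_j},\ten{F}_y\rangle$, which is exactly $\t{(\vec\ten{X})}\mat{B}\vec\ten{F}_y=\t{(\t{\mat{B}}\vec\ten{X})}\vec\ten{F}_y$) for sufficiency, and the Lehmann--Scheff\'e likelihood-ratio criterion for minimality. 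What your route buys is transparency about a hypothesis the paper leaves implicit: for minimality you correctly isolate the requirement that the centered family $\{\vec\ten{F}_y-\vec\ten{F}_{y'}\}$ span $\mathbb{R}^{q}$, without which the minimal statistic is only the component of $\t{\mat{B}}\vec\ten{X}$ along that span. In the paper this richness condition is folded into the blanket assumption that $\ten{X}\mid Y$ is a \emph{full rank} quadratic exponential family (stated just before \eqref{eq:quad-density}), and the proof phrases the remaining condition as ``all $\mat{\beta}_j$ full rank,'' whose role --- as you note --- is to make $\mat{B}$ of full column rank $q$ so that the $q$-dimensional representation \eqref{eq:sdr} carries no redundant coordinates. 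The paper's route is shorter and inherits minimality wholesale from the cited theorem; yours makes the two separate ingredients of minimality (richness of $\{\ten{F}_y\}$ and non-degeneracy of $\mat{B}$) explicit, which is a genuine clarification rather than a gap.
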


The reduction \eqref{eq:sdr} in vectorized form is $\vec\ten{R}(\ten{X}) = \t{\mat{B}}\vec(\ten{X} - \E\ten{X})$, where $\mat{B} = \bigotimes_{k = r}^{1}\mat{\beta}_k$ with $\Span(\mat{B}) = \Span(\{\mat{\eta}_{1y} - \E_{Y}\mat{\eta}_{1Y} : y\in\mathcal{S}_Y\})$, where $\mathcal{S}_Y$ denotes the sample space of the random variable $Y$. \cref{thm:sdr} obtains that the \emph{sufficient reduction} $\ten{R}(\ten{X})$ reduces $\ten{X}$ along each mode (dimension) linearly. The graph in \cref{fig:SDRvisual} is a visual representation of the sufficient reduction for a 3-dimensional tensor-valued predictor. 

\begin{figure}
    \centering
    \includegraphics{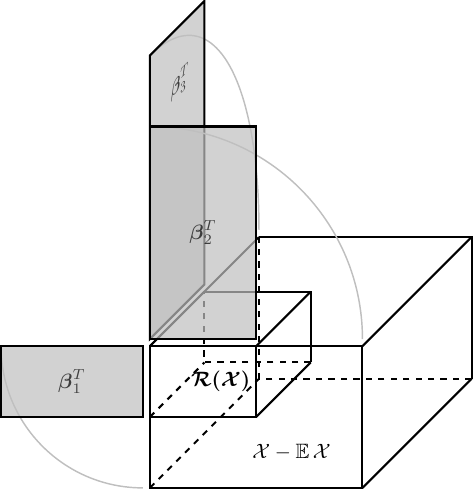}
    \caption{\label{fig:SDRvisual}Visual depiction of the sufficient reduction in \cref{thm:sdr}.}
\end{figure}

\begin{example}[Vector valued $\mat{x}$ ($r = 1$)]\label{ex:vector-dist}
    Given a vector-valued predictor $\mat{X}\in\mathbb{R}^p$, the tensor order is $r = 1$, and the collection of parameters is $\mat{\theta} = (\overline{\mat{\eta}}, \mat{\beta}, \mat{\Omega})$ with $\overline{\mat{\eta}}\in\mathbb{R}^p$, $\mat{\beta}\in\StiefelNonCompact{p}{q}$ and $\mat{\Omega}\in\SymPosDefMat{p}$ where $\mat{f}_y\in\mathbb{R}^q$ are known functions of the response $Y$. The conditional density of $\mat{X}\mid Y = y$ is given by
    \begin{align*}
        f_{\theta}(\mat{x}\mid Y = y)
            &= h(\mat{x})\exp(\langle\mat{x}, \mat{\eta}_{1y}(\mat{\theta})\rangle + \langle\vec(\mat{x}\circ\mat{x}), \mat{\eta}_2(\mat{\theta})\rangle - b(\mat{\eta}_y(\mat{\theta}))) \\
            &= h(\mat{x})\exp(\t{(\overline{\mat{\eta}} + \mat{\beta}\mat{f}_y)}\mat{x} + c\,\t{\mat{x}}\mat{\Omega}\,\mat{x} - b(\mat{\eta}_y(\mat{\theta}))).
    \end{align*}
    using $\mat{\eta}_{1y}(\mat{\theta}) = \overline{\mat{\eta}} + \mat{\beta}\mat{f}_y$ and $\mat{\eta}_2(\mat{\theta}) = c\,\mat{\Omega}$.
\end{example}

\begin{example}[Matrix-valued $\mat{X}$ ($r = 2$)]
    Assuming $\mat{X}$ is matrix-valued, $\mat{\theta} = (\overline{\mat{\eta}}, \mat{\beta}_1, \mat{\beta}_2, \mat{\Omega}_1, \mat{\Omega}_2)$, where the intercept term $\overline{\mat{\eta}}\in\mathbb{R}^{p_1\times p_2}$ is now matrix-valued. Now $\mat{F}_Y\in\mathbb{R}^{q_1\times q_2}$ is matrix-valued, and the conditional density of $\mat{X}\mid Y = y$ is
    \begin{align*}
        f_{\mat{\theta}}(\mat{X}\mid Y = y)
            &= h(\mat{X})\exp(\langle\vec{\mat{X}}, \mat{\eta}_{1y}(\mat{\theta})\rangle + \langle\vec(\mat{X}\circ\mat{X}), \mat{\eta}_2(\mat{\theta})\rangle - b(\mat{\eta}_y(\mat{\theta}))) \\
            &= h(\mat{X})\exp(\tr((\overline{\mat{\eta}} + \mat{\beta}_1\mat{F}_y\t{\mat{\beta}_2})\t{\mat{X}}) + c \tr(\mat{\Omega}_1\mat{X}\mat{\Omega}_2\t{\mat{X}}) - b(\mat{\eta}_y(\mat{\theta}))).
    \end{align*}
\end{example}

\section{Maximum Likelihood Estimation}\label{sec:ml-estimation}

Suppose $(\ten{X}_i, Y_i)$ are independently and identically distributed with joint cdf $F(\ten{X}, Y)$, for $i = 1, \ldots, n$. The empirical log-likelihood function of \eqref{eq:quadratic-exp-fam} under \eqref{eq:eta1} and \eqref{eq:eta2}, ignoring terms not depending on the parameters, is
\begin{equation}\label{eq:log-likelihood}
    l_n(\mat{\theta}) = \frac{1}{n}\sum_{i = 1}^n \biggl(\Bigl\langle\overline{\ten{\eta}} + \ten{F}_{y_i}\mlm_{k = 1}^{r}\mat{\beta}_k, \ten{X}_i \Bigr\rangle + c\Bigl\langle\ten{X}_i\mlm_{k = 1}^{r}\mat{\Omega}_k, \ten{X}_i \Bigr\rangle - b(\mat{\eta}_{y_i})\biggr).
\end{equation}
The maximum likelihood estimate of $\mat{\theta}_0$ is the solution to the optimization problem
\begin{equation}\label{eq:mle}
    \hat{\mat{\theta}}_n = \argmax_{\mat{\theta}\in\Theta}l_n(\mat{\theta})
\end{equation}
with $\hat{\mat{\theta}}_n = (\vec\widehat{\overline{\ten{\eta}}}, \vec\widehat{\mat{B}}, \vech\widetilde{\mat{\Omega}})$ where $\widehat{\mat{B}} = \bigkron_{k = r}^{1}\widehat{\mat{\beta}}_k$ and $\widehat{\mat{\Omega}} = \bigkron_{k = r}^{1}\widehat{\mat{\Omega}}_k$.

A straightforward general method for parameter estimation is \emph{gradient descent}. To apply gradient-based optimization, we compute the gradients of $l_n$ in \cref{thm:grad}.

\begin{theorem}\label{thm:grad}
    Suppose $(\ten{X}_i, y_i), i = 1, ..., n$, are i.i.d. with conditional log-likelihood of the form \eqref{eq:log-likelihood}, where $\mat{\theta}$ denotes the collection of all GMLM parameters $\overline{\ten{\eta}}$, ${\mat{B}} = \bigkron_{k = r}^{1}{\mat{\beta}}_k$ and ${\mat{\Omega}} = \bigkron_{k = r}^{1}{\mat{\Omega}}_k$ for $k = 1, ..., r$. Let $\ten{G}_2(\mat{\eta}_y)$ be a tensor of dimension $p_1, \ldots, p_r$ such that
    \begin{displaymath}
        \vec{\ten{G}_2(\mat{\eta}_y)} = \pinv{(\mat{T}_2\pinv{\mat{D}_p})}\mat{T}_2\pinv{\mat{D}_p}\vec{\ten{g}_2(\mat{\eta}_y)}.
    \end{displaymath}
    Then, the partial gradients with respect to $\overline{\ten{\eta}}, \mat{\beta}_1, \ldots, \mat{\beta}_r, \mat{\Omega}_1, \ldots, \mat{\Omega}_r$ are given by
    \begin{align*}
        \nabla_{\overline{\ten{\eta}}}l_n &= \vec\frac{1}{n}\sum_{i = 1}^n (\ten{X}_i - \ten{g}_1(\mat{\eta}_{y_i})), \\
        \nabla_{\mat{\beta}_j}l_n &= \vec\frac{1}{n}\sum_{i = 1}^n (\ten{X}_i - \ten{g}_1(\mat{\eta}_{y_i}))_{(j)}\t{\Big(\ten{F}_{y_i}\mlm_{k\in[r]\backslash j}\mat{\beta}_k\Big)_{(j)}}, \\
        \nabla_{\mat{\Omega}_j}l_n &= \vec\frac{c}{n}\sum_{i = 1}^n (\ten{X}_i\otimes\ten{X}_i - \K(\ten{G}_2(\mat{\eta}_{y_i})))\mlm_{k\in[r]\backslash j}\t{(\vec{\mat{\Omega}_k})},
    \end{align*}
    so that $\nabla l_n = (\nabla_{\overline{\ten{\eta}}}l_n, \nabla_{\mat{\beta}_1}l_n, \ldots, \nabla_{\mat{\beta}_r}l_n, \nabla_{\mat{\Omega}_1}l_n, \ldots, \nabla_{\mat{\Omega}_r}l_n)$. If $\mat{T}_2$ is the identity matrix $\mat{I}_{p(p + 1) / 2}$, then $\ten{G}_2(\mat{\eta}_y) = \ten{g}_2(\mat{\eta}_y)$.
\end{theorem}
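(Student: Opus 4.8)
The plan is to obtain each partial gradient by the chain rule, distinguishing the two ways in which the parameters enter $l_n$ in \eqref{eq:log-likelihood}: explicitly, through the bilinear term $\langle\overline{\ten{\eta}}+\ten{F}_{y_i}\mlm_{k=1}^{r}\mat{\beta}_k,\ten{X}_i\rangle$ and the quadratic term $c\langle\ten{X}_i\mlm_{k=1}^{r}\mat{\Omega}_k,\ten{X}_i\rangle$, and implicitly, through the cumulant $b(\mat{\eta}_{y_i})$. First I would record the exponential-family moment identity. Because $b$ is twice continuously differentiable and \eqref{eq:quadratic-exp-fam} integrates to one, differentiating the normalization under the integral sign gives $\nabla_{\mat{\eta}_y}b(\mat{\eta}_y)=\E[\mat{t}(\ten{X})\mid Y=y]$; in block form this reads $\nabla_{\mat{\eta}_{1y}}b=\vec\ten{g}_1(\mat{\eta}_y)$ and $\nabla_{\mat{\eta}_{2y}}b=\E[\mat{t}_2(\ten{X})\mid Y=y]$, via \eqref{eq:inv-link1}--\eqref{eq:inv-link2}. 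This supplies the ``subtract the conditional mean'' terms common to all three gradients.

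The gradient with respect to $\overline{\ten{\eta}}$ is then immediate: since $\mat{\eta}_{1y}=\vec(\overline{\ten{\eta}}+\ten{F}_y\mlm_{k=1}^r\mat{\beta}_k)$ depends on $\overline{\ten{\eta}}$ through the identity map, and $\overline{\ten{\eta}}$ is absent from $\mat{\eta}_2$, the chain rule yields $\nabla_{\overline{\ten{\eta}}}l_n=\vec\frac1n\sum_i(\ten{X}_i-\ten{g}_1(\mat{\eta}_{y_i}))$. For $\mat{\beta}_j$ I would isolate its dependence by unfolding along mode $j$: with $(\ten{F}_y\mlm_{k=1}^r\mat{\beta}_k)_{(j)}=\mat{\beta}_j(\ten{F}_y\mlm_{k\in[r]\backslash j}\mat{\beta}_k)_{(j)}$ and $\langle\ten{A},\ten{B}\rangle=\tr(\t{\ten{A}_{(j)}}\ten{B}_{(j)})$, the bilinear term equals $\tr(\t{\ten{X}_{i,(j)}}\mat{\beta}_j(\ten{F}_{y_i}\mlm_{k\in[r]\backslash j}\mat{\beta}_k)_{(j)})$ up to terms free of $\mat{\beta}_j$, whose $\mat{\beta}_j$-derivative is $\ten{X}_{i,(j)}\t{(\ten{F}_{y_i}\mlm_{k\in[r]\backslash j}\mat{\beta}_k)_{(j)}}$. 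Since $b$ sees $\mat{\beta}_j$ only through $\mat{\eta}_{1y_i}$, its contribution simply replaces $\ten{X}_{i,(j)}$ by $\ten{g}_1(\mat{\eta}_{y_i})_{(j)}$, giving the stated formula.

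The real bookkeeping is in $\nabla_{\mat{\Omega}_j}l_n$, which I would carry out in two stages. I would first differentiate with respect to the free symmetric matrix $\mat{\Omega}$. The explicit term contributes $c\,\vec(\ten{X}_i\circ\ten{X}_i)$, while $-b$ contributes $-c\,\pinv{(\mat{T}_2\pinv{\mat{D}_p})}\E[\mat{t}_2(\ten{X})\mid y_i]$ through the relation $\t{(\mat{T}_2\pinv{\mat{D}_p})}\mat{\eta}_2=\vec(c\mat{\Omega})$ in \eqref{eq:eta2}. Using $\mat{t}_2=\mat{T}_2\pinv{\mat{D}_p}\vec(\ten{X}\circ\ten{X})$ (since $\vech=\pinv{\mat{D}_p}\vec$ on symmetric matrices), the $b$-term becomes exactly $\pinv{(\mat{T}_2\pinv{\mat{D}_p})}\mat{T}_2\pinv{\mat{D}_p}\vec\ten{g}_2(\mat{\eta}_{y_i})=\vec\ten{G}_2(\mat{\eta}_{y_i})$ by the definition of $\ten{G}_2$; this is where the orthogonal projection onto the range of $\mat{T}_2\pinv{\mat{D}_p}$ enters, and it collapses to the identity when $\mat{T}_2=\mat{I}_{p(p+1)/2}$, recovering $\ten{G}_2=\ten{g}_2$. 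In the second stage I would propagate $c(\vec(\ten{X}_i\circ\ten{X}_i)-\vec\ten{G}_2(\mat{\eta}_{y_i}))$ through the Kronecker factorization $\mat{\Omega}=\bigkron_{k=r}^1\mat{\Omega}_k$, differentiating with respect to the single factor $\mat{\Omega}_j$ and reshaping into the $2r$-tensor layout $\ten{X}_i\otimes\ten{X}_i-\K(\ten{G}_2(\mat{\eta}_{y_i}))$ that matches the Kronecker ordering.

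I expect the main obstacle to be precisely this last step: tracking the permutation that relates $\vec(\ten{X}\circ\ten{X})$ to the tensor Kronecker product, and verifying that differentiating a Kronecker product in the single factor $\mat{\Omega}_j$ produces the mode-wise contraction $\mlm_{k\in[r]\backslash j}\t{(\vec\mat{\Omega}_k)}$, all while correctly accounting for the duplication-matrix symmetrization so that no spurious symmetrization term survives. The first-order pieces ($\overline{\ten{\eta}}$ and $\mat{\beta}_j$) are routine once the moment identity is in hand; the subtlety is confined to transporting the second-order score between the minimal coordinates carried by $\mat{T}_2\pinv{\mat{D}_p}$ and the non-minimal Kronecker-structured coordinates of $\mat{\Omega}$.
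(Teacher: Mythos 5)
Your plan follows essentially the same route as the paper's proof: the exponential-family moment identity $\nabla_{\mat{\eta}_y}b = \E[\mat{t}(\ten{X})\mid Y=y]$, the chain rule through $\mat{\eta}_{1y}$ for the $\overline{\ten{\eta}}$ and $\mat{\beta}_j$ gradients via the mode-$j$ unfolding identity, and the transport of the second-order score through $\t{(\mat{T}_2\pinv{\mat{D}_p})}\mat{\eta}_2 = c\vec\mat{\Omega}$ followed by differentiation of the Kronecker factor $\mat{\Omega}_j$; the only organizational difference is that you differentiate with respect to the full $\mat{\Omega}$ first and then chain through the Kronecker structure, whereas the paper computes $\d\mat{\eta}_2(\mat{\Omega}_j)$ directly, which amounts to the same computation.

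One small correction to the final claim: when $\mat{T}_2 = \mat{I}_{p(p+1)/2}$ the projector $\pinv{(\mat{T}_2\pinv{\mat{D}_p})}\mat{T}_2\pinv{\mat{D}_p}$ does \emph{not} collapse to the identity matrix; it becomes the symmetrizer $\mat{N}_p = \mat{D}_p\pinv{\mat{D}_p}$, which is a rank-$p(p+1)/2$ projection onto the vectorized symmetric matrices. The conclusion $\ten{G}_2(\mat{\eta}_y) = \ten{g}_2(\mat{\eta}_y)$ still holds, but only because $\vec\ten{g}_2(\mat{\eta}_y) = \vec\E[(\vec\ten{X})\t{(\vec\ten{X})}\mid Y=y]$ is the vectorization of a symmetric matrix and $\mat{N}_p$ fixes such vectors; this step should be made explicit rather than asserted as a collapse to the identity.
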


Although the general case of any GMLM model can be fitted via gradient descent using \cref{thm:grad}, this may be very inefficient. In \cref{thm:grad}, $\mat{T}_2$ can be used to introduce flexible second moment structures. For example, it allows modeling effects differently for predictor components, as described in \cref{sec:ising_estimation} after Eqn. \eqref{eq:ising-cond-prob}. In the remainder, we focus on $\mat{T}_2$ being the identity matrix, which simplifies the estimation algorithm and the speed of the numerical calculation in the case of multi-linear normal predictors. An iterative cyclic updating scheme that converges fast, is stable, and does not require hyperparameters is derived in \cref{sec:tensor-normal-estimation}, as discussed later. On the other hand, the Ising model does not allow such a scheme. There we need to use a gradient-based method, which is the subject of \cref{sec:ising_estimation}.

\subsection{Multi-Linear Normal}\label{sec:tensor-normal-estimation}

The \emph{multi-linear normal} is the extension of the matrix normal to tensor-valued random variables and a member of the quadratic exponential family \eqref{eq:quadratic-exp-fam} under \eqref{eq:eta2}. \cite{Dawid1981} and \cite{Arnold1981} introduced the term matrix normal and, in particular, \cite{Arnold1981} provided several theoretical results, such as its density, moments and conditional distributions of its components. The matrix normal distribution is a bilinear normal distribution; a distribution of a two-way (two-component) array, each component representing a vector of observations \cite[]{OhlsonEtAl2013}. \cite{KolloVonRosen2005,Hoff2011,OhlsonEtAl2013} presented the extension of the bilinear to the multi-linear normal distribution, using a parallel extension of bilinear matrices to multi-linear tensors \cite[]{Comon2009}.

The defining feature of the matrix normal distribution, and its multi-linear extension, is the Kronecker product structure of its covariance. This formulation, where the vectorized variables are normal with multiway covariance structure modeled as a Kronecker product of matrices of much lower dimension, aims to overcome the significant modeling and computational challenges arising from the high computational complexity of manipulating tensor representations \cite[see, e.g.,][]{HillarLim2013,WangEtAl2022}.

Suppose the conditional distribution of tensor $\ten{X}$ given $Y$ is multi-linear normal with mean $\ten{\mu}_y$ and covariance $\mat{\Sigma} = \bigkron_{k = r}^{1}\mat{\Sigma}_k$. We assume the distribution is non-degenerate which means that the covariances $\mat{\Sigma}_k$ are symmetric positive definite matrices. Its density is given by
\begin{displaymath}
    f_{\mat{\theta}}(\ten{X}\mid Y = y) = (2\pi)^{-p / 2}\prod_{k = 1}^{r}\det(\mat{\Sigma}_k)^{-p / 2 p_k}\exp\left( -\frac{1}{2}\left\langle\ten{X} - \ten{\mu}_y, (\ten{X} - \ten{\mu}_y)\mlm_{k = 1}^{r}\mat{\Sigma}_k^{-1} \right\rangle \right).
\end{displaymath}
For the sake of simplicity and w.l.o.g., we assume $\ten{X}$ has 0 marginal expectation; i.e., $\E\ten{X} = 0$. Rewriting this in the quadratic exponential family form \eqref{eq:quadratic-exp-fam} determines the scaling constant $c = -1/2$. The relation to the GMLM parameters $\overline{\ten{\eta}}, \mat{\beta}_k$ and $\mat{\Omega}_k$, for $k = 1, \ldots, r$ is
\begin{equation}\label{eq:tnormal_cond_params}
    \ten{\mu}_y = \ten{F}_y\mlm_{k = 1}^{r}\mat{\Omega}_k^{-1}\mat{\beta}_k, \qquad
    \mat{\Omega}_k = \mat{\Sigma}_k^{-1},
\end{equation}
where we used that $\overline{\ten{\eta}} = 0$ due to $0 = \E\ten{X} = \E\E[\ten{X}\mid Y] = \E\ten{\mu}_Y$ in combination with $\E\ten{F}_Y = 0$. Additionally, the positive definiteness of the $\mat{\Sigma}_k$'s renders all the $\mat{\Omega}_k$'s symmetric positive definite. This leads to another simplification since then $\mat{T}_2$ in \eqref{eq:t-stat} equals the identity, which also means that the gradients of the log-likelihood $l_n$ in \cref{thm:grad} are simpler to compute. We obtain $\ten{g}_1(\mat{\eta}_y) = \E[\ten{X}\mid Y = y] = \ten{\mu}_y$, and 
\begin{align*}
    \ten{G}_2(\mat{\eta}_y) = \ten{g}_2(\mat{\eta}_y) &= \E[\ten{X}\circ\ten{X}\mid Y = y] \equiv \bigkron_{k = r}^1\mat{\Sigma}_k + (\vec{\ten{\mu}}_y)\t{(\vec{\ten{\mu}}_y)}.
\end{align*}
In practice, we assume we have a random sample of $n$ observations $(\ten{X}_i, \ten{F}_{y_i})$ from the joint distribution. We start the estimation process by demeaning the data. Then, only the reduction matrices $\mat{\beta}_k$ and the scatter matrices $\mat{\Omega}_k$ need to be estimated. To solve the optimization problem \eqref{eq:mle}, with $\overline{\ten{\eta}} = 0$ we initialize the parameters using a simple heuristic approach. First, we compute moment based mode-wise marginal covariance estimates $\widehat{\mat{\Sigma}}_k(\ten{X})= \sum_{i = 1}^{n} (\ten{X}_i)_{(k)}\t{(\ten{X}_i)_{(k)}}/n$ and $\widehat{\mat{\Sigma}}_k(\ten{F}_Y)= \sum_{i = 1}^{n} (\ten{F}_{y_i})_{(k)}\t{(\ten{F}_{y_i})_{(k)}}/n$. Then, for every mode $k = 1, \ldots, r$, we compute the eigenvectors $\mat{v}_j(\widehat{\mat{\Sigma}}_k(\ten{X}))$, $\mat{v}_j(\widehat{\mat{\Sigma}}_k(\ten{F}_Y))$ that correspond to the leading eigenvalues $\lambda_j(\widehat{\mat{\Sigma}}_k(\ten{X}))$, $\lambda_j(\widehat{\mat{\Sigma}}_k(\ten{X}))$ of the marginal covariance estimates, for $j = 1, \ldots, q_k$. We set 
\begin{align*}
    \mat{U}_k &= (\mat{v}_1(\widehat{\mat{\Sigma}}_1(\ten{X})), \ldots, \mat{v}_{q_k}(\widehat{\mat{\Sigma}}_{q_k}(\ten{X}))), \, \mat{V}_k = (\mat{v}_1(\widehat{\mat{\Sigma}}_1(\ten{F}_Y), \ldots, \mat{v}_{q_k}(\widehat{\mat{\Sigma}}_{q_k}(\ten{F}_Y)) \\
    \mat{D}_k &= \diag(\mat{v}_1(\widehat{\mat{\Sigma}}_1(\ten{X}))\mat{v}_1(\widehat{\mat{\Sigma}}_1(\ten{F}_{Y})), \ldots, \mat{v}_{q_k}(\widehat{\mat{\Sigma}}_{q_k}(\ten{X}))\mat{v}_{q_k}(\widehat{\mat{\Sigma}}_k(\ten{F}_{Y}))).
\end{align*}
The initial value of $\mat{\beta}_k$ is $\hat{\mat{\beta}}_k^{(0)} = \mat{U}_k\sqrt{\mat{D}_k}\t{\mat{V}_k},$ and the initial value of $\mat{\Omega}_k$ is set to the identity $\mat{\Omega}_k^{(0)} = \mat{I}_{p_k}$, for $k=1,\ldots,r$.
Given $\hat{\mat{\beta}}_1, \ldots, \hat{\mat{\beta}}_r, \hat{\mat{\Omega}}_1, \ldots, \hat{\mat{\Omega}}_r$, we take the gradient $\nabla_{\mat{\beta}_j}l_n$ of the tensor normal log-likelihood $l_n$ in \eqref{eq:log-likelihood} applying \cref{thm:grad} and keep all other parameters except $\mat{\beta}_j$ fixed. Then, $\nabla_{\mat{\beta}_j}l_n = 0$ has the closed form solution
\begin{equation}\label{eq:tensor_normal_beta_solution}
    \t{\mat{\beta}_j} = \biggl(
        \sum_{i = 1}^{n}
        \Bigl( \ten{F}_{y_i}\mlm_{k \neq j}\hat{\mat{\Omega}}_k^{-1}\hat{\mat{\beta}}_k \Bigr)_{(j)}
        \t{\Bigl( \ten{F}_{y_i}\mlm_{k \neq j}\hat{\mat{\beta}}_k \Bigr)_{(j)}}
    \biggr)^{-1}
    \biggl(
        \sum_{i = 1}^{n}
        \Bigl( \ten{F}_{y_i}\mlm_{k \neq j}\hat{\mat{\beta}}_k \Bigr)_{(j)}
        \t{(\ten{X}_{i})_{(j)}}
    \biggr)
        \hat{\mat{\Omega}}_j.
\end{equation}
Equating the partial gradient of the $j$th scatter matrix $\mat{\Omega}_j$ in \cref{thm:grad} to zero ( $\nabla_{\mat{\Omega}_j}l_n = 0$) gives a quadratic matrix equation due to the dependence of $\ten{\mu}_y$ on $\mat{\Omega}_j$. In practice though, it is faster, more stable, and equally accurate to use mode-wise covariance estimates via the residuals
\begin{displaymath}
    \hat{\ten{R}}_i = \ten{X}_i - \hat{\ten{\mu}}_{y_i} = \ten{X}_i - \ten{F}_{y_i}\mlm_{k = 1}^{r}\hat{\mat{\Omega}}_k^{-1}\hat{\mat{\beta}}_k.
\end{displaymath} 
The estimates are computed via $\tilde{\mat{\Sigma}}_j = \sum_{i = 1}^n (\hat{\ten{R}}_i)_{(j)} \t{(\hat{\ten{R}}_i)_{(j)}},$ where $\tilde{s}\tilde{\mat{\Sigma}}_j = \hat{\mat{\Omega}}_j^{-1}$. To decide on the scaling factor $\tilde{s}$ we use that the mean squared error has to be equal to the trace of the covariance estimate,
\begin{displaymath}
    \frac{1}{n}\sum_{i = 1}^n \langle \hat{\ten{R}}_i, \hat{\ten{R}}_i \rangle = \tr\bigkron_{k = r}^{1}\hat{\mat{\Omega}}_k^{-1} = \prod_{k = 1}^{r}\tr{\hat{\mat{\Omega}}_k^{-1}} = \tilde{s}^r\prod_{k = 1}^{r}\tr{\tilde{\mat{\Sigma}}_k},
\end{displaymath}
so that
\begin{displaymath}
    \tilde{s} = \biggl(\Bigl(\prod_{k = 1}^{r}\tr{\tilde{\mat{\Sigma}}_k}\Bigr)^{-1}\frac{1}{n}\sum_{i = 1}^n \langle \hat{\ten{R}}_i, \hat{\ten{R}}_i \rangle\biggr)^{1 / r}
\end{displaymath}
resulting in the estimates $\hat{\mat{\Omega}}_j = (\tilde{s}\tilde{\mat{\Sigma}}_j)^{-1}$. Estimation is performed by updating the estimates $\hat{\mat{\beta}}_j$ via \eqref{eq:tensor_normal_beta_solution} for $j = 1, \ldots, r$, and then recompute the $\hat{\mat{\Omega}}_j$ estimates simultaneously keeping the $\hat{\mat{\beta}}_j$'s fixed. This procedure is repeated until convergence.

A technical detail for numerical stability is to ensure that the scaled values $\tilde{s}\tilde{\mat{\Sigma}}_j$, assumed to be symmetric and positive definite, are well conditioned. Thus, we estimate the condition number of $\tilde{s}\tilde{\mat{\Sigma}}_j$ before computing the inverse. In case of ill-conditioning, we use the regularized $\hat{\mat{\Omega}}_j = (\tilde{s}\tilde{\mat{\Sigma}}_j + 0.2 \lambda_{1}(\tilde{s}\tilde{\mat{\Sigma}}_j)\mat{I}_{p_j})^{-1}$ instead, where $\lambda_{1}(\tilde{s}\tilde{\mat{\Sigma}}_j)$ is the first (maximum) eigenvalue. Experiments showed that this regularization is usually only required in the first few iterations.

Furthermore, if the parameter space follows a more general setting as in \cref{thm:param-manifold}, updating may produce estimates outside the parameter space. A simple and efficient method is to project every updated estimate onto the corresponding manifold.

A common algorithm to calculate the MLE of a Kronecker product is block-coordinate descent, proposed independently by \cite{MardiaGoodall1993} and \cite{Dutilleul1999}. It was later called ``flip-flop'' algorithm by \cite{LuZimmerman2005} for the computation of the maximum likelihood estimators of the components of a separable covariance matrix. \cite{ManceurDutilleul2013} extended the ``flip-flop'' algorithm for the computation of the MLE of the separable covariance structure of a 3-way and 4-way normal distribution and obtained a lower bound for the sample size required for its existence. The same issue was also studied by \cite{DrtonEtAl2020} in the case of a two-way array (matrix). Our algorithm uses a
similar ``flip-flop'' approach by iteratively updating the $\mat{\beta}_k$'s and $\mat{\Omega}_k$'s.

\subsection{Multi-Linear Ising Model}\label{sec:ising_estimation}

The Ising\footnote{Also known as the \emph{Lenz-Ising} model as the physical assumptions of the model where developed by both Lenz and Ising \cite[]{Niss2005} where Ising gave a closed form solution for the 1D lattice \cite[]{Ising1925}.} model \cite[]{Lenz1920,Ising1925,Niss2005} is a mathematical model originating in statistical physics to study ferromagnetism in a thermodynamic setting. It describes magnetic dipoles (atomic ``spins'' with values $\pm 1$) under an external magnetic field (first moments) while allowing two-way interactions (second moments) between direct neighbors on a lattice, a discrete grid. The Ising model is a member of the discrete quadratic exponential family \cite[]{CoxWermuth1994,JohnsonEtAl1997} for multivariate binary outcomes where the interaction structure (non-zero correlations) is determined by the lattice. The $p$-dimensional Ising model is a discrete probability distribution on the set of $p$-dimensional binary vectors $\mat{x}\in\{0, 1\}^p$ with probability mass function (pmf) given by
\begin{displaymath}
    P_{\mat{\gamma}}(\mat{x}) = p_0(\mat{\gamma})\exp(\t{\vech(\mat{x}\t{\mat{x}})}\mat{\gamma}).
\end{displaymath}
The scaling factor $p_0(\mat{\gamma})\in\mathbb{R}_{+}$ ensures that $P_{\mat{\gamma}}$ is a pmf. It is equal to the probability of the zero event $P(X = \mat{0}) = p_0(\mat{\gamma})$. More commonly known as the \emph{partition function}, the reciprocal of $p_0$, is given by
\begin{equation}\label{eq:ising-partition-function}
    p_0(\mat{\gamma})^{-1} = \sum_{\mat{x}\in\{0, 1\}^p}\exp(\t{\vech(\mat{x}\t{\mat{x}})}\mat{\gamma}).
\end{equation}
Abusing notation, we let $\mat{\gamma}_{j l}$ denote the element of $\mat{\gamma}$ corresponding to $\mat{x}_j\mat{x}_l$ in $\vech(\mat{x}\t{\mat{x}})$.\footnote{Specifically, the element $\mat{\gamma}_{j l}$ of $\mat{\gamma}$ is a short hand for $\mat{\gamma}_{\iota(j, l)}$ with $\iota(j, l) = (\min(j, l) - 1)(2 p - \min(j, l)) / 2 + \max(j, l)$ mapping the matrix row index $j$ and column index $l$ to the corresponding half vectorization indices $\iota(j, l)$.} The ``diagonal'' parameter $\mat{\gamma}_{j j}$ expresses the conditional log odds of $X_j = 1\mid X_{-j} = \mat{0}$, where the negative subscript in $X_{-j}$ describes the $p - 1$ dimensional vector $X$ with the $j$th element removed. The off diagonal entries $\mat{\gamma}_{j l}$, $j\neq l$, are equal to the conditional log odds of simultaneous occurrence $X_j = 1, X_l = 1 \mid X_{-j, -l} = \mat{0}$. More precisely, the conditional probabilities $\pi_j(\mat{\gamma}) = P_{\mat{\gamma}}(X_j = 1\mid X_{-j} = \mat{0})$ and $\pi_{j, l}(\mat{\gamma}) = P_{\mat{\gamma}}(X_j = 1, X_l = 1\mid X_{-j, -l} = \mat{0})$ are related to the natural parameters via
\begin{equation}\label{eq:ising-two-way-log-odds}
    \mat{\gamma}_{j j} = \log\frac{\pi_j(\mat{\gamma})}{1 - \pi_j(\mat{\gamma})}, \qquad
    \mat{\gamma}_{j l} = \log\frac{1 - \pi_j(\mat{\gamma})\pi_l(\mat{\gamma})}{\pi_j(\mat{\gamma})\pi_l(\mat{\gamma})}\frac{\pi_{j l}(\mat{\gamma})}{1 - \pi_{j l}(\mat{\gamma})}.
\end{equation}

Conditional Ising models, incorporating the information of covariates $Y$ into the model, were considered by \cite{ChengEtAl2014,BuraEtAl2022}. The direct way is to parameterize $\mat{\gamma} = \mat{\gamma}_y$ by the covariate $Y = y$ to model a conditional distribution $P_{\mat{\gamma}_y}(\mat{x}\mid Y = y)$.

We extend the conditional pmf by allowing the binary variables to be tensor-valued; that is, we set $\mat{x} = \vec{\ten{X}}$, with dimension $p = \prod_{k = 1}^{r}p_k$ for $\ten{X}\in\{ 0, 1 \}^{p_1\times\cdots\times p_r}$. The tensor structure of $\ten{X}$ is accommodated by assuming Kronecker product constraints to the parameter vector $\mat{\gamma}_y$ in a similar fashion as in the multi-linear normal model. This means that we compare the pmf $P_{\mat{\gamma}_y}(\vec{\ten{X}} | Y = y)$ with the quadratic exponential family \eqref{eq:quadratic-exp-fam} with the natural parameters modeled by \eqref{eq:eta1} and \eqref{eq:eta2}. The diagonal of $(\vec{\ten{X}})\t{(\vec{\ten{X}})}$ is equal to $\vec{\ten{X}}$, which results in the GMLM being expressed as
\begin{align}
    P_{\mat{\gamma}_y}(\ten{X} \mid Y = y)
        &= p_0(\mat{\gamma}_y)\exp(\t{\vech((\vec{\ten{X}})\t{(\vec{\ten{X}})})}\mat{\gamma}_y) \label{eq:ising-cond-prob} \\
        &= p_0(\mat{\gamma}_y)\exp\Bigl(\Bigl\langle \ten{X}, \ten{F}_y\mlm_{k = 1}^{r}\mat{\beta}_k \Bigr\rangle + \Bigl\langle\ten{X}\mlm_{k = 1}^{r}\mat{\Omega}_k, \ten{X}\Bigr\rangle\Bigr) \nonumber
\end{align}
where we set $\overline{\ten{\eta}} = 0$ and $\mat{T}_2$ to the identity. This imposes an additional constraint on the model, the reason is that the diagonal elements of $\mat{\Omega} = \bigkron_{k = r}^{1}\mat{\Omega}_k$ take the role of $\overline{\ten{\eta}}$, although not fully. Having the diagonal of $\mat{\Omega}$ and $\overline{\ten{\eta}}$ accounting for the self-interaction effects might lead to interference in the optimization routine. Another approach would be to use the $\mat{T}_2$ matrix to set the corresponding diagonal elements of $\mat{\Omega}$ to zero and let $\overline{\ten{\eta}}$ handle the self-interaction effect. All of these approaches, namely setting $\overline{\ten{\eta}} = 0$, keeping $\overline{\ten{\eta}}$ or using $\mat{T}_2$, are theoretically solid and compatible with \cref{thm:grad,thm:param-manifold,thm:asymptotic-normality-gmlm}, assuming all axis dimensions $p_k$ are non-degenerate, that is $p_k > 1$ for all $k = 1, \ldots, r$. Regardless, under our modeling choice, the relation between the natural parameters $\mat{\gamma}_y$ of the conditional Ising model and the GMLM parameters $\mat{\beta}_k$ and $\mat{\Omega}_k$ is
\begin{equation}\label{eq:ising-natural-params}
    \mat{\gamma}_y
        = \t{\mat{D}_p}\vec(\mat{\Omega} + \diag(\mat{B}\vec{\ten{F}_y}))
        = \t{\mat{D}_p}\vec\Biggl(\bigkron_{k = r}^{1}\mat{\Omega}_k + \diag\biggl(\vec\Bigl(\ten{F}_y\mlm_{k = 1}^{r}\mat{\beta}_k\Bigr)\biggr)\Biggr).
\end{equation}
In contrast to the multi-linear normal GMLM, the matrices $\mat{\Omega}_k$ are only required to be symmetric. More specifically, we require $\mat{\Omega}_k$, for $k = 1, \ldots, r$, to be elements of an embedded submanifold of $\SymMat{p_k}$ (see \cref{sec:kron-manifolds,sec:matrix-manifolds}). The mode-wise reduction matrices $\mat{\beta}_k$ are elements of an embedded submanifold of $\mathbb{R}^{p_k\times q_k}$. Common choices are listed in \cref{sec:matrix-manifolds}.

To solve the optimization problem \eqref{eq:mle}, given a data set $(\ten{X}_i, y_i)$, $i = 1, \ldots, n$, we use a variation of gradient descent.

\subsubsection{Initial Values}

The first step is to get reasonable starting values. Experiments showed that a good starting value of $\mat{\beta}_k$ is to use the multi-linear normal estimates from \cref{sec:tensor-normal-estimation} for $k = 1, \ldots, r$, considering $\ten{X}_i$ as continuous. For initial values of $\mat{\Omega}_k$, a different approach is required. Setting everything to the uninformative initial value results in $\mat{\Omega}_k = \mat{0}$ as this corresponds to the conditional log odds to be $1:1$ for every component and pairwise interaction. This is not possible, since $\mat{0}$ is a stationary point of the log-likelihood, as can be directly observed by considering the partial gradients of the log-likelihood in \cref{thm:grad}. Instead, we use a crude heuristic that threads every mode separately and ignores any relation to the covariates. It is computationally cheap and better than any of the alternatives we considered. For every $k = 1, \ldots, r$, let the $k$th mode second moment estimate be
\begin{equation}\label{eq:ising-mode-moments}
    \hat{\mat{M}}_{2(k)} = \frac{p_k}{n p}\sum_{i = 1}^n (\ten{X}_i)_{(k)}\t{(\ten{X}_i)_{(k)}}
\end{equation}
which contains the $k$th mode first moment estimate in its diagonal $\hat{\mat{M}}_{1(k)} = \diag\hat{\mat{M}}_{2(k)}$. Considering every column of the matricized observation $(\ten{X}_i)_{(k)}$ as a $p_k$ dimensional observation. The number of those artificially generated observations is $n \prod_{j\neq k}p_j$. Let $Z_k$ denote the random variable those artificial observations are realization of. Then, we can interpret the elements $(\hat{\mat{M}}_{1(k)})_{j}$ as the estimates of the marginal probability of the $j$th element of $Z_k$ being $1$, $P((Z_k)_j = 1)$. Similarly, for $l \neq j$, $(\hat{\mat{M}}_{2(k)})_{j l}$ estimates the marginal probability of two-way interactions, $P((Z_k)_j = 1, (Z_k)_l = 1)$. Now, we set the diagonal elements of $\mat{\Omega}_k$ to zero. For the off diagonal elements of $\mat{\Omega}_k$, we equate the conditional probabilities $P((Z_k)_j = 1 \mid (Z_k)_{-j} = \mat{0})$ and $P((Z_k)_j = 1, (Z_k)_l = 1\mid (Z_k)_{-j, -l} = \mat{0})$ with the marginal probability estimates $(\hat{\mat{M}}_{1(k)})_{j}$ and $(\hat{\mat{M}}_{2(k)})_{j l}$, respectively. Applying \eqref{eq:ising-two-way-log-odds} gives the initial component-wise estimates $\hat{\mat{\Omega}}_k^{(0)}$, 
\begin{equation}\label{eq:ising-init-Omegas}
    (\hat{\mat{\Omega}}_k^{(0)})_{j j} = 0,
    \qquad
    (\hat{\mat{\Omega}}_k^{(0)})_{j l} = \log\frac{1 - (\hat{\mat{M}}_{1(k)})_{j}(\hat{\mat{M}}_{1(k)})_{l}}{(\hat{\mat{M}}_{1(k)})_{j}(\hat{\mat{M}}_{1(k)})_{l}}\frac{(\hat{\mat{M}}_{2(k)})_{j l}}{1 - (\hat{\mat{M}}_{2(k)})_{j l}}, \, j \neq l.
\end{equation}

\subsubsection{Gradient Optimization}

Given initial values, the gradients derived in \cref{thm:grad} can be evaluated for the Ising model. The first step therefore is to determine the values of the inverse link components $\ten{g}_1(\mat{\gamma}_y) = \E[\ten{X} \mid Y = y]$ and $\ten{G}_2(\mat{\gamma}_y) = \ten{g}_2(\mat{\gamma}_y) = \E[\ten{X}\circ\ten{X} \mid Y = y]$. An immediate simplification is that the first moment is part of the second moment. Its values are determined via $\vec(\E[\ten{X} \mid Y = y]) = \diag(\E[\ten{X}\circ\ten{X} \mid Y = y]_{(1, \ldots, r)})$; i.e., only the second moment needs to be computed, or estimated (see \cref{sec:ising-bigger-dim}) in the case of slightly bigger $p$. For the Ising model, the conditional second moment with parameters $\mat{\gamma}_y$ is given by the matricized relation
\begin{equation}\label{eq:ising-m2}
    \ten{g}_2(\ten{\gamma}_y)_{(1, \ldots, r)} = \E\left[(\vec{\ten{X}})\t{(\vec{\ten{X}})}\mid Y = y\right] = p_0(\mat{\gamma}_y)\sum_{\mat{x}\in\{0, 1\}^{p}}\mat{x}\t{\mat{x}}\exp(\t{\vech(\mat{x}\t{\mat{x}})}\mat{\gamma}_y).
\end{equation}
The natural parameter $\mat{\gamma}_y$ is evaluated via \eqref{eq:ising-natural-params} enabling us to compute the partial gradients of the log-likelihood $l_n$ \eqref{eq:log-likelihood} for the Ising model by \cref{thm:grad} for the GMLM parameters $\mat{\beta}_k$ and $\mat{\Omega}_k$, $k = 1, \ldots, r$, at the current iterate $\mat{\theta}^{(I)} = (\mat{\beta}_1^{(I)}, \ldots, \mat{\beta}_r^{(I)}, \mat{\Omega}_1^{(I)}, \ldots, \mat{\Omega}_r^{(I)})$. Using classic gradient ascent for maximizing the log-likelihood, we have to specify a learning rate $\lambda\in\mathbb{R}_{+}$, usually a value close to $10^{-3}$. The update rule is
\begin{displaymath}
    \mat{\theta}^{(I + 1)} = \mat{\theta}^{(I)} + \lambda\nabla_{\mat{\theta}} l_n(\mat{\theta})\bigr|_{\mat{\theta} = \mat{\theta}^{(I)}},
\end{displaymath}
which is iterated till convergence. In practice, iteration is performed until either a maximum number of iterations is exhausted and/or some break condition is satisfied. A proper choice of the learning rate is needed as a large learning rate $\lambda$ may cause instability, while a very low learning rate requires an enormous amount of iterations. Generically, there are two approaches to avoid the need to determine a proper learning rate. First, \emph{line search methods} determine an appropriate step size for every iteration. This works well if the evaluation of the object function (the log-likelihood) is cheap. This is not the case in our setting, see \cref{sec:ising-bigger-dim}. The second approach is an \emph{adaptive learning rate}, where one tracks specific statistics while optimizing and dynamically adapting the learning rate via well-tested heuristics using the gathered knowledge from past iterations. We opted to use an adaptive learning rate approach, which not only removes the need to determine an appropriate learning rate but also accelerates learning.

Our method of choice is \emph{root mean squared propagation} (RMSprop) \cite[]{Hinton2012}. This is a well-known method in machine learning for training neural networks. It is a variation of gradient descent with a per scalar parameter adaptive learning rate. It tracks a moving average of the element-wise squared gradient $\mat{g}_2^{(I)}$, which is then used to scale (element-wise) the gradient in the update rule (see \cite{Hinton2012} and \cite{GoodfellowEtAl2016} among others). The update rule using RMSprop for maximization\footnote{Instead of the more common minimization, therefore $+$ in the update of $\mat{\theta}$.} is
\begin{align*}
    \mat{g}_2^{(I + 1)} &= \nu \mat{g}_2^{(I)} + (1 - \nu)\nabla l_n(\mat{\theta}^{(I)})\odot\nabla l_n(\mat{\theta}^{(I)}), \\
    \mat{\theta}^{(I + 1)} &= \mat{\theta}^{(I)} + \frac{\lambda}{\sqrt{\mat{g}_2^{(I + 1)}} + \epsilon}\odot\nabla l_n(\mat{\theta}^{(I)}).
\end{align*}
The parameters $\nu = 0.9$, $\lambda = 10^{-3}$ and $\epsilon\approx 1.49\cdot 10^{-8}$ are fixed. The initial value of $\mat{g}_2^{(0)} = \mat{0}$, where the symbol $\odot$ denotes the Hadamard product, or element-wise multiplication. The division and square root operations are performed element-wise as well. According to our experiments, RMSprop requires iterations in the range of $50$ till $1000$ till convergence while gradient ascent with a learning rate of $10^{-3}$ is in the range of $1000$ till $10000$.

\subsubsection{Small Data Sets}\label{sec:ising-small-data-sets}

In the case of a finite number of observations, specifically in data sets with a small number of observations $n$, the situation where one component is always either zero or one can occur. It is also possible to observe two exclusive components. In practice, this situation of a ``degenerate'' data set should be protected against. Working with parameters on a log scale, gives estimates of $\pm\infty$, which is outside the parameter space and breaks our optimization algorithm.

The first situation where this needs to be addressed is in \eqref{eq:ising-init-Omegas}, where we set initial estimates for $\mat{\Omega}_k$. To avoid division by zero as well as evaluating the log of zero, we adapt \eqref{eq:ising-mode-moments}, the mode-wise moment estimates $\hat{\mat{M}}_{2(k)}$. A simple method is to replace the ``degenerate'' components, that are entries with value $0$ or $1$, with the smallest positive estimate of exactly one occurrence $p_k / n p$, or all but one occurrence $1 - p_k / n p$, respectively.

The same problem is present in gradient optimization. Therefore, before starting the optimization, we detect degenerate combinations. We compute upper and lower bounds for the ``degenerate'' element in the Kronecker product $\hat{\mat{\Omega}} = \bigkron_{k = r}^{1}\hat{\mat{\Omega}}_k$. After every gradient update, we check if any of the ``degenerate'' elements fall outside of the bounds. In that case, we adjust all the elements of the Kronecker component estimates $\hat{\mat{\Omega}}_k$, corresponding to the ``degenerate'' element of their Kronecker product, to fall inside the precomputed bounds. While doing so, we try to alter every component as little as possible to ensure that the non-degenerate elements in $\hat{\mat{\Omega}}$, affected by this change due to its Kronecker structure, are altered as little as possible. The exact details are technically cumbersome while providing little insight.

\subsubsection{Slightly Bigger Dimensions}\label{sec:ising-bigger-dim}

A big challenge for the Ising model is its high computational complexity as it involves summing over all binary vectors of length $p = \prod_{k = 1}^r p_k$ in the partition function \eqref{eq:ising-partition-function}. Exact computation of the partition function requires summing all $2^p$ binary vectors. For small dimensions, say $p\approx 10$, this is easily computed. Increasing the dimension beyond $20$ becomes extremely expensive and impossible for a dimension bigger than $30$. Trying to avoid the evaluation of the log-likelihood and only computing its partial gradients via \cref{thm:grad} does not resolve the issue. The gradients require the inverse link, that is the second moment \eqref{eq:ising-m2}, which still involves summing $2^p$ terms if the scaling factor $p_0$ is dropped. Basically, with our model, this means that the optimization of the Ising model using exactly computed gradients is impossible for moderately sized problems.

When $p = \prod_{i = 1}^r p_i > 20$, we use a Monte-Carlo method to estimate the second moment \eqref{eq:ising-m2}, required to compute the partial gradients of the log-likelihood. Specifically, we use a Gibbs-Sampler to sample from the conditional distribution and approximate the second moment in an importance sampling framework. This can be implemented quite efficiently and the estimation accuracy for the second moment is evaluated experimentally. Simultaneously, we use the same approach to estimate the partition function. This, though, is inaccurate and may only be used to get a rough idea of the log-likelihood. Regardless, for our method, we only need the gradient for optimization where appropriate break conditions, not based on the likelihood, lead to a working method for MLE estimation.

\begin{figure}
    \centering
    \includegraphics{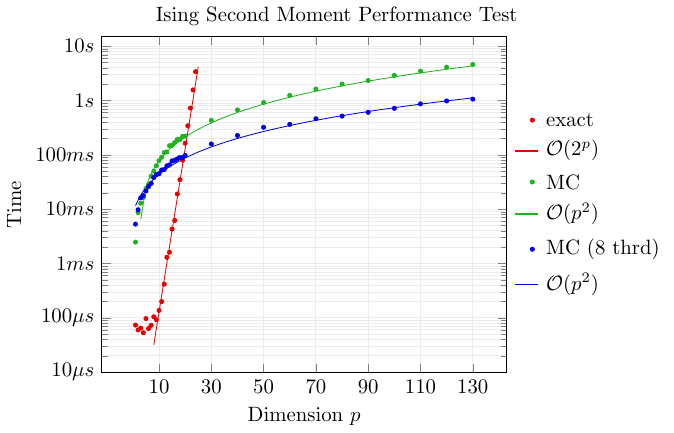}
    \caption{\label{fig:ising-m2-perft}Performance test for computing/estimating the second moment of the Ising model of dimension $p$ using either the exact method or a Monte-Carlo (MC) simulation.}
\end{figure}

\section{Manifolds}\label{sec:manifolds}

\cref{thm:sdr} finds the sufficient reduction for the regression of $Y$ on $\ten{X}$ in the population. Any estimation of the sufficient reduction requires application of some optimality criterion. As we operate within the framework of the exponential family, we opted for maximum likelihood estimation (MLE). For the unconstrained problem, where the parameters are simply $\mat{B}$ and $\mat{\Omega}$ in \eqref{eq:eta1-manifold}, maximizing the likelihood of $\ten{X} \mid Y$ is straightforward and yields well-defined MLEs of both parameters. Our setting, though, requires the constrained optimization of the $\ten{X} \mid Y$ likelihood subject to $\mat{B} = \bigotimes_{j = r}^{1}\mat{\beta}_j$ and $\mat{\Omega}=\bigkron_{j = r}^{1}\mat{\Omega}_j$. \Cref{thm:kron-manifolds,thm:param-manifold} provide the setting for which the MLE of the constrained parameter $\mat{\theta}$ is well-defined, which in turn leads to the derivation of its asymptotic normality.

The main problem in obtaining asymptotic results for the MLE of the constrained parameter $\mat{\theta} = (\overline{\ten{\eta}}, \vec\mat{B}, \vech\mat{\Omega})$ stems from the nature of the constraint. We assumed that $\mat{B} = \bigkron_{k = r}^{1}\mat{\beta}_k$, where the parameter $\mat{B}$ is identifiable. This means that different values of $\mat{B}$ lead to different densities $f_{\mat{\theta}}(\ten{X}\mid Y = y)$, a basic property needed to ensure consistency of parameter estimates, which in turn is needed for asymptotic normality. On the other hand, the components $\mat{\beta}_j$, $j = 1, \ldots, r$, are \emph{not} identifiable, which is a direct consequence of the equality $\mat{\beta}_2\otimes\mat{\beta}_1 = (c\mat{\beta}_2)\otimes (c^{-1}\mat{\beta}_1)$ for every $c\neq 0$. This is the reason we considered $\Theta$ as a constrained parameter space instead of parameterizing the densities of $\ten{X}\mid Y$ with $\mat{\beta}_1, \ldots, \mat{\beta}_r$. The same is true for $\mat{\Omega} = \bigkron_{k = r}^{1}\mat{\Omega}_k$.

In addition to identifiable parameters, the asymptotic normality obtained in \cref{thm:asymptotic-normality-gmlm} requires differentiation. Therefore, the space itself must admit defining differentiation, which is usually a vector space. This is too strong an assumption for our purposes. To weaken the vector space assumption, we consider \emph{smooth manifolds}. The latter are spaces that look like Euclidean spaces locally and allow the notion of differentiation. The more general \emph{topological} manifolds are too weak for differentiation. To make matters worse, a smooth manifold only allows for first derivatives. Without going into details, the solution is a \emph{Riemannian manifold}. Similar to an abstract \emph{smooth manifold}, Riemannian manifolds are detached from our usual intuition as well as complicated to handle in an already complicated setting. This is where an \emph{embedded (sub)manifold} comes to the rescue. Simply speaking, an embedded manifold is a manifold that is a subset of a manifold from which it inherits its properties. If a manifold is embedded in a Euclidean space, almost all the complications of abstract manifold theory simplify drastically. Moreover, since an Euclidean space is itself a Riemannian manifold, we inherit the means for higher derivatives. Finally, a smooth embedded submanifold structure for the parameter space maintains consistency with existing approaches and results for parameter sets with linear subspace structure. These reasons justify the constraint that the parameter space $\Theta$ be a \emph{smooth embedded submanifold} in an open subset $\Xi$ of a Euclidean space.

Now, we define a \emph{smooth manifold} embedded in $\mathbb{R}^p$ without detours to the more general theory. See, for example, \cite{Lee2012,Lee2018,AbsilEtAl2007,Kaltenbaeck2021} among others.
\begin{definition}[Manifolds]\label{def:manifold}
A set $\manifold{A}\subseteq\mathbb{R}^p$ is an \emph{embedded smooth manifold} of dimension $d$ if for every $\mat{x}\in\manifold{A}$ there exists a smooth\footnote{Here \emph{smooth} means infinitely differentiable or $C^{\infty}$.} bi-continuous map $\varphi:U\cap\manifold{A}\to V$, called a \emph{chart}, with $\mat{x}\in U\subseteq\mathbb{R}^p$ open and $V\subseteq\mathbb{R}^d$ open.
\end{definition}

We also need the concept of a \emph{tangent space} to formulate asymptotic normality in a way that is independent of a particular coordinate representation. Intuitively, the tangent space at a point $\mat{x}\in\manifold{A}$ of the manifold $\manifold{A}$ is the hyperspace of all velocity vectors $\t{\nabla\gamma(0)}$ of any curve $\gamma:(-1, 1)\to\manifold{A}$ passing through $\mat{x} = \gamma(0)$, see \cref{fig:torus}. Locally, at $\mat{x} = \gamma(0)$ with a chart $\varphi$ we can write $\gamma(t) = \varphi^{-1}(\varphi(\gamma(t)))$ that gives $\Span\t{\nabla\gamma(0)} \subseteq \Span\t{\nabla\varphi^{-1}(\varphi(\mat{x}))}$. Taking the union over all smooth curves through $\mat{x}$ gives equality. The following definition leverages the simplified setup of smooth manifolds in Euclidean space.

\begin{definition}[Tangent Space]\label{def:tangent-space}
    Let $\manifold{A}\subseteq\mathbb{R}^p$ be an embedded smooth manifold and $\mat{x}\in\manifold{A}$. The \emph{tangent space} at $\mat{x}$ of $\manifold{A}$ is defined as
    \begin{displaymath}
        T_{\mat{x}}\manifold{A} := \Span\t{\nabla\varphi^{-1}(\varphi(\mat{x}))}
    \end{displaymath}
    for any chart $\varphi$ with $\mat{x}$ in the pre-image of $\varphi$.
\end{definition}

\Cref{def:tangent-space} is consistent since it can be shown that two different charts at the same point have identical span.

\begin{figure}
    \centering
    \includegraphics[width = 0.5\textwidth]{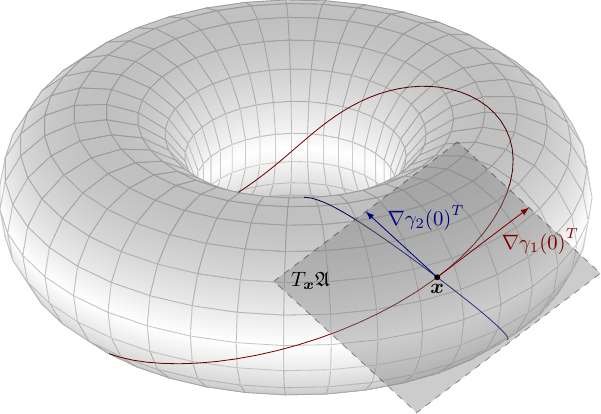}
    \caption{\label{fig:torus}Visualization of the tangent space $T_{\mat{x}}\manifold{A}$ at $\mat{x}$ of the torus $\manifold{A}$. The torus $\manifold{A}$ is a 2-dimensional embedded manifold in $\mathbb{R}^3$. The tangent space $T_{\mat{x}}\manifold{A}\subset\mathbb{R}^3$ is a 2-dimensional hyperplane visualized with its origin $\mat{0}$ shifted to $\mat{x}$. Two curves $\gamma_1, \gamma_2$ with $\mat{x} = \gamma_1(0) = \gamma_2(0)$ are drawn on the torus. The curve velocity vectors $\t{\nabla\gamma_1(0)}$ and $\t{\nabla\gamma_2(0)}$ are drawn as tangent vectors with root $\mat{x}$.}
\end{figure}

\subsection{Kronecker Product Manifolds}\label{sec:kron-manifolds}

As a basis to ensure that the constrained parameter space $\Theta$ is a manifold, which is a requirement of \cref{thm:param-manifold}, we need \cref{thm:kron-manifolds}. Therefore, we need the notion of a \emph{spherical} set, which is a set $\manifold{A}$, on which the Frobenius norm is constant. That is, $\|\,.\,\|_F:\manifold{A}\to\mathbb{R}$ is constant. Forthermore, we call a scale invariant set $\manifold{A}$ a \emph{cone}, that is $\manifold{A} = \{ c \mat{A} : \mat{A}\in\manifold{A} \}$ for all $c > 0$.

\begin{theorem}[Kronecker Product Manifolds]\label{thm:kron-manifolds}
    Let $\manifold{A}\subseteq\mathbb{R}^{p_1\times q_1}\backslash\{\mat{0}\}, \manifold{B}\subseteq\mathbb{R}^{p_2\times q_2}\backslash\{\mat{0}\}$ be smooth embedded submanifolds. Assume one of the following conditions holds.
    \begin{itemize}
        \item[-] ``sphere condition'':
            At least one of $\manifold{A}$ or $\manifold{B}$ is \emph{spherical} and let $d = \dim\manifold{A} + \dim\manifold{B}$.
        \item[-] ``cone condition'':
            Both $\manifold{A}$ and $\manifold{B}$ are \emph{cones} and let $d = \dim\manifold{A} + \dim\manifold{B} - 1$.
    \end{itemize}
    Then, $\{ \mat{A}\otimes \mat{B} : \mat{A}\in\manifold{A}, \mat{B}\in\manifold{B} \}\subset\mathbb{R}^{p_1 p_2\times q_1 q_2}$ is a smooth embedded $d$-manifold.
\end{theorem}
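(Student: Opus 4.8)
\emph{Setup.} I would view the set in question as the image of the smooth map $\phi:\manifold{A}\times\manifold{B}\to\mathbb{R}^{p_1 p_2\times q_1 q_2}$, $\phi(\mat{A},\mat{B}) = \mat{A}\otimes\mat{B}$, noting that $\manifold{A}\times\manifold{B}$ is itself a smooth embedded submanifold of $\mathbb{R}^{p_1 q_1}\times\mathbb{R}^{p_2 q_2}$ of dimension $a+b$, where $a=\dim\manifold{A}$ and $b=\dim\manifold{B}$, and that $\phi$ is the restriction of the bilinear (hence smooth) Kronecker map. The plan is to (i) show $\phi$ has constant rank $d$ by computing the kernel of its differential, which is where the sphere/cone dichotomy enters, and then (ii) upgrade ``constant rank'' to ``\emph{embedded} $d$-submanifold'' by ruling out self-intersections. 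A useful preliminary is that the linear rearrangement $\mat{A}\otimes\mat{B}\mapsto(\vec\mat{A})\t{(\vec\mat{B})}$ is a linear isomorphism onto $\mathbb{R}^{p_1 q_1\times p_2 q_2}$, so up to this isomorphism $\phi$ is the outer-product map; this converts ``Kronecker factorisation'' into the familiar ``rank-one factorisation'' and lets me import standard facts about leading singular vectors.

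\emph{Differential and constant rank.} Fixing $\mat{M}_0=\mat{A}_0\otimes\mat{B}_0$ in the image and local parametrisations $\psi_{\manifold A},\psi_{\manifold B}$ near $\mat A_0,\mat B_0$, the differential of $\Phi=\psi_{\manifold A}\otimes\psi_{\manifold B}$ sends $(\mat U,\mat W)\in T_{\mat A_0}\manifold A\times T_{\mat B_0}\manifold B$ to $\mat U\otimes\mat B_0+\mat A_0\otimes\mat W$. A pair lies in the kernel iff $\mat U\otimes\mat B_0=-\mat A_0\otimes\mat W$; since $\mat A_0,\mat B_0\neq\mat 0$, uniqueness of the rank-one factorisation forces this common value to be $c\,(\mat A_0\otimes\mat B_0)$ for some scalar $c$, i.e. $\mat U=c\,\mat A_0$ and $\mat W=-c\,\mat B_0$. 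Whether a nonzero such element actually sits in $T_{\mat A_0}\manifold A\times T_{\mat B_0}\manifold B$ is decided by whether $\mat A_0\in T_{\mat A_0}\manifold A$. Under the cone condition the curve $t\mapsto e^{t}\mat A_0$ stays in $\manifold A$, so $\mat A_0\in T_{\mat A_0}\manifold A$ (and likewise $\mat B_0\in T_{\mat B_0}\manifold B$); the kernel is then exactly the line $\{(c\mat A_0,-c\mat B_0)\}$, and the rank is $a+b-1=d$. Under the sphere condition (say $\manifold A$ spherical), differentiating $\|\,\cdot\,\|_F^2\equiv\mathrm{const}$ along curves in $\manifold A$ gives $\langle\mat U,\mat A_0\rangle=0$ for all $\mat U\in T_{\mat A_0}\manifold A$, so $c\mat A_0\in T_{\mat A_0}\manifold A$ forces $c=0$; the kernel is trivial, $\Phi$ is an immersion, and the rank is $a+b=d$. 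Since this computation is pointwise, $\phi$ has constant rank $d$ throughout.

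\emph{From constant rank to an embedded submanifold.} In the sphere case $\Phi$ is an immersion, hence a local embedding; on a small enough neighbourhood $W$ of $(\mat A_0,\mat B_0)$ it is also injective, because $\mat A_1\otimes\mat B_1=\mat A_2\otimes\mat B_2$ forces $\mat A_2=c\mat A_1,\ \mat B_2=c^{-1}\mat B_1$ with $|c|=1$ (constant norm on the spherical factor), and $c=-1$ is impossible for $\mat A_1,\mat A_2$ both close to $\mat A_0\neq\mat 0$. Thus $\Phi(W)$ is an embedded $d$-dimensional patch through $\mat M_0$. To check that $\manifold S=\phi(\manifold A\times\manifold B)$ coincides with this patch near $\mat M_0$, i.e. that no further sheet passes through, I would invoke continuity of the rank-one factorisation: via the rearrangement, $\mat A\otimes\mat B\to\mat M_0$ makes the leading singular vectors of $(\vec\mat A)\t{(\vec\mat B)}$ converge, so after normalising $(\mat A,\mat B)\to(\pm\mat A_0,\pm\mat B_0)$, and the sign involution $\sigma(\mat A,\mat B)=(-\mat A,-\mat B)$ satisfies $\phi\circ\sigma=\phi$, so a second admissible preimage produces exactly the same image patch. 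Hence $\manifold S\cap O=\Phi(W)\cap O$ for a small ball $O$ about $\mat M_0$, a slice, which by \cref{def:manifold} exhibits $\manifold S$ as an embedded $d$-manifold. For the cone case I would reduce to the sphere case: near $\mat A_0$ the set $\manifold A'=\manifold A\cap\{\|\,\cdot\,\|_F=\|\mat A_0\|_F\}$ is a regular level set of the norm restricted to $\manifold A$ (the functional $\mat U\mapsto\langle\mat A_0,\mat U\rangle$ is nonzero since $\mat A_0\in T_{\mat A_0}\manifold A$), hence a spherical embedded submanifold of dimension $a-1$; because both factors are cones, every nearby $\mat A\otimes\mat B$ rescales as $(c\mat A')\otimes\mat B=\mat A'\otimes(c\mat B)$ with $\mat A'\in\manifold A'$ and $c\mat B\in\manifold B$, so locally $\manifold S$ equals the image of $\manifold A'\times\manifold B$, which the sphere case already shows is embedded of dimension $(a-1)+b=d$.

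\emph{Main obstacle.} The rank computation is the conceptual core but is routine once the rank-one/Kronecker uniqueness lemma is available; the delicate part is the passage from ``immersion / constant rank'' to ``embedded submanifold'', where I must rule out extra sheets of $\manifold S$ accumulating at $\mat M_0$. This is precisely where the non-identifiability $\mat A\otimes\mat B=(c\mat A)\otimes(c^{-1}\mat B)$ has to be controlled, and it is the constant-norm (sphere) or rescaling (cone) hypotheses, together with continuity of the rank-one factorisation and the involution $\sigma$, that make this control possible.
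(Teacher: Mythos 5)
Your route is genuinely different from the paper's. The paper proves the spherical case by covering the spherical factor with the hemispheres $H_i^{\pm}$, exhibiting a closed-form smooth inverse of $(\mat{A},\mat{B})\mapsto\mat{A}\otimes\mat{B}$ on $\manifold{A}\times H_i^{\pm}$ via the rearrangement $\mat{R}(\mat{A}\otimes\mat{B})=(\vec{\mat{A}})\t{(\vec{\mat{B}})}$, and assembling the resulting charts into an atlas; the cone case is then reduced to the spherical one by slicing with a norm level set, exactly as you do. You instead compute the kernel of the differential of the Kronecker map on $T_{\mat{A}_0}\manifold{A}\times T_{\mat{B}_0}\manifold{B}$ and argue by immersion/constant rank. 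That kernel computation is correct and is a genuinely nice complement to the paper's argument: it isolates exactly why the dimension drops by one for cones ($\mat{A}_0\in T_{\mat{A}_0}\manifold{A}$) and not for spheres ($\mat{A}_0\perp T_{\mat{A}_0}\manifold{A}$).

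The gap is in the step you yourself flag as delicate: ruling out a second sheet of $\manifold{K}=\phi(\manifold{A}\times\manifold{B})$ through $\mat{M}_0=\mat{A}_0\otimes\mat{B}_0$. Your singular-vector continuity argument correctly shows that every preimage of a point of $\manifold{K}$ near $\mat{M}_0$ lies near $(\mat{A}_0,\mat{B}_0)$ or near $(-\mat{A}_0,-\mat{B}_0)$. But the conclusion that a second admissible preimage ``produces exactly the same image patch'' because $\phi\circ\sigma=\phi$ does not follow: $\sigma$ carries a neighborhood $W'$ of $(-\mat{A}_0,-\mat{B}_0)$ in $\manifold{A}\times\manifold{B}$ into $(-\manifold{A})\times(-\manifold{B})$, not into $\manifold{A}\times\manifold{B}$, and neither factor is assumed invariant under negation. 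If $-\mat{A}_0\in\manifold{A}$ and $-\mat{B}_0\in\manifold{B}$ but the local pieces of the manifolds there are not the negatives of the pieces at $(\mat{A}_0,\mat{B}_0)$, then $\phi(W)$ and $\phi(W')$ are two $d$-dimensional patches through $\mat{M}_0$ with different tangent spaces, and their union is not locally Euclidean. Concretely, let $\manifold{A}\subset\mathbb{R}^{2\times 1}$ consist of two short arcs of the unit circle around $(1,0)$ and $(-1,0)$ (spherical), and let $\manifold{B}\subset\mathbb{R}^{2\times 1}$ consist of a short segment through $(1,0)$ with direction $(0,1)$ and one through $(-1,0)$ with direction $(1,1)$: the two sheets of $\manifold{K}$ through $(1,0)\otimes(1,0)$ have transverse tangent planes. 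So this step cannot be closed as written; it needs an extra hypothesis such as connectedness of one factor, $(-\manifold{A})\cap\manifold{A}=\emptyset$, or local negation-symmetry of both factors.

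In fairness, the paper's own proof passes over the identical point: it asserts that the chart images $f_{H}(U_j)$ are open in $\manifold{K}$ ``since $f_H$ is a diffeomorphism,'' which is precisely the claim that no second sheet accumulates at $\mat{M}_0$ and is not implied by $f_H$ being a diffeomorphism onto its image. For the building blocks actually used in \cref{tab:matrix-manifolds} the issue is vacuous --- each is connected, disjoint from its negative, or globally invariant under $\mat{X}\mapsto-\mat{X}$ so that the two sheets coincide --- but neither your argument nor the paper's establishes the theorem at the stated level of generality.
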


\begin{theorem}[Parameter Manifold]\label{thm:param-manifold}
    Let
    \begin{displaymath}
        \manifold{K}_{\mat{B}} = \Bigl\{ \bigkron_{k = r}^{1}\mat{\beta}_k : \mat{\beta}_k\in\manifold{B}_k \Bigr\}
        \quad\text{and}\quad
        \manifold{K}_{\mat{\Omega}} = \Bigl\{ \bigkron_{k = r}^{1}\mat{\Omega}_k : \mat{\Omega}_k\in\manifold{O}_k \Bigr\}
    \end{displaymath}
    where $\manifold{B}_k\subseteq\mathbb{R}^{p_k\times q_k}\backslash\{\mat{0}\}$ and $\manifold{O}_k\subseteq\mathbb{R}^{p_k\times p_k}\backslash\{\mat{0}\}$ are smooth embedded manifolds which are either spheres or cones, for $k = 1, ..., r$. Also, let
    \begin{displaymath}
        \manifold{CK}_{\mat{\Omega}} = \{ \vech{\mat{\Omega}} : \mat{\Omega}\in\manifold{K}_{\mat{\Omega}} \land \pinv{(\mat{T}_2\pinv{\mat{D}_p})}\mat{T}_2\pinv{\mat{D}_p}\vec{\mat{\Omega}} = \vec{\mat{\Omega}} \}.
    \end{displaymath}
Then, the constrained parameter space $\Theta = \mathbb{R}^p \times \manifold{K}_{\mat{B}}\times\manifold{CK}_{\mat{\Omega}}\subset\mathbb{R}^{p(p + 2 q + 3) / 2}$ is a smooth embedded manifold.
\end{theorem}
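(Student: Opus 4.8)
The plan is to establish that $\Theta = \mathbb{R}^p \times \manifold{K}_{\mat{B}} \times \manifold{CK}_{\mat{\Omega}}$ is a smooth embedded manifold by showing each of the three Cartesian factors is a smooth embedded manifold and then invoking the standard fact that a finite product of embedded submanifolds (embedded in the corresponding product of ambient Euclidean spaces) is again an embedded submanifold, with dimension equal to the sum of the dimensions. The first factor $\mathbb{R}^p$ is trivially a smooth embedded manifold of dimension $p$, so the work concentrates on the two Kronecker-structured factors.

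For the factor $\manifold{K}_{\mat{B}} = \{\bigkron_{k=r}^{1}\mat{\beta}_k : \mat{\beta}_k \in \manifold{B}_k\}$, I would argue by induction on $r$ using \cref{thm:kron-manifolds}. The base case $r=1$ is immediate since $\manifold{K}_{\mat{B}} = \manifold{B}_1$ is assumed to be a smooth embedded manifold. For the inductive step, write $\bigkron_{k=r}^{1}\mat{\beta}_k = \mat{\beta}_r \otimes \bigl(\bigkron_{k=r-1}^{1}\mat{\beta}_k\bigr)$ and apply \cref{thm:kron-manifolds} to $\manifold{A} = \manifold{B}_r$ and $\manifold{B} = \manifold{K}_{\mat{B}}^{(r-1)} := \{\bigkron_{k=r-1}^{1}\mat{\beta}_k\}$, which is a smooth embedded manifold by the inductive hypothesis. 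The key bookkeeping point is that \cref{thm:kron-manifolds} requires each of $\manifold{A}, \manifold{B}$ to be either spherical or a cone, with the dimension count differing between the two cases. I would verify that the relevant property is preserved under the iterated Kronecker construction: a Kronecker product of cones is again a cone (scaling one factor scales the product), and similarly the spherical/cone hypothesis on the $\manifold{B}_k$ propagates to $\manifold{K}_{\mat{B}}^{(r-1)}$ so that \cref{thm:kron-manifolds} applies at each stage. The same induction, verbatim, handles $\manifold{K}_{\mat{\Omega}} = \{\bigkron_{k=r}^{1}\mat{\Omega}_k\}$.

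For the factor $\manifold{CK}_{\mat{\Omega}}$, the additional ingredient is the intersection with the linear constraint $\pinv{(\mat{T}_2\pinv{\mat{D}_p})}\mat{T}_2\pinv{\mat{D}_p}\vec{\mat{\Omega}} = \vec{\mat{\Omega}}$, together with passage from $\vec{\mat{\Omega}}$ to $\vech{\mat{\Omega}}$. The matrix $\mat{P} := \pinv{(\mat{T}_2\pinv{\mat{D}_p})}\mat{T}_2\pinv{\mat{D}_p}$ is an (oblique or orthogonal) projection, so its fixed-point set $\{\vec{\mat{\Omega}} : \mat{P}\vec{\mat{\Omega}} = \vec{\mat{\Omega}}\}$ is a linear subspace $L$. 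The plan is to view $\manifold{CK}_{\mat{\Omega}}$ as (the image under the linear vectorization-to-half-vectorization isomorphism on symmetric matrices of) the intersection $\manifold{K}_{\mat{\Omega}} \cap L$. To conclude this intersection is a smooth embedded manifold, I would show the intersection is transversal, or, more directly, that $L$ contains the tangent spaces arising from $\manifold{K}_{\mat{\Omega}}$, so that intersecting with $L$ amounts to restricting an ambient chart to a linear slice — because each $\mat{\Omega}_k$ already lies in $\SymMat{p_k}$, the Kronecker products in $\manifold{K}_{\mat{\Omega}}$ are symmetric and automatically satisfy (or are consistent with) the projection constraint, making the intersection locally a manifold of the expected dimension. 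Since $\vech$ is a linear isomorphism on symmetric matrices, applying it preserves the embedded-manifold structure.

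The main obstacle I anticipate is the induction step for $\manifold{K}_{\mat{\Omega}}$, specifically checking the hypotheses of \cref{thm:kron-manifolds} are stable under iteration and reconciling the two dimension-counting regimes (sphere versus cone) so that the final dimension of $\Theta$ matches $p(p+2q+3)/2$. A secondary subtlety is the interaction of the linear constraint defining $\manifold{CK}_{\mat{\Omega}}$ with the Kronecker manifold $\manifold{K}_{\mat{\Omega}}$: I must ensure the constraint does not cut the manifold non-smoothly (e.g., tangentially along a positive-dimensional locus), which is why I would frame it as a transversality or tangent-space-inclusion argument rather than a naive ``intersection of manifolds is a manifold'' claim, since the latter is false in general without a transversality condition.
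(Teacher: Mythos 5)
Your first two steps --- the induction on $r$ via \cref{thm:kron-manifolds} to show that $\manifold{K}_{\mat{B}}$ and $\manifold{K}_{\mat{\Omega}}$ are embedded submanifolds (checking that the cone/sphere property propagates through the iterated Kronecker product), and the final appeal to the fact that a finite product of embedded submanifolds is embedded --- are exactly the paper's argument. One small correction on a side worry you raise: $p(p+2q+3)/2$ is the dimension of the ambient Euclidean space in which $\Theta$ sits (it is $p + pq + p(p+1)/2$, the length of $(\vec{\overline{\ten{\eta}}}, \vec{\mat{B}}, \vech{\mat{\Omega}})$), not the dimension of $\Theta$ itself, so no reconciliation of the sphere/cone dimension counts with that number is required.

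The genuine gap is in your treatment of $\manifold{CK}_{\mat{\Omega}}$. Your argument rests on the claim that, because each $\mat{\Omega}_k$ is symmetric, the Kronecker products in $\manifold{K}_{\mat{\Omega}}$ ``automatically satisfy'' the constraint $\mat{P}\vec{\mat{\Omega}} = \vec{\mat{\Omega}}$ with $\mat{P} = \pinv{(\mat{T}_2\pinv{\mat{D}_p})}\mat{T}_2\pinv{\mat{D}_p}$, so that intersecting with the fixed-point subspace $L$ does nothing. That is true only when $\mat{P}$ is the symmetrizer matrix $\mat{N}_p = \mat{D}_p\pinv{\mat{D}_p}$, i.e.\ essentially when $\mat{T}_2$ is the identity. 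The entire purpose of a nontrivial $\mat{T}_2$ (see the discussion around \eqref{eq:ising-cond-prob}, e.g.\ forcing diagonal entries of $\mat{\Omega}$ to zero in the Ising model) is that $\mathrm{range}(\mat{P})$ is then a \emph{proper} subspace of $\vec(\SymMat{p})$, so the constraint genuinely cuts $\manifold{K}_{\mat{\Omega}}$ and your tangent-space-inclusion route collapses. The transversality fallback is also not available in general: there is no reason for $T_{\mat{\Omega}}\manifold{K}_{\mat{\Omega}} + L$ to span the ambient space, so the regular level set theorem does not apply. The paper instead defines $F:\manifold{K}_{\mat{\Omega}}\to\mathbb{R}^{p^2}$, $F(\mat{\Omega}) = (\mat{I}_{p^2}-\mat{P})\vec{\mat{\Omega}}$, and invokes the \emph{Constant-Rank} Level Set Theorem (Theorem~5.12 in Lee), which requires only that the differential of $F$ along $\manifold{K}_{\mat{\Omega}}$ have constant rank --- a strictly weaker hypothesis than transversality --- to conclude that $F^{-1}(\{\mat{0}\}) = \manifold{CK}_{\mat{\Omega}}$ is an embedded submanifold of $\manifold{K}_{\mat{\Omega}}$. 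To repair your proof you would need to either adopt this constant-rank argument or verify a condition of that type directly; the symmetry of the $\mat{\Omega}_k$ alone does not do it.
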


\subsection{Matrix Manifolds}\label{sec:matrix-manifolds}

A powerful feature of \cref{thm:param-manifold} is the modeling flexibility it provides. For example, we can perform low-rank regression. Or, we may constrain two-way interactions between direct axis neighbors by using band matrices for the $\mat{\Omega}_k$'s, among others.

This flexibility derives from many different matrix manifolds that can be used as building blocks $\manifold{B}_k$ and $\manifold{O}_k$ of the parameter space $\Theta$ in \cref{thm:param-manifold}. A list of possible choices, among others, is given in \cref{tab:matrix-manifolds}. As long as parameters in $\Theta$ are a valid parameterization of a density (or probability mass function) of \eqref{eq:quadratic-exp-fam} subject to \eqref{eq:eta1-manifold} and \eqref{eq:eta2-manifold}, one may choose any of the manifolds listed in \cref{tab:matrix-manifolds} which are either cones or spherical. We also included an example which is neither a sphere nor a cone. They may also be valid building blocks but require more work as they are not directly leading to a parameter manifold by \cref{thm:param-manifold}. In case one can show the resulting parameter space $\Theta$ is an embedded manifold, the asymptotic theory of \cref{sec:asymtotics} is applicable.

\begin{table}
    \centering
    \begin{tabular}{l l c c c}
        \hline
        Symbol & Description & C & S & Dimension\\
        \hline
        $\mathbb{R}^{p\times q}$ & All matrices of dimension $p\times q$ &
            \checkmark & \xmark     & $p q$ \\
            $\mathbb{R}_{*}^{p\times q}$ & Full rank $p\times q$ matrices &
            \checkmark & \xmark     & $p q$                  \\
        $\Stiefel{p}{q}$ & \emph{Stiefel Manifold}, $\{ \mat{U}\in\mathbb{R}^{p\times q} : \t{\mat{U}}\mat{U} = \mat{I}_q \}$ for $q\leq p$ &
            \xmark     & \checkmark & $p q - q (q + 1) / 2$  \\
        $\mathcal{S}^{p - 1}$ & Unit sphere in $\mathbb{R}^p$, special case $\Stiefel{p}{1}$ &
            \xmark     & \checkmark & $p - 1$                \\
        $\OrthoGrp{p}$ & Orthogonal Group, special case $\Stiefel{p}{p}$ &
            \xmark     & \checkmark & $p (p - 1) / 2$        \\
        $\SpecialOrthoGrp{p}$ & Special Orthogonal Group $\{ \mat{U}\in U(p) : \det{\mat{U}} = 1 \}$ &
            \xmark     & \checkmark & $p (p - 1) / 2$        \\
        $\mathbb{R}_{r}^{p\times q}$ & Matrices of known rank $r > 0$, generalizes $\StiefelNonCompact{p}{q}$ &
            \checkmark & \xmark     & $r(p + q - r)$         \\
        & Symmetric matrice &
            \checkmark & \xmark     & $p (p + 1) / 2$    \\
        $\SymPosDefMat{p}$ & Symmetric Positive Definite matrices &
            \checkmark & \xmark     & $p (p + 1) / 2$    \\
        & Scaled Identity $\{ a\mat{I}_p : a\in\mathbb{R}_{+} \}$ &
            \checkmark & \xmark     & $1$    \\
        & Symmetric $r$-band matrices (includes diagonal) &
            \checkmark & \xmark     & $(2 p - r) (r + 1) / 2$    \\
        & Auto correlation $\{ \mat{A}\in\mathbb{R}^{p\times p} : \mat{A}_{i j} = \rho^{|i - j|}, \rho\in(0, 1) \}$ &
            \xmark     & \xmark     & $1$ \\
        \hline
    \end{tabular}
    \caption{\label{tab:matrix-manifolds}Examples of embedded matrix manifolds. ``Symbol'' a (more or less) common notation for the matrix manifold, if at all. ``C'' stands for \emph{cone}, meaning it is scale invariant. ``S'' means \emph{spherical}, that is, constant Frobenius norm.}
\end{table}

\begin{remark}
    The \emph{Grassmann Manifold} of $q$ dimensional subspaces in $\mathbb{R}^p$ is not listed in \cref{tab:matrix-manifolds} since it is not embedded in $\mathbb{R}^{p \times q}$.
\end{remark}

\section{Statistical Properties}\label{sec:statprop}

\subsection{Asymptotics}\label{sec:asymtotics}

Let $Z$ be a random variable with density $f_{\mat{\theta_0}}\in\{ f_{\mat{\theta}}: \mat{\theta}\in\Theta \}$, where $\Theta$ is a subset of a Euclidean space. We want to estimate the parameter that indexes the pdf, ${\mat{\theta}}_0$, using $n$ i.i.d. (independent and identically distributed) copies of $Z$. We assume a known, real-valued and measurable function $z\mapsto m_{\mat{\theta}}(z)$ for every $\mat{\theta}\in\Theta$ and that ${\mat{\theta}}_0$ is the unique maximizer of the map $\mat{\theta}\mapsto M(\mat{\theta}) = \E m_{\mat{\theta}}(Z)$. For the estimation we maximize the empirical version
\begin{align}\label{eq:Mn}
    M_n(\mat{\theta}) &= \frac{1}{n}\sum_{i = 1}^n m_{\mat{\theta}}(Z_i).
\end{align}
An \emph{M-estimator} $\hat{\mat{\theta}}_n = \hat{\mat{\theta}}_n(Z_1, \ldots, Z_n)$ is a maximizer for the objective function $M_n$ over the parameter space $\Theta$ defined as
\begin{displaymath}
    \hat{\mat{\theta}}_n = \argmax_{\mat{\theta}\in\Theta} M_n(\mat{\theta}).
\end{displaymath}
It is not necessary to have a perfect maximizer, as long as the objective has finite supremum, it is sufficient to take an \emph{almost maximizer} $\hat{\mat{\theta}}_n$ as defined in the following;

\begin{definition}[weak and strong M-estimators]
    An estimator $\hat{\mat{\theta}}_n$ for the objective function $M_n$ in \eqref{eq:Mn} with $\sup_{\mat{\theta}\in\Theta}M_n(\mat{\theta}) < \infty$ such that
    \begin{displaymath}
        M_n(\hat{\mat{\theta}}_n) \geq \sup_{\mat{\theta}\in\Theta}M_n(\mat{\theta}) - o_P(n^{-1})
    \end{displaymath}
    is called a \emph{strong M-estimator} over $\Theta$. Replacing $o_P(n^{-1})$ by $o_P(1)$ gives a \emph{weak M-estimator}.
\end{definition}

\begin{theorem}[Asymptotic Normality]\label{thm:asymptotic-normality-gmlm}
    Assume $Z = (\ten{X}, Y)$ satisfies model \eqref{eq:quadratic-exp-fam} subject to \eqref{eq:eta1-manifold} and \eqref{eq:eta2-manifold} with true constrained parameter $\mat{\theta}_0 = (\overline{\eta}_0, \mat{B}_0, \mat{\Omega}_0)\in\Theta$, where $\Theta$ is defined in \cref{thm:param-manifold}. Under the regularity \crefrange{cond:differentiable-and-convex}{cond:finite-sup-on-compacta} in \cref{app:proofs}, there exists a strong M-estimator sequence $\hat{\mat{\theta}}_n$ deriving from $l_n$ in \eqref{eq:log-likelihood} over $\Theta$. Furthermore, any strong M-estimator $\hat{\mat{\theta}}_n$ converges in probability to the true parameter $\mat{\theta}_0$, $\hat{\mat{\theta}}_n\xrightarrow{p}\mat{\theta}_0$, over $\Theta$. Moreover, every strong M-estimator $\hat{\mat{\theta}}_n$ is asymptotically normal,
    \begin{displaymath}
        \sqrt{n}(\hat{\mat{\theta}}_n - \mat{\theta}_0) \xrightarrow{d} \mathcal{N}_{p(p + 2 q + 3) / 2}(0, \mat{\Sigma}_{\mat{\theta}_0})
    \end{displaymath}
    with asymptotic variance-covariance structure $\mat{\Sigma}_{\mat{\theta}_0}$ given in \eqref{eq:asymptotic-covariance-gmlm}.
\end{theorem}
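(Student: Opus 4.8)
The plan is to recast the constrained maximization as an ordinary M-estimation problem in Euclidean local coordinates, apply the classical consistency and asymptotic-normality theory for M-estimators there, and then push the Euclidean limit law forward to the ambient space by the delta method along a chart. Throughout I take $m_{\mat{\theta}}(Z) = \langle\overline{\ten{\eta}} + \ten{F}_Y\mlm_{k = 1}^{r}\mat{\beta}_k, \ten{X}\rangle + c\langle\ten{X}\mlm_{k = 1}^{r}\mat{\Omega}_k, \ten{X}\rangle - b(\mat{\eta}_Y)$, so that $M_n = l_n$ and $M(\mat{\theta}) = \E m_{\mat{\theta}}(Z)$ with $Z = (\ten{X}, Y)$ drawn from the joint law. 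Since $\Theta$ is a smooth embedded $d$-manifold by \cref{thm:param-manifold}, around $\mat{\theta}_0$ there is a chart with smooth inverse $\psi : V \to \Theta \cap U$, $V\subseteq\mathbb{R}^d$ open, $\psi(\mat{s}_0) = \mat{\theta}_0$, and by \cref{def:tangent-space} the column space of $\t{(\nabla\psi(\mat{s}_0))}$ is the tangent space $T_{\mat{\theta}_0}\Theta$.

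First I would establish existence and consistency. Identifiability of the ambient parameter $\mat{\theta} = (\overline{\ten{\eta}}, \mat{B}, \mat{\Omega})$ — which holds even though the Kronecker factors are not identifiable — together with the strict convexity of $b$ gives that $\mat{\theta}_0$ is the unique, well-separated maximizer of $M$, i.e. $\sup_{\|\mat{\theta}-\mat{\theta}_0\|\ge\epsilon} M(\mat{\theta}) < M(\mat{\theta}_0)$ for each $\epsilon>0$. Under the regularity conditions a uniform law of large numbers gives $\sup_{\mat{\theta}\in K}|M_n(\mat{\theta}) - M(\mat{\theta})|\xrightarrow{p}0$ on compacta $K$, while the finite-supremum condition supplies the tightness needed to rule out escape of mass. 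The standard argmax argument then yields $\hat{\mat{\theta}}_n\xrightarrow{p}\mat{\theta}_0$ for any strong — indeed any weak — M-estimator; existence of a strong M-estimator sequence follows because $m_{\mat{\theta}}$ is smooth with a well-separated interior maximum, so an exact local maximizer of $M_n$ near $\mat{\theta}_0$ exists with probability tending to one and trivially satisfies the $o_P(n^{-1})$ bound.

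Next I would prove asymptotic normality in local coordinates. Set $\tilde{m}_{\mat{s}}(z) = m_{\psi(\mat{s})}(z)$ and $\hat{\mat{s}}_n = \psi^{-1}(\hat{\mat{\theta}}_n)$, which by consistency eventually lies in $V$ and is a strong M-estimator of $\tilde{M}_n(\mat{s}) = \frac1n\sum_i\tilde{m}_{\mat{s}}(Z_i)$ with $\hat{\mat{s}}_n\xrightarrow{p}\mat{s}_0$. I verify the hypotheses of the classical M-estimator central limit theorem: $\mat{s}\mapsto\tilde{m}_{\mat{s}}(z)$ is differentiable at $\mat{s}_0$ with score $\dot{\tilde{m}}_{\mat{s}_0}(z)$, it satisfies a locally square-integrable Lipschitz (stochastic equicontinuity) bound near $\mat{s}_0$, and $\mat{s}\mapsto\tilde{M}(\mat{s})$ admits a second-order Taylor expansion with nonsingular negative-definite Hessian $\mat{V}_{\mat{s}_0}$ — all three reducing, via the chain rule through the smooth $\psi$, to moment bounds on $\mat{t}(\ten{X})$ that follow from the smoothness and strict convexity of $b$. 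This gives $\sqrt{n}(\hat{\mat{s}}_n-\mat{s}_0)\xrightarrow{d}\mathcal N(0, \mat{V}_{\mat{s}_0}^{-1}\mat{J}_{\mat{s}_0}\mat{V}_{\mat{s}_0}^{-1})$ with $\mat{J}_{\mat{s}_0} = \E[\dot{\tilde{m}}_{\mat{s}_0}\t{\dot{\tilde{m}}_{\mat{s}_0}}]$, where the conditional-likelihood information identity collapses the sandwich to the inverse Fisher information in the chart coordinates.

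Finally I push the limit forward: since $\hat{\mat{\theta}}_n = \psi(\hat{\mat{s}}_n)$ with $\psi$ smooth, the delta method yields $\sqrt{n}(\hat{\mat{\theta}}_n - \mat{\theta}_0)\xrightarrow{d}\mathcal N(0, \mat{\Sigma}_{\mat{\theta}_0})$ with $\mat{\Sigma}_{\mat{\theta}_0} = \t{(\nabla\psi(\mat{s}_0))}\,\mat{V}_{\mat{s}_0}^{-1}\mat{J}_{\mat{s}_0}\mat{V}_{\mat{s}_0}^{-1}\,\nabla\psi(\mat{s}_0)$, a rank-$d$ matrix with range $T_{\mat{\theta}_0}\Theta$; this is the matrix recorded in \eqref{eq:asymptotic-covariance-gmlm}, and the chart-independence noted after \cref{def:tangent-space} shows it does not depend on the chosen coordinates. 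The main obstacle lies in the uniform convergence and local equicontinuity of the second step: the log-partition term $b(\mat{\eta}_Y)$ couples the randomness in $Y$ with the full parameter, and controlling its derivatives uniformly — these being the conditional moments of the sufficient statistic, which for the Ising model are only implicitly defined through a sum over $2^p$ states — is precisely what \crefrange{cond:differentiable-and-convex}{cond:finite-sup-on-compacta} are designed to underwrite, and where the bulk of the technical work resides.
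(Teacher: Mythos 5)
Your architecture for the normality part---pull back to a chart, apply the classical M-estimator CLT in local coordinates, push the limit forward with the delta method---is exactly how the paper proceeds (it packages this step as \cref{thm:M-estimator-asym-normal-on-manifolds} and then verifies its hypotheses), and your verification of the Lipschitz and second-order Taylor hypotheses via the smoothness and strict convexity of $b$ matches the paper's. The genuine gap is in your first step, existence and consistency. A strong M-estimator must satisfy $M_n(\hat{\mat{\theta}}_n)\geq\sup_{\mat{\theta}\in\Theta}M_n(\mat{\theta}) - o_P(n^{-1})$, i.e.\ near-\emph{global} optimality over all of $\Theta$. An ``exact local maximizer of $M_n$ near $\mat{\theta}_0$'' does not ``trivially'' satisfy this: restricted to the non-convex Kronecker manifold $\Theta$, $M_n$ is not concave, and a local maximizer near $\mat{\theta}_0$ could be dominated by the supremum elsewhere on $\Theta$. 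What is actually needed is only that $\sup_{\mat{\theta}\in\Theta}M_n(\mat{\theta})<\infty$ with probability tending to one, after which an $o(n^{-1})$-near-maximizer exists by definition; the paper obtains this finiteness from the existence of a weak M-estimator on the \emph{unconstrained convex} space $\Xi$ and restriction to the subset (\cref{thm:exists-strong-M-estimator-on-subsets}).

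Your consistency argument has the same problem. The Wald-type route (well-separated maximizer, uniform LLN on compacta, plus ``tightness needed to rule out escape of mass'') is not underwritten by the stated conditions: \cref{cond:finite-sup-on-compacta} controls moments only on compact subsets and says nothing about the behaviour of $M_n$ near the boundary or at infinity of the non-compact $\Theta$, so the no-escape claim is asserted, not proved. The key idea you miss is that $\mat{\xi}\mapsto m_{\mat{\xi}}(z)=\langle\mat{t}(\ten{X}),\mat{F}(y)\mat{\xi}\rangle - b(\mat{F}(y)\mat{\xi})$ is \emph{strictly concave on the convex open ambient space} $\Xi$, because $\mat{\xi}\mapsto\mat{F}(y)\mat{\xi}$ is linear and $b$ is strictly convex (\cref{cond:differentiable-and-convex}). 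For concave objectives, pointwise convergence plus a unique maximizer yields consistency of near-maximizers with no compactness or tightness assumptions, and this transfers to near-maximizers over the arbitrary subset $\Theta\ni\mat{\theta}_0$; this is precisely \cref{thm:M-estimator-consistency-on-subsets}, which the paper invokes together with Lemma~5.35 of \cite{vanderVaart1998} for uniqueness of $\mat{\theta}_0$. Two minor further discrepancies: the paper does not invoke the information identity to collapse the sandwich---the covariance in \eqref{eq:asymptotic-covariance-gmlm} is kept in the form $\mat{\Pi}_{\mat{\theta}_0}\E[\nabla m_{\mat{\theta}_0}(Z)\t{(\nabla m_{\mat{\theta}_0}(Z))}]\mat{\Pi}_{\mat{\theta}_0}$---and it completes the proof by exhibiting an explicit spanning matrix of $T_{\mat{\theta}_0}\Theta$ for the Kronecker product manifolds via \cref{thm:kron-manifold-tangent-space}, which your sketch omits.
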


\subsection{Asymptotic Normality}

The following is a reformulation of Lemma~2.3 from \cite{BuraEtAl2018} which assumes Condition~2.2 to hold. The existence of a mapping in Condition~2.2 is not needed for Lemma~2.3. It suffices that the restricted parameter space $\Theta$ is a subset of the unrestricted parameter space $\Xi$, which is trivially satisfied in our setting. Under this, \cref{thm:exists-strong-M-estimator-on-subsets} follows directly from Lemma~2.3 in \cite{BuraEtAl2018}.

\begin{theorem}[Existence of strong M-estimators on Subsets]\label{thm:exists-strong-M-estimator-on-subsets}
    Assume there exists a (weak/strong) M-estimator $\hat{\mat{\xi}}_n$ for $M_n$ over $\Xi$. Then, there exists a strong M-estimator $\hat{\mat{\theta}}_n$ for $M_n$ over any non-empty $\Theta\subseteq\Xi$.
\end{theorem}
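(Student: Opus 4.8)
The plan is to exploit that the sole quantitative content in the definition of a (weak or strong) M-estimator over a set is the finiteness of the supremum of $M_n$ on that set, together with the elementary fact that a supremum can only decrease when its index set shrinks. Accordingly, I would first extract from the hypothesis the bound $\sup_{\mat{\xi}\in\Xi}M_n(\mat{\xi}) < \infty$: by the very definition of a weak/strong M-estimator, the existence of $\hat{\mat{\xi}}_n$ over $\Xi$ already presupposes this finiteness, regardless of whether the attained optimality gap is controlled at rate $o_P(1)$ or $o_P(n^{-1})$. Since $\Theta\subseteq\Xi$ by assumption, monotonicity of the supremum under set inclusion gives
\begin{displaymath}
    \sup_{\mat{\theta}\in\Theta}M_n(\mat{\theta}) \leq \sup_{\mat{\xi}\in\Xi}M_n(\mat{\xi}) < \infty,
\end{displaymath}
so the finite-supremum requirement in the definition of a strong M-estimator over $\Theta$ is met.

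With finiteness in hand, I would construct an explicit almost-maximizer directly from the definition of the supremum. For each $n$, choose $\hat{\mat{\theta}}_n\in\Theta$ with
\begin{displaymath}
    M_n(\hat{\mat{\theta}}_n) \geq \sup_{\mat{\theta}\in\Theta}M_n(\mat{\theta}) - n^{-2},
\end{displaymath}
which is possible precisely because the supremum is finite. The resulting optimality gap is bounded deterministically by $n^{-2} = o(n^{-1})$, hence is trivially $o_P(n^{-1})$, so $\hat{\mat{\theta}}_n$ qualifies as a strong M-estimator over $\Theta$. Note that no continuity, compactness, or mapping between $\Xi$ and $\Theta$ is invoked in this step—only the order relation $\Theta\subseteq\Xi$ and the finiteness it transfers.

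The one genuinely delicate point, and where I expect the real work to sit, is measurability: the near-maximizers $\hat{\mat{\theta}}_n$ must be selectable as measurable functions of the sample $(Z_1,\ldots,Z_n)$ for them to constitute bona fide estimators, which the naive pointwise selection above does not guarantee. This is exactly what is handled in Lemma~2.3 of \cite{BuraEtAl2018}. As noted in the remark preceding the statement, that lemma is established there under Condition~2.2, but the only feature of Condition~2.2 actually used in its proof is the inclusion $\Theta\subseteq\Xi$; the existence of the mapping postulated in Condition~2.2 plays no role. Since $\Theta\subseteq\Xi$ holds trivially in our setting, I would conclude by applying Lemma~2.3 of \cite{BuraEtAl2018} with ambient space $\Xi$ and subset $\Theta$, which yields the strong M-estimator $\hat{\mat{\theta}}_n$ and thereby \cref{thm:exists-strong-M-estimator-on-subsets}.
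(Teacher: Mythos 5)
Your proposal is correct and follows essentially the same route as the paper's proof: transfer finiteness of the supremum from $\Xi$ to $\Theta$ by monotonicity under set inclusion, then select a near-maximizer with a deterministic gap that is $o(n^{-1})$ (the paper uses a generic $\epsilon_n\in o(n^{-1})$ where you use $n^{-2}$, and phrases the finiteness event probabilistically, but the substance is identical). Your closing appeal to Lemma~2.3 of \cite{BuraEtAl2018} for measurable selection is consistent with the remark preceding the theorem and adds a point the paper's own write-up leaves implicit.
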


\begin{theorem}[Existence and Consistency of M-estimators on Subsets]\label{thm:M-estimator-consistency-on-subsets}
    Let $\Xi$ be a convex open subset of a Euclidean space and $\Theta\subseteq\Xi$ non-empty. Assume $\mat{\xi}\mapsto m_{\mat{\xi}}(z)$ is a strictly concave function on $\Xi$ for almost all $z$ and $z\mapsto m_{\mat{\xi}}(z)$ is measurable for all $\mat{\xi}\in\Xi$. Let $M(\mat{\xi}) = \E m_{\mat{\xi}}(Z)$ be a well defined function with a unique maximizer $\mat{\theta}_0\in\Theta\subseteq\Xi$; that is, $M(\mat{\theta}_0) > M(\mat{\xi})$ for all $\mat{\xi}\neq\mat{\theta}_0$. Also, assume 
    \begin{displaymath}
        \E\sup_{\mat{\xi}\in K}|m_{\mat{\xi}}(Z)| < \infty,
    \end{displaymath}
    for every non-empty compact $K\subset\Xi$. Then, there exists a strong M-estimator $\hat{\mat{\theta}}_n$ of $M_n(\mat{\theta}) = \frac{1}{n}\sum_{i = 1}^{n} m_{\mat{\theta}}(Z_i)$ over the subset $\Theta$. Moreover, any strong M-estimator $\hat{\mat{\theta}}_n$ of $M_n$ over $\Theta$ converges in probability to $\mat{\theta}_0$, that is $\hat{\mat{\theta}}_n\xrightarrow{p}\mat{\theta}_0$.
\end{theorem}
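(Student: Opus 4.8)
The plan is to handle existence and consistency separately, invoking \cref{thm:exists-strong-M-estimator-on-subsets} for the former and a standard argmax argument for the latter, with strict concavity of the criterion serving as the common engine. The central observation is that $M_n(\mat{\theta}) = \frac{1}{n}\sum_{i=1}^n m_{\mat{\theta}}(Z_i)$ is, for almost every sample, a finite average of strictly concave maps $\mat{\xi}\mapsto m_{\mat{\xi}}(Z_i)$ and hence itself strictly concave on $\Xi$. Applying the domination hypothesis to the compact singleton $K=\{\mat{\xi}\}$ gives $\E|m_{\mat{\xi}}(Z)|<\infty$ for each $\mat{\xi}$, so the strong law yields $M_n(\mat{\xi})\to M(\mat{\xi})$ a.s.\ pointwise, and $M$ is finite-valued and concave. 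The key upgrade I would use is the convexity lemma for concave functions: pointwise a.s.\ convergence of concave functions on the open convex set $\Xi$ to a finite concave limit forces convergence to be uniform on every compact subset of $\Xi$. This replaces the usual uniform law of large numbers and is what makes the non-smoothness of the problem tractable.

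For existence, I would first produce a (weak) M-estimator over $\Xi$ and then let \cref{thm:exists-strong-M-estimator-on-subsets} deliver a strong M-estimator over $\Theta$. Fix a closed ball $\bar{B}_\rho(\mat{\theta}_0)\subset\Xi$. Since $\mat{\theta}_0$ is the unique maximizer of the concave limit $M$, one has $M(\mat{\theta}_0) > \max_{\mat{\xi}\in\partial B_\rho(\mat{\theta}_0)} M(\mat{\xi})$, and uniform convergence on $\bar{B}_\rho(\mat{\theta}_0)$ then implies that, with probability tending to one, the maximizer of $M_n$ over $\bar{B}_\rho(\mat{\theta}_0)$ lies in the interior. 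An interior local maximizer of the concave $M_n$ is automatically a global maximizer over $\Xi$, so on this event $M_n$ attains its supremum and an exact (hence strong) M-estimator $\hat{\mat{\xi}}_n$ over $\Xi$ exists; defining $\hat{\mat{\xi}}_n$ arbitrarily on the complementary event gives a weak M-estimator over $\Xi$, and \cref{thm:exists-strong-M-estimator-on-subsets} finishes the existence claim.

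For consistency, let $\hat{\mat{\theta}}_n$ be any strong M-estimator over $\Theta$. Because $\mat{\theta}_0\in\Theta$, the almost-maximizing property gives $M_n(\hat{\mat{\theta}}_n)\ge \sup_{\mat{\theta}\in\Theta}M_n(\mat{\theta}) - o_P(1) \ge M_n(\mat{\theta}_0)-o_P(1)$. I would then first trap $\hat{\mat{\theta}}_n$ in the fixed compact set $\bar{B}_\rho(\mat{\theta}_0)$: uniform convergence makes $M_n$ strictly smaller than $M_n(\mat{\theta}_0)$ on $\partial B_\rho(\mat{\theta}_0)$ for large $n$, and concavity of $M_n$ propagates this strict inequality to the entire exterior, so $\hat{\mat{\theta}}_n\in B_\rho(\mat{\theta}_0)$ with probability tending to one. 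On the compact ball, uniform convergence converts the displayed inequality into $M(\hat{\mat{\theta}}_n)\ge M(\mat{\theta}_0)-o_P(1)$, and since the continuous $M$ has a unique maximizer on the compact $\bar{B}_\rho(\mat{\theta}_0)$ it is well-separated there; therefore $\hat{\mat{\theta}}_n\xrightarrow{p}\mat{\theta}_0$.

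I expect the main obstacle to be the non-compactness of $\Xi$ and $\Theta$: without compactness, approximate maximizers can in principle drift toward the boundary or to infinity, and the limit $M$—being only concave rather than strictly concave—need not be well-separated globally. The delicate part of the argument is precisely the use of concavity to (i) upgrade pointwise to locally uniform convergence and (ii) confine the estimator to a fixed compact ball, thereby reducing the problem to the classical compact-parameter argmax setting; verifying that an interior local maximizer of $M_n$ is global, and that the confinement holds on an event of probability tending to one, is where the care must be concentrated.
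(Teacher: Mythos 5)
Your proposal is correct and follows essentially the same route as the paper: the paper's proof simply defers to Proposition~2.4 of \cite{BuraEtAl2018} (which rests on exactly the concavity machinery you spell out --- pointwise LLN upgraded to locally uniform convergence via the convexity lemma, confinement of the estimator to a compact ball, and the argmax theorem) and, as you do, substitutes \cref{thm:exists-strong-M-estimator-on-subsets} for their Lemma~2.3 to get existence of a strong M-estimator on the subset $\Theta$. You have merely written out in full the details the paper outsources to the citation.
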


\begin{theorem}[Asymptotic Normality for M-estimators on Manifolds]\label{thm:M-estimator-asym-normal-on-manifolds}
    Let $\Theta\subseteq\mathbb{R}^p$ be a smooth embedded manifold. For each $\mat{\theta}$ in a neighborhood in $\mathbb{R}^p$ of the true parameter $\mat{\theta}_0\in\Theta$ let $z\mapsto m_{\mat{\theta}}(z)$ be measurable and $\mat{\theta}\mapsto m_{\mat{\theta}}(z)$ be differentiable at $\mat{\theta}_0$ for almost all $z$. Assume also that there exists a measurable function $u$ such that $\E[u(Z)^2] < \infty$, and for almost all $z$ as well as all $\mat{\theta}_1, \mat{\theta}_2$ in a neighborhood of $\mat{\theta}_0$ such that
    \begin{displaymath}
        | m_{\mat{\theta}_1}\!(z) - m_{\mat{\theta}_2}\!(z) | \leq u(z) \| \mat{\theta}_1 - \mat{\theta}_2 \|_2.
    \end{displaymath}
    Moreover, assume that $\mat{\theta}\mapsto\E[m_{\mat{\theta}}(Z)]$ admits a second-order Taylor expansion at $\mat{\theta}_0$ in a neighborhood of $\mat{\theta}_0$ in $\mathbb{R}^p$ with a non-singular Hessian $\mat{H}_{\mat{\theta}_0} = \nabla^2_{\mat{\theta}}\E[m_{\mat{\theta}}(Z)]|_{\mat{\theta} = \mat{\theta}_0}\in\mathbb{R}^{p\times p}$.

    If $\hat{\mat{\theta}}_n$ is a strong M-estimator of $\mat{\theta}_0$ in $\Theta$, then $\hat{\mat{\theta}}_n$ is asymptotically normal
    \begin{displaymath}
        \sqrt{n}(\hat{\mat{\theta}}_n - \mat{\theta}_0) \xrightarrow{d} \mathcal{N}_p\left(\mat{0}, \mat{\Pi}_{\mat{\theta}_0} \E\left[\nabla_{\mat{\theta}} m_{\mat{\theta}_0}(Z)\t{(\nabla_{\mat{\theta}} m_{\mat{\theta}_0}(Z))}\right] \mat{\Pi}_{\mat{\theta}_0}\right)
    \end{displaymath}
    where $\mat{\Pi}_{\mat{\theta}_0} = \mat{P}_{\mat{\theta}_0}\pinv{(\t{\mat{P}_{\mat{\theta}_0}}\mat{H}_{\mat{\theta}_0}\mat{P}_{\mat{\theta}_0})}\t{\mat{P}_{\mat{\theta}_0}}$ and $\mat{P}_{\mat{\theta}_0}$ is any matrix whose span is the tangent space $T_{\mat{\theta}_0}\Theta$ of $\Theta$ at $\mat{\theta}_0$.
\end{theorem}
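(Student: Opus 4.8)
The plan is to reduce the manifold-constrained problem to ordinary M-estimation in Euclidean space by passing to a local coordinate chart, apply a standard asymptotic normality result for Euclidean M-estimators in that chart, and then push the conclusion back to the manifold by the delta method. Fix a chart $\varphi:U\cap\Theta\to V$ with $\mat{\theta}_0\in U$ and write $\psi=\varphi^{-1}:V\to U\cap\Theta\subseteq\mathbb{R}^p$ for its smooth inverse, $s_0=\varphi(\mat{\theta}_0)$, and $d=\dim\Theta$. Set $\mat{P}=\t{\nabla\psi(s_0)}\in\mathbb{R}^{p\times d}$, the Jacobian of $\psi$ at $s_0$; by \cref{def:tangent-space} its columns span $T_{\mat{\theta}_0}\Theta$, so $\mat{P}$ is a legitimate choice of $\mat{P}_{\mat{\theta}_0}$. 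Reparametrizing the criterion via $\tilde m_s(z)=m_{\psi(s)}(z)$ and $\tilde M(s)=\E\tilde m_s(Z)=M(\psi(s))$ turns the constrained problem into an unconstrained M-estimation problem over the open set $V\subseteq\mathbb{R}^d$.

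First I would use the given consistency $\hat{\mat{\theta}}_n\xrightarrow{p}\mat{\theta}_0$ (available from \cref{thm:M-estimator-consistency-on-subsets}) to localize: with probability tending to one $\hat{\mat{\theta}}_n\in U\cap\Theta$, so $\hat s_n=\varphi(\hat{\mat{\theta}}_n)$ is well-defined and $\hat s_n\xrightarrow{p}s_0$. Since $\sup_{s\in V}\tilde M_n(s)=\sup_{\mat{\theta}\in U\cap\Theta}M_n(\mat{\theta})\leq\sup_{\mat{\theta}\in\Theta}M_n(\mat{\theta})$ while $\tilde M_n(\hat s_n)=M_n(\hat{\mat{\theta}}_n)\geq\sup_\Theta M_n-o_P(n^{-1})$, the reparametrized estimator $\hat s_n$ inherits the strong almost-maximizer property over $V$. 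Next I would carry the regularity hypotheses through $\psi$: measurability of $z\mapsto\tilde m_s(z)$ is immediate; differentiability at $s_0$ holds by the chain rule with $\nabla\tilde m_{s_0}(z)=\t{\mat{P}}\nabla_{\mat{\theta}}m_{\mat{\theta}_0}(z)$; and the Lipschitz bound transfers with envelope $\tilde u=L\,u$ (still square-integrable) because $\psi$ is smooth, hence locally Lipschitz with some constant $L$ near $s_0$.

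The delicate step is the second-order expansion of $\tilde M=M\circ\psi$. Writing the hypothesized expansion $M(\mat{\theta})=M(\mat{\theta}_0)+\tfrac{1}{2}\t{(\mat{\theta}-\mat{\theta}_0)}\mat{H}_{\mat{\theta}_0}(\mat{\theta}-\mat{\theta}_0)+o(\|\mat{\theta}-\mat{\theta}_0\|^2)$ — whose linear term is absent because $\mat{\theta}_0$, as the maximizer of the population objective $M$, is a stationary point of $M$ — and substituting $\psi(s)-\mat{\theta}_0=\mat{P}(s-s_0)+O(\|s-s_0\|^2)$, the curvature (quadratic) part of $\psi$ only multiplies the vanishing linear term of $M$ and therefore contributes at higher order. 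This yields $\tilde M(s)=\tilde M(s_0)+\tfrac{1}{2}\t{(s-s_0)}(\t{\mat{P}}\mat{H}_{\mat{\theta}_0}\mat{P})(s-s_0)+o(\|s-s_0\|^2)$, so the chart Hessian is $\tilde{\mat{V}}=\t{\mat{P}}\mat{H}_{\mat{\theta}_0}\mat{P}$, which is nonsingular since $\mat{H}_{\mat{\theta}_0}$ is nonsingular and $\mat{P}$ has full column rank (giving definiteness on $T_{\mat{\theta}_0}\Theta$ at the maximizer). I expect this to be the main obstacle: forcing the curvature / second-fundamental-form contribution to drop out, which rests squarely on $\mat{\theta}_0$ being stationary for $M$ in the ambient space and on cleanly separating the linear part of $\psi$ (which gives $\mat{P}$) from its curvature.

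With the above in place, the Euclidean M-estimator asymptotic normality theorem (e.g.\ van der Vaart, Thm.~5.23) applies in the chart and gives $\sqrt{n}(\hat s_n-s_0)\xrightarrow{d}\mathcal{N}_d(\mat{0},\tilde{\mat{V}}^{-1}\E[\nabla\tilde m_{s_0}(Z)\t{(\nabla\tilde m_{s_0}(Z))}]\tilde{\mat{V}}^{-1})$, where $\E[\nabla\tilde m_{s_0}\t{(\nabla\tilde m_{s_0})}]=\t{\mat{P}}\E[\nabla_{\mat{\theta}}m_{\mat{\theta}_0}\t{(\nabla_{\mat{\theta}}m_{\mat{\theta}_0})}]\mat{P}$. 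Finally I would transfer by the delta method through $\hat{\mat{\theta}}_n=\psi(\hat s_n)$, using $\sqrt{n}(\hat{\mat{\theta}}_n-\mat{\theta}_0)=\mat{P}\sqrt{n}(\hat s_n-s_0)+o_P(1)$, to obtain the limiting covariance $\mat{P}\tilde{\mat{V}}^{-1}\t{\mat{P}}\,\E[\nabla_{\mat{\theta}}m_{\mat{\theta}_0}\t{(\nabla_{\mat{\theta}}m_{\mat{\theta}_0})}]\,\mat{P}\tilde{\mat{V}}^{-1}\t{\mat{P}}$. Because $\mat{P}$ has full column rank, $\pinv{(\t{\mat{P}}\mat{H}_{\mat{\theta}_0}\mat{P})}=\tilde{\mat{V}}^{-1}$ and $\mat{P}\tilde{\mat{V}}^{-1}\t{\mat{P}}=\mat{\Pi}_{\mat{\theta}_0}$, so this collapses to exactly $\mat{\Pi}_{\mat{\theta}_0}\E[\nabla_{\mat{\theta}}m_{\mat{\theta}_0}(Z)\t{(\nabla_{\mat{\theta}}m_{\mat{\theta}_0}(Z))}]\mat{\Pi}_{\mat{\theta}_0}$. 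To close, I would record chart-independence of the limit: $T_{\mat{\theta}_0}\Theta$ is chart-independent, and for any full-rank reparametrization $\mat{P}\mapsto\mat{P}\mat{A}$ the quantity $\mat{P}\pinv{(\t{\mat{P}}\mat{H}_{\mat{\theta}_0}\mat{P})}\t{\mat{P}}$ is invariant; the Moore–Penrose inverse in $\mat{\Pi}_{\mat{\theta}_0}$ further permits any spanning matrix $\mat{P}_{\mat{\theta}_0}$, not only a basis.
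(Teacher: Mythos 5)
Your proposal follows essentially the same route as the paper's proof: pass to a local chart, verify that the Lipschitz and second-order-expansion hypotheses transfer (using stationarity of $M$ at $\mat{\theta}_0$ so that the chart's curvature drops out of the Hessian $\t{\mat{P}}\mat{H}_{\mat{\theta}_0}\mat{P}$), invoke van der Vaart's Theorem~5.23 in coordinates, push back by the delta method, and check invariance under the choice of spanning matrix. The argument is correct and matches the paper's in all essentials, including the final reduction of the covariance to $\mat{\Pi}_{\mat{\theta}_0}\E[\nabla_{\mat{\theta}}m_{\mat{\theta}_0}(Z)\t{(\nabla_{\mat{\theta}}m_{\mat{\theta}_0}(Z))}]\mat{\Pi}_{\mat{\theta}_0}$.
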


\begin{remark}
    \cref{thm:M-estimator-asym-normal-on-manifolds} has as special case Theorem~5.23 in \cite{vanderVaart1998}, when $\Theta$ is an open subset of a Euclidean space, which is the simplest form of an embedded manifold.
\end{remark}

\section{Simulations}\label{sec:simulations}

In this section, we report simulation results for the multi-linear normal and the multi-linear Ising model where different aspects of the GMLM model are compared against other methods. The comparison methods are Tensor Sliced Inverse Regression (TSIR) \cite[]{DingCook2015}, Multiway Generalized Canonical Correlation Analysis (MGCCA) \cite[]{ChenEtAl2021,GirkaEtAl2024} and the Tucker decomposition that is a higher-order form of principal component analysis (HOPCA) \cite[]{KoldaBader2009}, for both continuous and binary data. For the latter, the binary values are treated as continuous. As part of our baseline analysis, we also incorporate traditional Principal Component Analysis (PCA) on vectorized observations. In the case of the Ising model, we also compare with LPCA (Logistic PCA) and CLPCA (Convex Logistic PCA), both introduced in \cite{LandgrafLee2020}. All experiments are performed at sample sizes $n = 100, 200, 300, 500$ and $750$. Every experiment is repeated $100$ times.

To assess the accuracy of the estimation of $\ten{R}(\ten{X})$ in \cref{thm:sdr}, we compare the estimate with the true vectorized reduction matrix $\mat{B} = \bigkron_{k = r}^{1}\mat{\beta}_k$, as it is compatible with any linear reduction method. We compute the \emph{subspace distance}, $d(\mat{B}, \hat{\mat{B}})$, between $\mat{B}\in\mathbb{R}^{p\times q}$ and an estimate $\hat{\mat{B}}\in\mathbb{R}^{p\times \tilde{q}}$, which satisfies 
\begin{displaymath}
    d(\mat{B}, \hat{\mat{B}}) \propto \| \mat{B}\pinv{(\t{\mat{B}}\mat{B})}\t{\mat{B}} - \hat{\mat{B}}\pinv{(\t{\hat{\mat{B}}}\hat{\mat{B}})}\t{\hat{\mat{B}}} \|_F,
\end{displaymath}
where $\propto$ signifies proportional to. The proportionality constant\footnote{Depends on row dimension $p$ and the ranks of $\mat{B}$ and $\hat{\mat{B}}$ given by $(\min(\rank\mat{B} + \rank\hat{\mat{B}}, 2 p - (\rank\mat{B} + \rank\hat{\mat{B}})))^{-1/2}$.} ensures $d(\mat{B}, \hat{\mat{B}}) \in [0, 1]$. A distance of zero implies space overlap and a distance of one implies orthogonality of the subspaces.

\subsection{Multi-Linear Normal}\label{sec:sim-tensor-normal}

We generate a random sample $y_i$, $i=1,\ldots, n$, from the standard normal distribution. We then draw i.i.d. samples $\ten{X}_i$ for $i = 1, ..., n$ from the conditional multi-linear normal distribution of $\ten{X}\mid Y = y_i$. The conditional distribution $\ten{X}\mid Y = y_i$ depends on the choice of the GMLM parameters $\overline{\ten{\eta}}$, $\mat{\beta}_1, ..., \mat{\beta}_r$, $\mat{\Omega}_1, ..., \mat{\Omega}_r$, and the function $\ten{F}_y$ of $y$. In all experiments we set $\overline{\ten{\eta}} = \mat{0}$. The other parameters and $\ten{F}_y$ are described per experiment. With the true GMLM parameters and $\ten{F}_y$ given, we compute the conditional multi-linear normal mean $\ten{\mu}_y = \ten{F}_y\mlm_{k = 1}^{r}\mat{\Omega}_k^{-1}\mat{\beta}_k$ and covariances $\mat{\Sigma}_k = \mat{\Omega}_k^{-1}$ as in \eqref{eq:tnormal_cond_params}.

We consider the following settings:
\begin{itemize}
    \item[1a)] $\ten{X}$ is a three-way ($r = 3$) array of dimension $2\times 3\times 5$, and $\ten{F}_y\equiv y$ is a $1\times 1\times 1$ tensor. The true $\mat{\beta}_k$'s are all equal to $\mat{e}_1\in\mathbb{R}^{p_k}$, the first unit vector, for $k \in \{1, 2, 3\}$. The matrices $\mat{\Omega}_k = \mathrm{AR}(0.5)$ follow an auto-regression like structure. That is, the elements are given by $(\mat{\Omega}_k)_{i j} = 0.5^{|i - j|}$.
    \item[1b)] $\ten{X}$ is a three-way ($r = 3$) array of dimension $2\times 3\times 5$, and relates to the response $y$ via a qubic polynomial. This is modeled via $\ten{F}_y$ of dimension $2\times 2\times 2$ by the twice iterated outer product of the vector $(1, y)$. Element-wise this reads $(\ten{F}_y)_{i j k} = y^{i + j + k - 3}$. All $\mat{\beta}_k$'s are set to $(\mat{e}_1, \mat{e}_2)\in\mathbb{R}^{p_k\times 2}$ with $\mat{e}_i$ the $i$th unit vector and the $\mat{\Omega}_k$'s are $\mathrm{AR}(0.5)$.
    \item[1c)] Same as 1b), except that the GMLM parameters $\mat{\beta}_k$ are rank $1$ given by
    \begin{displaymath}
        \mat{\beta}_1 = \begin{pmatrix} 1 & -1 \\ -1 & 1 \end{pmatrix},\quad
        \mat{\beta}_2 = \begin{pmatrix} 1 & -1 \\ -1 & 1 \\ 1 & -1 \end{pmatrix},\quad
        \mat{\beta}_3 = \begin{pmatrix} 1 & -1 \\ -1 & 1 \\ 1 & -1 \\ -1 & 1 \\ 1 & -1 \end{pmatrix}.
    \end{displaymath}
    \item[1d)] Same as 1b), but the true $\mat{\Omega}_k$ is tri-diagonal, for $k = 1, 2, 3$. Their elements are given by $(\mat{\Omega}_k)_{i j} = \delta_{0, |i - j|} + 0.5\delta_{1, |i - j|}$ with $\delta_{i, j}$ being the Kronecker delta.
    \item[1e)] For the misspecification model we let $\ten{X}\mid Y$ be multivariate but \emph{not} multi-linear normal. Let $\ten{X}$ be a $5\times 5$ random matrix with normal entries, $Y$ univariate standard normal and $\mat{f}_y$ a $4$ dimensional vector given by $\mat{f}_y = (1, \sin(y), \cos(y), \sin(y)\cos(y))$. The true vectorized reduction matrix $\mat{B}$ is $25\times 4$ consisting of the first $4$ columns of the identity; i.e., $\mat{B}_{i j} = \delta_{i j}$. The variance-covariance matrix $\mat{\Sigma}$ has elements $\mat{\Sigma}_{i j} = 0.5^{|i - j|}$. Both, $\mat{B}$ and $\mat{\Omega} = \mat{\Sigma}^{-1}$ violate the Kronecker product assumptions \eqref{eq:eta1} and \eqref{eq:eta2} of the GMLM model. Then, we set
    \begin{displaymath}
        \vec{\ten{X}}\mid (Y = y) = \mat{B}\mat{f}_y + \mathcal{N}_{25}(\mat{0}, \mat{\Sigma}).
    \end{displaymath}
    Furthermore, we fit the model with the wrong ``known'' function $\ten{F}_y$. We set $\ten{F}_y$ to be a $2\times 2$ matrix with $(\ten{F}_y)_{i j} = y^{i + j - 2}$, $i,j=1,2$.
\end{itemize}

The final multi-linear normal experiment 1e) is a misspecified model to explore the robustness of our approach. The true model does \emph{not} have a Kronecker structure and the ``known'' function $\ten{F}_y$ of $y$ is misspecified as well.

The results are visualized in \cref{fig:sim-normal}. Simulation 1a), given a 1D linear relation between the response $Y$ and $\ten{X}$, TSIR and GMLM are equivalent. This is expected as \cite{DingCook2015} already established that TSIR gives the MLE estimate under a multi-linear (matrix) normal distributed setting. For the other methods, MGCCA is only a bit better than PCA which, unexpectedly, beats HOPCA. But none of them are close to the performance of TSIR or GMLM. Continuing with 1b), where we introduced a cubic relation between $Y$ and $\ten{X}$, we observe a bigger deviation in the performance of GMLM and TSIR. This is caused mainly because we are estimating an $8$ dimensional subspace now, which amplifies the small performance boost, in the subspace distance, we gain by avoiding slicing. The GMLM model in 1c) behaves as expected, clearly being the best. The other results are surprising. First, PCA, HOPCA and MGCCA are visually indistinguishable. This is explained by a high signal-to-noise ratio in this particular example. But the biggest surprise is the failure of TSIR. Even more surprising is that the conditional distribution $\ten{X}\mid Y$ is multi-linear normal distributed which, in conjunction with $\cov(\vec\ten{X})$ having a Kronecker structure, should give the MLE estimate. The low-rank assumption is also not an issue, this simply relates to TSIR estimating a 1D linear reduction which fulfills all the requirements. Finally, a common known issue of slicing, used in TSIR, is that conditional multi-modal distributions can cause estimation problems due to the different distribution modes leading to vanishing slice means. Again, this is not the case in simulation 1c). An investigation into this behavior revealed the problem in the estimation of the mode covariance matrices $\mat{O}_k = \E[(\ten{X} - \E\ten{X})_{(k)}\t{(\ten{X} - \E\ten{X})_{(k)}}]$. The mode-wise reductions provided by TSIR are computed as $\hat{\mat{O}}_k^{-1}\hat{\mat{\Gamma}}_k$ where the poor estimation of $\hat{\mat{O}}_k$ causes the failure of TSIR. The poor estimate of $\mat{O}_k$ is rooted in the high signal-to-noise ratio in this particular simulation. GMLM does not have degenerate behavior for high signal-to-noise ratios but it is less robust in low signal-to-noise ratio setting where TSIR performs better in this specific example. Simulation 1d), incorporating information about the covariance structure behaves similarly to 1b), except that GMLM gains a statistically significant lead in estimation performance. The last simulation, 1e), where the model was misspecified for GMLM. GMLM, TSIR, as well as MGCCA, are on par where GMLM has a slight lead in the small sample size setting and MGCCA overtakes in higher sample scenarios. The PCA and HOPCA methods both still outperformed.

\begin{figure}[ht!]
    \centering
    \includegraphics{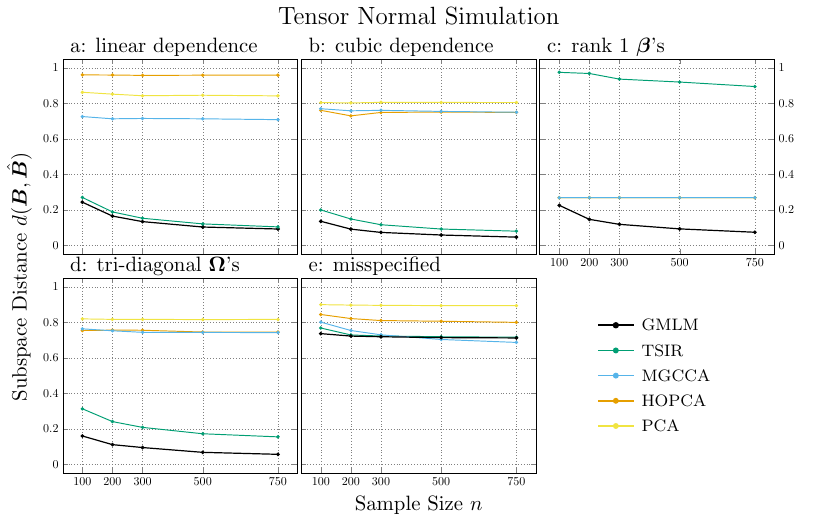}
    \caption{\label{fig:sim-normal}Multi-linear normal GMLM. The mean subspace distance $d(\mat{B}, \hat{\mat{B}})$ over $100$ replications is plotted versus sample size on the $x$-axis. The simulation settings are described in \cref{sec:sim-tensor-normal}.}
\end{figure}

\subsection{Ising Model}\label{sec:sim-ising}

We let $Y_i$ be i.i.d. uniform on $[-1,1]$, $i = 1, \ldots, n$ and $\ten{X}$ be $2\times 3$ with conditional matrix (multi-linear) Ising distribution $\ten{X}\mid Y$ as in \cref{sec:ising_estimation}. Unless otherwise specified, we set the GMLM parameters to be
\begin{displaymath}
    \mat{\beta}_1 = \begin{pmatrix}
        1 & 0 \\ 0 & 1
    \end{pmatrix}, \quad \mat{\beta}_2 = \begin{pmatrix}
        1 & 0 \\ 0 & 1 \\ 0 & 0
    \end{pmatrix}, \quad \mat{\Omega}_1 = \begin{pmatrix}
        0 & -2 \\ -2 & 0
    \end{pmatrix}, \quad \mat{\Omega}_2 = \begin{pmatrix}
        1 & 0.5 & 0 \\
        0.5 & 1 & 0.5 \\
        0 & 0.5 & 1
    \end{pmatrix}
\end{displaymath}
and
\begin{displaymath}
    \ten{F}_y = \begin{pmatrix}
        \sin(\pi y) & -\cos(\pi y) \\
        \cos(\pi y) & \sin(\pi y)
    \end{pmatrix}.
\end{displaymath}

\begin{itemize}
    \item[2a)] A purely linear relation between $\mat{X}$ and the response with $\ten{F}_y\equiv y:1\times 1$ and $\t{\mat{\beta}_1} = (1, 0)$ and $\t{\mat{\beta}_2} = (1, 0, 0)$.
    \item[2b)] The ``base'' simulation with all parameters as described above.
    \item[2c)] Low rank regression with both $\mat{\beta}_1$ and $\mat{\beta}_2$ of rank $1$,
    \begin{displaymath}
        \mat{\beta}_1 = \begin{pmatrix}
            1 & 0 \\ 1 & 0
        \end{pmatrix}, \qquad \mat{\beta}_2 = \begin{pmatrix}
            0 & 0 \\ 1 & -1 \\ 0 & 0
        \end{pmatrix}.
    \end{displaymath}
    \item[2d)] The original design of the Ising model is a mathematical model of the behavior of Ferromagnetism \cite{Ising1925} in a thermodynamic setting, modeling the interaction effects of elementary magnets (spin up/down relating to $0$ and $1$). The model assumes all elementary magnets to be the same, which translates to all having the same coupling strength (two-way interactions) governed by a single parameter relating to the temperature of the system. Assuming the magnets to be arranged in a 2D grid (matrix-valued $\ten{X}$), their interactions are constrained to direct neighbors, which is modeled by choosing the true $\mat{\Omega}_k$'s to be tri-diagonal matrices with zero diagonal entries and all non-zero entries identical. We set
    \begin{displaymath}
        \mat{\Omega}_1 = \frac{1}{2}\begin{pmatrix}
            0 & 1 \\ 1 & 0
        \end{pmatrix}, \qquad \mat{\Omega}_2 = \begin{pmatrix}
            0 & 1 & 0 \\
            1 & 0 & 1 \\
            0 & 1 & 0
        \end{pmatrix}
    \end{displaymath}
    where $1 / 2$ corresponds to arbitrary temperature. The mean effect depending on $\ten{F}_y$ can be interpreted as an external magnetic field.
\end{itemize}

\begin{figure}[ht!]
    \centering
    \includegraphics[scale=0.7]{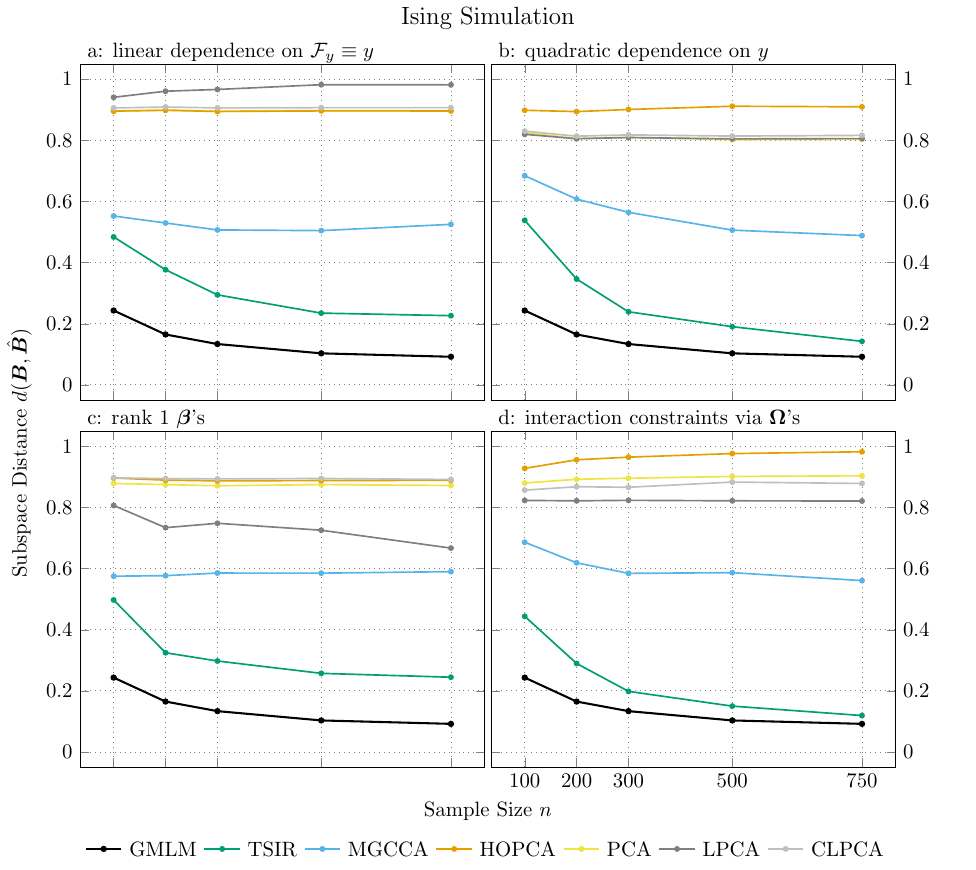}
    \caption{\label{fig:sim-ising}Visualization of the simulation results for Ising GMLM. Sample size on the $x$-axis and the mean of subspace distance $d(\mat{B}, \hat{\mat{B}})$ over $100$ replications on the $y$-axis. Described in \cref{sec:sim-ising}.}
\end{figure}

The subspace distances between the estimated and true parameter spaces for sample size $n=100, 200, 300, 500, 750$ are plotted in \cref{fig:sim-ising}. Regardless of the simulation setting, 2a-d), the comparative results are similar. We observe that PCA and HOPCA, both treating the response $\ten{X}$ as continuous, perform poorly. Not much better are LPCA and CLPCA. Similar to PCA and HOPCA they do not consider the relation to the response, but they are specifically created for binary predictors. MGCCA, which accounts for $y$ in reducing the predictors, clearly out-performs all the PCA variants. Even better is TSIR, even though it treats the predictors $\ten{X}$ as continuous. Finally, the Ising GMLM model is the best in all simulations.

\section{Data Analysis}\label{sec:data-analysis}

We apply GMLM and competing methods to two data sets.

\subsection{EEG Data}\label{sec:EEG}

We described the data in Section \ref{sec:introduction}. SDR methods for regression with matrix-valued predictors were developed by \cite{LiKimAltman2010} (folded SIR), \cite{PfeifferForzaniBura2012} (Longitudinal SIR), \cite{DingCook2014} (dimension folding PCA and PFC), \cite{PfeifferKaplaBura2021} (K-PIR (ls), K-PIR (mle)), and \cite{DingCook2015} (two-tensor SIR). In this data set, $p= p_1 p_2 = 16384$ is much larger than $n=122$. To deal with this issue, the predictor data were pre-screened via (2D)$^2$PCA \cite[]{ZhangZhou2005} which reduced the dimension of the predictor array to $(p_1, p_2) = (3, 4)$, $(15, 15)$ and $(20, 30)$ \cite[]{PfeifferKaplaBura2021, LiKimAltman2010, DingCook2014, DingCook2015}. \cite{PfeifferKaplaBura2021} also carried out simultaneous dimension reduction and variable selection using the fast POI-C algorithm of \cite{JungEtAl2019} (due to high computational burden, only 10-fold cross-validation was performed for fast POI-C).

In contrast to these methods, our GMLM model can be applied directly to the raw data of dimension $(256, 64)$ without pre-screening or variable selection. The high dimensionality issue in the GMLM model is resolved by the regularization trick used for numerical stability, as described in \cref{sec:tensor-normal-estimation}, without any change to the estimation procedure. In general, the sample size does not need to be large for maximum likelihood estimation in the multilinear normal model. For matrix normal models in particular, \cite{DrtonEtAl2020} proved that very small sample sizes, as little as $3$,\footnote{The required minimum sample size depends on non-trivial algebraic relations between the mode dimensions, while the magnitude of the dimensions has no specific role.} are sufficient to obtain unique MLEs for Kronecker covariance structures.

The discriminatory ability of the reduced predictors of the various approaches was assessed by the area under the receiver operator characteristics curve, AUC \cite[p. 67]{Pepe2003}. \cref{tab:eeg} reports the AUC values along with their standard deviation for the three compared methods. We use leave-one-out cross-validation to obtain unbiased AUC estimates. Then, we compare the GMLM model to the best performing methods from \cite{PfeifferKaplaBura2021}, namely K-PIR (ls) and LSIR from \cite{PfeifferForzaniBura2012} for $(p_1, p_2) = (3, 4)$, $(15, 15)$ and $(20, 30)$. For all applied pre-screening dimensions, K-PIR (ls) has an AUC of $78\%$. LSIR performs better at the price of some instability; it peaked at $85\%$ at $(3, 4)$, then dropped down to $81\%$ at $(15, 15)$, and then increased to $83\%$. In contrast, our GMLM method peaked at $(3, 4)$ with $85\%$ and stayed stable at $84\%$, even when no pre-processing was applied. In contrast, fast POI-C that carries out simultaneous feature extraction and feature selection is clearly outperformed by all other methods with an AUC of $63\%$. Folded SIR \cite[]{LiKimAltman2010} is excluded as it exhibits consistently lower AUC values, ranging from 0.61 to 0.70.

\begin{table}[!hpt]
    \centering
    \begin{tabular}{l l c c c}
        \hline
        Method  & \multicolumn{4}{c}{$(p_1, p_2)$} \\
                & \multicolumn{1}{c}{$(3, 4)$}
                & \multicolumn{1}{c}{$(15, 15)$}
                & \multicolumn{1}{c}{$(20, 30)$}
                & \multicolumn{1}{c}{$(256, 64)$} \\
        \hline
        K-PIR (ls) & 0.78 (0.04) & 0.78 (0.04) & 0.78 (0.04) & \\
        LSIR       & 0.85 (0.04) & 0.81 (0.04) & 0.83 (0.04) & \\
        TSIR       & 0.85 (0.04) & 0.83 (0.04) & 0.80 (0.04) & 0.69 (0.05) \\
        FastPOI-C  & & & & 0.63 (0.22)\makebox[0pt]{\ \ ${}^{*}$} \\
        GMLM       & 0.85 (0.04) & 0.84 (0.04) & 0.84 (0.04) & 0.84 (0.04) \\
        \hline
        \multicolumn{4}{r}{${}^{*}$\footnotesize based on 10-fold cross-validation.}
    \end{tabular}
    \caption{\label{tab:eeg}Mean AUC values and their standard deviation in parentheses based on leave-one-out cross-validation for the EEG imaging data (77 alcoholic and 45 control subjects)}
\end{table}

\subsection{Chess}\label{sec:chess}

The data set is provided by the \emph{lichess.org open database}.\footnote{\emph{lichess.org open database}. visited on December 8, 2023. \url{https://database.lichess.org}} We randomly selected the November of 2023 data that consist of more than $92$ million games. We removed all games without position evaluations. The evaluations, also denoted as scores, are from Stockfish,\footnote{The Stockfish developers [see \href{https://github.com/official-stockfish/Stockfish/blob/master/AUTHORS}{AUTHORS} file] (since 2008). \textit{Stockfish}. Stockfish is a free and strong UCI chess engine. \url{https://stockfishchess.org}} a free and strong chess engine. The scores take the role of the response $Y$ and correspond to a winning probability from the white pieces' point of view. Positive scores are good for white and negative scores indicate an advantage for black pieces. We ignore all highly unbalanced positions, which we set to be positions with absolute score above $5$. We also remove all positions with a mate score (one side can force checkmate). Furthermore, we only consider positions after $10$ half-moves to avoid oversampling the beginning of the most common openings including the start position which is in every game. Finally, we only consider positions with white to move. This leads to a final data set of roughly $64$ million positions, including duplicates.

A chess position is encoded as a set of $12$ binary matrices $\ten{X}_{\mathrm{piece}}$ of dimensions $8\times 8$. Every binary matrix encodes the positioning of a particular piece by containing a $1$ if the piece is present at the corresponding board position. The $12$ pieces derive from the $6$ types of pieces, namely pawns (\pawn), knights (\knight), bishops (\bishop), queens (\queen), and kings (\king) of two colors, black and white. See \cref{fig:fen2tensor} for a visualization.

\begin{figure}[!h]
    \centering
    \includegraphics[scale=0.8]{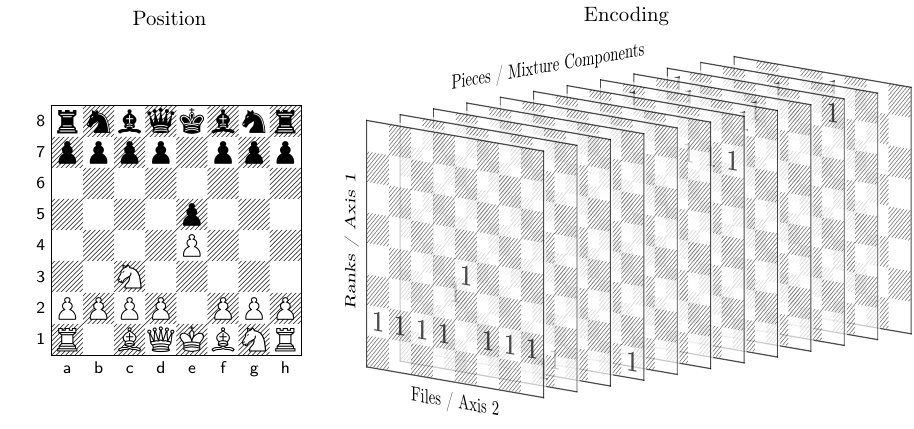}
    \caption{\label{fig:fen2tensor}The chess start position and its 3D binary tensor representation, empty entries are $0$.}
\end{figure}

We assume that $\ten{X}_{\mathrm{piece}}\mid Y = y$ follows an Ising GMLM model as in \cref{sec:ising_estimation} with different conditional piece predictors being independent. The simplifying assumption of independence results in a mixture model with log-likelihood
\begin{displaymath}
    l_n(\mat{\theta}) = \frac{1}{12}\sum_{\mathrm{piece}}l_n(\mat{\theta}_{\mathrm{piece}})
\end{displaymath}
where $l_n(\mat{\theta}_{\mathrm{piece}})$ is the Ising GMLM log-likelihood as in \cref{sec:ising_estimation} for $\ten{X}_{\mathrm{piece}}\mid Y = y$. For every component the same relation to the scores $y$ is modeled via a $2\times 2$ dimensional matrix-valued function $\ten{F}_y$ consisting of the monomials $1, y, y^2$, specifically $(\ten{F}_y)_{i j} = y^{i + j - 2}$.

Due to the volume of the data (millions of observations), it is computationally infeasible to compute the gradients on the entire data set. The dimension of the binary data is $12$ times a $8\times 8 = 768$ for every observation. The solution is to switch from a classic gradient-based optimization to a stochastic version; that is, every gradient update uses a new random subset of the entire data set. We draw independent random samples from the data consisting of $64$ million positions. The independence of samples is derived from the independence of games, and every sample is drawn from a different game.

\subsubsection{Validation}
Given the non-linear nature of the reduction, due to the quadratic matrix-valued function $\ten{F}_y$ of the score $y$, we use a \emph{generalized additive model}\footnote{using the function \texttt{gam()} from the \texttt{R} package \texttt{mgcv}.} (GAM) to predict position scores from reduced positions. The reduced positions are $48$ dimensional continuous values by combining the $12$ mixture components from the $2\times 2$ matrix-valued reductions per piece. The per-piece reduction is
\begin{displaymath}
    \ten{R}(\ten{X}_{\mathrm{piece}}) = \mat{\beta}_{1,\mathrm{piece}}(\ten{X}_{\mathrm{piece}} - \E\ten{X}_{\mathrm{piece}})\t{\mat{\beta}_{2, \mathrm{piece}}}
\end{displaymath}
which gives the complete $48$ dimensional vectorized reduction by stacking the piece-wise reductions
\begin{displaymath}
    \vec{\ten{R}(\ten{X}})
        = (\vec{\ten{R}(\ten{X}_{\text{white pawn}})}, \ldots, \vec{\ten{R}(\ten{X}_{\text{black king}})})
        = \t{\mat{B}}\vec(\ten{X} - \E\ten{X}).
\end{displaymath}
The second line encodes all the piece-wise reductions in a block diagonal full reduction matrix $\mat{B}$ of dimension $768\times 48$ which is applied to the vectorized 3D tensor $\ten{X}$ combining all the piece components $\ten{X}_{\mathrm{piece}}$ into a single tensor of dimension $8\times 8\times 12$. This is a reduction to $6.25\%$ of the original dimension. The $R^2$ statistic of the GAM fitted on $10^5$ new reduced samples is $R^2_{\mathrm{gam}}\approx 46\%$. A linear model on the reduced data achieves $R^2_{\mathrm{lm}}\approx 26\%$ which clearly shows the non-linear relation. On the other hand, the static evaluation of the \emph{Schach H\"ornchen}\footnote{Main author's chess engine.} engine, given the full position (\emph{not} reduced), achieves an $R^2_{\mathrm{hce}}\approx 52\%$. The $42\%$ are reasonably well compared to $51\%$ of the engine static evaluation which gets the original position and uses chess specific expert knowledge. Features the static evaluation includes, which are expected to be learned by the GMLM mixture model, are; \emph{material} (piece values) and \emph{piece square tables} (PSQT, preferred piece type positions). In addition, the static evaluation includes chess specific features like \emph{king safety}, \emph{pawn structure}, or \emph{rooks on open files}. This lets us conclude that the reduction captures most of the relevant features possible, given the oversimplified modeling we performed.

\subsubsection{Interpretation}
For a compact interpretation of the estimated reduction we construct PSQTs. To do so we use the linear model from the validation section. Then, we rewrite the combined linear reduction and linear model in terms of PSQTs. Let $\mat{B}$ be the $768\times 48$ full vectorized linear reduction. This is the block diagonal matrix with the $64\times 4$ dimensional per piece reductions $\mat{B}_{\mathrm{piece}} = \mat{\beta}^{\mathrm{piece}}_2\otimes\mat{\beta}^{\mathrm{piece}}_1$. Then, the linear model with coefficients $\mat{b}$ and intercept $a$ on the reduced data is given by
\begin{equation}\label{eq:chess-lm}
    y = a + \t{\mat{b}}\t{\mat{B}}\vec(\ten{X} - \E\ten{X}) + \epsilon
\end{equation}
with an unknown mean zero error term $\epsilon$ and treating the binary tensor $\ten{X}$ as continuous. Decomposing the linear model coefficients into blocks of $4$ gives per piece coefficients $\mat{b}_{\mathrm{piece}}$ which combine with the diagonal blocks $\mat{B}_{\mathrm{piece}}$ of $\mat{B}$ only. Rewriting \eqref{eq:chess-lm} gives
\begin{align*}
    y &= a + \sum_{\mathrm{piece}}\t{(\mat{B}_{\mathrm{piece}}\mat{b}_{\mathrm{piece}})}\vec(\ten{X}_{\mathrm{piece}} - \E\ten{X}_{\mathrm{piece}}) + \epsilon \\
    &= \tilde{a} + \sum_{\mathrm{piece}}\langle
        \mat{B}_{\mathrm{piece}}\mat{b}_{\mathrm{piece}},
        \vec(\ten{X}_{\mathrm{piece}})
    \rangle + \epsilon
\end{align*}
with a new intercept term $\tilde{a}$, which is of no interest to us. Finally, we enforce color symmetry, using known mechanisms from chess engines. Specifically, mirroring the position changes the sign of the score $y$. Here, mirroring reverses the rank (row) order, this is the image in a mirror behind a chess board. Let for every $\mat{C}_{\mathrm{piece}}$ be a $8\times 8$ matrix with elements $(\mat{C}_{\mathrm{piece}})_{i j} = (\mat{B}_{\mathrm{piece}}\mat{b}_{\mathrm{piece}})_{i + 8 (j - 1)}$. And denote with $\mat{M}(\mat{A})$ the matrix mirror operation which reverses the row order of a matrix. Using this new notation allows enforcing this symmetry, leading to the new approximate linear relation
\begin{align*}
    y &= \tilde{a} + \sum_{\mathrm{piece}}\langle
        \mat{C}_{\mathrm{piece}},
        \ten{X}_{\mathrm{piece}}
        \rangle + \epsilon \\
    &\approx \tilde{a} + \sum_{\text{piece type}}\frac{1}{2}\langle
        \mat{C}_{\text{white piece}} - \mat{M}(\mat{C}_{\text{black piece}}),
        \ten{X}_{\text{white piece}} - \mat{M}(\ten{X}_{\text{white piece}})
    \rangle + \epsilon
\end{align*}
If for every piece type ($6$ types, \emph{not} distinguishing between color) holds $\mat{C}_{\text{white piece}} = -\mat{M}(\mat{C}_{\text{black piece}})$, then we have equality. In our case, this is valid given that the estimates $\hat{\mat{C}}_{\mathrm{piece}}$ fulfill this property with a small error. The $6$ matrices $(\mat{C}_{\text{white piece}} - \mat{M}(\mat{C}_{\text{black piece}})) / 2$ are called \emph{piece square tables} (PSQT) which are visualized in \cref{fig:psqt}. The interpretation of those tables is straightforward. A high positive value (blue) means that it is usually good to have a piece of the corresponding type on that square while a high negative value (red) means the opposite. It needs to be considered that the PSQTs are for quiet positions only, which means all pieces are save in the sense that there is no legal capturing moves nor is the king in check.

\begin{figure}[t]
    \centering
    \includegraphics[scale=0.8]{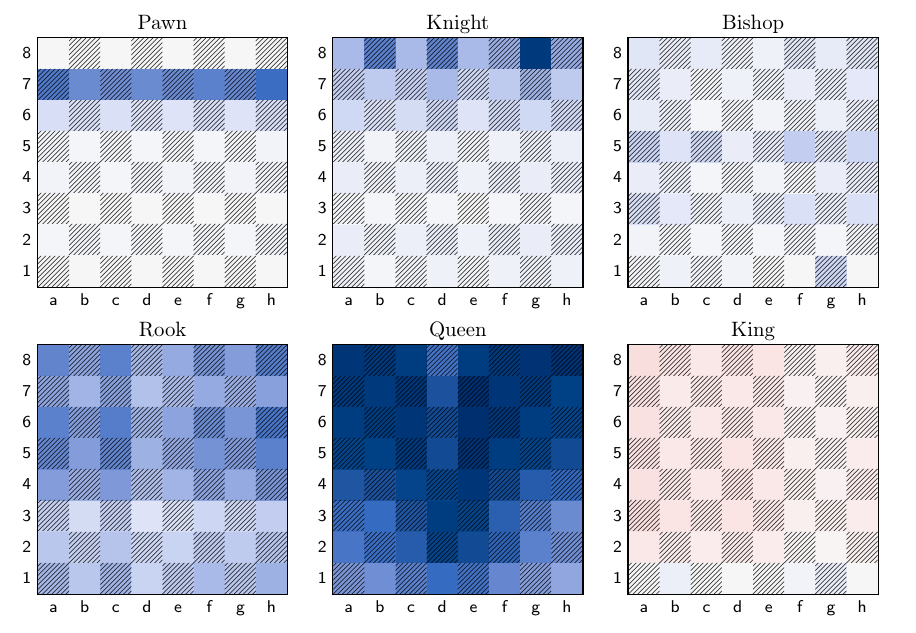}
    \caption{\label{fig:psqt}Extracted PSQTs (piece square tables) from the chess example GMLM reduction.}
\end{figure}

The first visual effect in \cref{fig:psqt} is the dark blue PSQT of the Queen followed by a not-so-dark Rook PSQT. This indicates that the Queen, followed by the Rook, are the most valuable pieces (after the king which is the most valuable, which also implies that assigning value to the king makes no sense). The next two are the Knight and Bishop which have higher value than the Pawns, ignoring the $6$th and $7$th rank as this makes the pawns potential queens. This is the classic piece value order known in chess. 

Next, going over the PSQTs one by one, a few words about the preferred positions for every piece type. The pawn positions are specifically good on the $6$th and especially on the $7$th rank as this threatens a promotion to a Queen (or Knight, Bishop, Rook). The Knight PSQT is a bit surprising, the most likely explanation for the knight being good in the enemy territory is that it got there by capturing an enemy piece for free. A common occurrence in low-rated games is a big chunk of the training data, ranging over all levels. The Bishops seem to have no specific preferred placement, only a slightly higher overall value than pawns, excluding pawns imminent for a promotion. Continuing with the rooks, we see that the rook is a good attacking piece, indicated by a save rook infiltration. \footnote{Rook infiltration is a strategic concept in chess that involves skillfully maneuvering your rook to penetrate deep into your opponent's territory.} The Queen is powerful almost everywhere, only the outer back rank squares (lower left and right) tend to reduce her value. This is rooted in the queen's presence there being a sign for being chased by enemy pieces. Leading to a lot of squares being controlled by the enemy hindering one own movement. Finally, given the goal of the game is to checkmate the king, a safe position for the king is very valuable. This is seen by the back rank (rank $1$) being the only non-penalized squares. Furthermore, the safest squares are the castling \footnote{Castling is a maneuver that combines king safety with rook activation.} target squares ($g1$ and $c1$) as well as the $b1$ square. Shifting the king over to $b1$ is quite common to protect the $a2$ pawn so that the entire pawn shield in front of the king is protected.

The results of our analysis in the previous paragraph agree with the configuration of the chess board most associated with observed chess game outcomes. This arrangement also aligns with the understanding of human chess players of an average configuration at any moment during the game.

\section{Discussion}\label{sec:discussion}

In this paper, we propose a generalized multi-linear model formulation for the inverse conditional distribution of a tensor-valued predictor given a response and derive a multi-linear sufficient reduction for the corresponding forward regression/classification problem. We also propose estimators for the sufficient reduction and show they are consistent and asymptotically normal. Obtaining the asymptotic results required leveraging manifolds as a basis for resolving the issue of unidentifiable parameters. This in turn led to an even more flexible modeling framework, which allows building complex and potentially problem-specific parameter spaces that incorporate additional domain-specific knowledge into the model.

We allude to this feature of our approach in \cref{sec:matrix-manifolds}, where we also tabulate different matrix manifolds that can be used as building blocks $\manifold{B}_k$ and $\manifold{O}_k$ of the parameter space in \cref{tab:matrix-manifolds}. For example, our formulation can easily accommodate longitudinal data tabulated in matrix format, where the rows are covariates and the columns are consecutive time points with discrete AR($k$) dependence structure.

Our multi-linear Ising model can be thought of as the extension of the Ising model-based approach of \cite{ChengEtAl2014}, where a $q$-dimensional binary vector is regressed on a $p$-dimensional continuous vector. Yet, our model leverages the inherent structural information of the tensor-valued covariates by assuming separable first and second moments. By doing so, it bypasses requiring sparsity assumptions or penalization, despite the tensor high-dimensional nature of the data. Moreover, it can accommodate a mixture of continuous and binary tensor-valued predictors, which is the subject of future work.

A special case of the Ising model is the one-parameter Ising model without an external field. Given our theory, it is possible to represent this special case as well by constructing a one-dimensional parameter manifold. More precisely, if $\mat{A}$ is a fixed symmetric $p\times p$ matrix with zero main diagonal elements, the parameter space for the one-dimensional Ising model has the form $\Theta = \{\mat{0}\}\times \{ \theta\mat{A} + 0.5\mat{I}_p : \theta\in\mathbb{R} \}$. As a result (a) the model is a vector-valued multi-variate binary model for predictors $\mat{X}\in\{0, 1\}^p$, and (b) the predictors $\mat{X}$ are independent of the response $y$. In the unsupervised setting, the usual question is to determine the single parameter $\theta$. This is the model considered by \cite{XuMukherjee2023}, whose finding that the asymptotic distribution of the MLE depends on the true value of the parameter and is not necessarily normal, at first glance, may seem contradictory to our asymptotic result that the MLE $\hat{\theta}_n$ is asymptotically normal. \cite{XuMukherjee2023} answer a different question in that they determine the limiting experiment distribution \cite[][Ch~9]{vanderVaart1998}, where they let the dimension $p$ of the binary vector-valued predictors go to infinity ($p\to\infty$). We, on the other hand, let the sample size; i.e., the number of observations go to infinity ($n\to\infty$).

An additional powerful extension of our model involves considering a sum of separable Kronecker predictors. This is motivated by the equivalence of a Kronecker product to a rank $1$ tensor. By allowing a sum of a few separable Kronecker predictors, we remove the implicit rank $1$ constraint. However, if this extension is to be applied to the SDR setting, as in this paper, it is crucial to ensure that the sum of Kronecker products form a parameter manifold.

\printbibliography[title={References}]

\newpage
\appendix

\section{Proofs}\label{app:proofs}

\begin{proof}[Proof of \cref{thm:sdr}]\label{proof:sdr}
    A direct implication of Theorem~1 from \cite{BuraDuarteForzani2016} is that, under the exponential family \eqref{eq:quadratic-exp-fam} with natural statistic \eqref{eq:t-stat},
    \begin{displaymath}
        \t{\mat{\alpha}}(\mat{t}(\ten{X}) - \E\mat{t}(\ten{X}))
    \end{displaymath}
    is a sufficient reduction, where $\mat{\alpha}\in\mathbb{R}^{(p + d)\times q}$ with $\Span(\mat{\alpha}) = \Span(\{\mat{\eta}_y - \E_{Y}\mat{\eta}_Y : y\in\mathcal{S}_Y\})$. Since $\E_Y\ten{F}_Y = 0$, $\E_Y\mat{\eta}_{1 Y} = \E[\vec\overline{\ten{\eta}} - \mat{B}\vec\ten{F}_Y] = \vec\overline{\ten{\eta}}$. Thus,
    \begin{displaymath}
        \mat{\eta}_y - \E_{Y}\mat{\eta}_Y = \begin{pmatrix}
            \mat{\eta}_{1 y} - \E_{Y}\mat{\eta}_{1 Y} \\
            \mat{\eta}_{2} - \E_{Y}\mat{\eta}_{2}
        \end{pmatrix} = \begin{pmatrix}
            \mat{B}\vec\ten{F}_y \\
            \mat{0}
        \end{pmatrix}.
    \end{displaymath}
as $\mat{\eta}_{2}$ does not depend on $y$. 
The set $\{ \vec{\ten{F}_y} : y\in\mathcal{S}_y \}$ is a subset of $\mathbb{R}^q$. Therefore,
    \begin{displaymath}
        \Span\left(\{\mat{\eta}_y - \E_{Y}\mat{\eta}_Y : y\in\mathcal{S}_Y\}\right) = \Span\left(\left\{\begin{pmatrix}
            \mat{B}\vec\ten{F}_Y \\ \mat{0}
        \end{pmatrix} : y\in\mathcal{S}_Y \right\}\right)
        \subseteq
        \Span\begin{pmatrix}
            \mat{B} \\ \mat{0}
        \end{pmatrix},
    \end{displaymath}
which obtains that 
    \begin{displaymath}
        \t{\begin{pmatrix}
            \mat{B} \\ \mat{0}
        \end{pmatrix}}(\mat{t}(\ten{X}) - \E\mat{t}(\ten{X}))
        =
        \t{\mat{B}}\vec(\ten{X} - \E\ten{X})
        = \vec\Bigl(\ten{F}_y\mlm_{k = 1}^{r}\mat{\beta}_k\Bigr)
    \end{displaymath}
is also a sufficient reduction, though not necessarily minimal, using $\mat{B} = \bigkron_{k = 1}^{r}\mat{\beta}_k$. When the exponential family is full rank, which in our setting amounts to all $\mat{\beta}_j$ being full rank matrices, $j=1,\ldots,r$, then Theorem~1 from \cite{BuraDuarteForzani2016} also obtains the minimality of the reduction.
\end{proof}

\begin{proof}[Proof of \cref{thm:grad}]\label{proof:grad}
    We first note that for any exponential family with density \eqref{eq:quad-density} the term $b(\mat{\eta}_{y_i})$ differentiated with respect to the natural parameter $\mat{\eta}_{y_i}$ is the expectation of the statistic $\mat{t}(\ten{X})$ given $Y = y_i$. In our case we get $\nabla_{\mat{\eta}_{y_i}}b = (\nabla_{\mat{\eta}_{1{y_i}}}b, \nabla_{\mat{\eta}_2}b)$ with components
    \begin{displaymath}
        \nabla_{\mat{\eta}_{1{y_i}}}b
            = \E[\mat{t}_1(\ten{X})\mid Y = y_i]
            = \vec\E[\ten{X}\mid Y = y_i]
            = \vec\ten{g}_1(\mat{\eta}_{y_i})
    \end{displaymath}
    and
    \begin{align*}
        \nabla_{\mat{\eta}_{2}}b
           &= \E[\mat{t}_2(\ten{X})\mid Y = y_i]
            = \E[\mat{T}_2\vech((\vec\ten{X})\t{(\vec\ten{X})})\mid Y = y_i] \\
           &= \E[\mat{T}_2\pinv{\mat{D}_p}\vec(\ten{X}\circ\ten{X})\mid Y = y_i]
            = \mat{T}_2\pinv{\mat{D}_p}\vec\ten{g}_2(\mat{\eta}_{y_i}).
    \end{align*}
    The gradients are related to their derivatives by transposition, $\nabla_{\mat{\eta}_{1{y_i}}}b = \t{\D b(\mat{\eta}_{1{y_i}})}$ and $\nabla_{\mat{\eta}_2}b = \t{\D b(\mat{\eta}_2)}$.
    Next, we provide the differentials of the natural parameter components from \eqref{eq:eta1} and \eqref{eq:eta2} in a quite direct form, without any further ``simplifications,'' because the down-stream computations will not benefit from re-expressing the following
    \begin{align*}
        \d\mat{\eta}_{1{y_i}}(\overline{\ten{\eta}})
            &= \d\vec{\overline{\ten{\eta}}}, \\
        \d\mat{\eta}_{1{y_i}}(\mat{\beta}_j)
            &= \vec\Bigl( \ten{F}_{y_i}\mlm_{\substack{k = 1\\k\neq j}}^{r}\mat{\beta}_k\times_j\d\mat{\beta}_j \Bigr), \\
        \d\mat{\eta}_2(\mat{\Omega}_j)
            &= \t{(\pinv{(\mat{T}_2\pinv{\mat{D}_p})})}\vec(c\,\d\mat{\Omega}) \\
            &= c\t{(\pinv{(\mat{T}_2\pinv{\mat{D}_p})})}\vec\Bigl(\,\bigkron_{k = r}^{j + 1}\mat{\Omega}_k\otimes\d\mat{\Omega}_j\otimes\bigkron_{l = j - 1}^{1}\mat{\Omega}_l \Bigr).
    \end{align*}
    All other combinations, namely $\d\mat{\eta}_{1{y_i}}(\mat{\Omega}_j)$, $\d\mat{\eta}_2(\overline{\ten{\eta}})$ and $\d\mat{\eta}_2(\mat{\beta}_j)$, are zero.
    Continuing with the partial differentials of $l_n$ from \eqref{eq:log-likelihood}
    \begin{multline*}
        \d l_n(\overline{\ten{\eta}})
            = \sum_{i = 1}^{n} (\langle \d\overline{\ten{\eta}}, \ten{X}_i \rangle - \D b(\mat{\eta}_{1{y_i}})\d\mat{\eta}_{1{y_i}}(\overline{\ten{\eta}}))
            = \sum_{i = 1}^{n} \t{(\vec{\ten{X}_i} - \vec\ten{g}_1(\mat{\eta}_{y_i}))}\d\vec{\overline{\ten{\eta}}} \\
            = \t{(\d\vec{\overline{\ten{\eta}}})}\vec\sum_{i = 1}^{n} (\ten{X}_i - \ten{g}_1(\mat{\eta}_{y_i})).
    \end{multline*}
    For every $j = 1, ..., r$ we get the differentials
    \begin{multline*}
        \d l_n(\mat{\beta}_j)
            = \sum_{i = 1}^{n} \biggl(\Bigl\langle \ten{F}_{y_i}\mlm_{\substack{k = 1\\k\neq j}}^{r}\mat{\beta}_k\times_j\d\mat{\beta}_j, \ten{X}_i \Bigr\rangle - \D b(\mat{\eta}_{1{y_i}})\d\mat{\eta}_{1{y_i}}(\mat{\beta}_j)\biggr) \\
            = \sum_{i = 1}^{n} \Bigl\langle \ten{F}_{y_i}\mlm_{\substack{k = 1\\k\neq j}}^{r}\mat{\beta}_k\times_j\d\mat{\beta}_j, \ten{X}_i - \ten{g}_1(\mat{\eta}_{y_i}) \Bigr\rangle
            = \sum_{i = 1}^{n} \tr\biggl( \d\mat{\beta}_j\Bigl(\ten{F}_{y_i}\mlm_{\substack{k = 1\\k\neq j}}^{r}\mat{\beta}_k\Bigr)_{(j)} \t{(\ten{X}_i - \ten{g}_1(\mat{\eta}_{y_i}))_{(j)}} \biggr) \\
            = \t{(\d\vec{\mat{\beta}_j})}\vec\sum_{i = 1}^{n} (\ten{X}_i - \ten{g}_1(\mat{\eta}_{y_i}))_{(j)} \t{\Bigl(\ten{F}_{y_i}\mlm_{\substack{k = 1\\k\neq j}}^{r}\mat{\beta}_k\Bigr)_{(j)}}
    \end{multline*}
    as well as
    {\allowdisplaybreaks
    \begin{align*}
        \d l_n(\mat{\Omega}_j)
            &= \sum_{i = 1}^{n} \biggl( c\Bigl\langle \ten{X}_i\mlm_{\substack{k = 1\\k\neq j}}^{r}\mat{\Omega}_k\times_j\d\mat{\Omega}_j, \ten{X}_i \Bigr\rangle - \D b(\mat{\eta}_2)\d\mat{\eta}_2(\mat{\Omega}_j) \biggr) \\
            &= c\sum_{i = 1}^{n} \biggl( \Bigl\langle \ten{X}_i\mlm_{\substack{k = 1\\k\neq j}}^{r}\mat{\Omega}_k\times_j\d\mat{\Omega}_j, \ten{X}_i \Bigr\rangle \\
                &\qquad\qquad - \t{(\mat{T}_2\pinv{\mat{D}_p}\vec\ten{g}_2(\mat{\eta}_{y_i}))}\t{(\pinv{(\mat{T}_2\pinv{\mat{D}_p})})}\vec\Bigl(\,\bigkron_{k = r}^{j + 1}\mat{\Omega}_k\otimes\d\mat{\Omega}_j\otimes\bigkron_{l = j - 1}^{1}\mat{\Omega}_l \Bigr) \biggr) \\
            &= c\sum_{i = 1}^{n} \biggl( \Bigl\langle \ten{X}_i\mlm_{\substack{k = 1\\k\neq j}}^{r}\mat{\Omega}_k\times_j\d\mat{\Omega}_j, \ten{X}_i \Bigr\rangle - \t{(\vec\ten{G}_2(\mat{\eta}_{y_i}))}\vec\Bigl(\,\bigkron_{k = r}^{j + 1}\mat{\Omega}_k\otimes\d\mat{\Omega}_j\otimes\bigkron_{l = j - 1}^{1}\mat{\Omega}_l \Bigr) \biggr) \\
            &= c\sum_{i = 1}^{n} \biggl( \t{\vec(\ten{X}_i\circ\ten{X}_i - \ten{G}_2(\mat{\eta}_{y_i}))}\vec\Bigl(\,\bigkron_{k = r}^{j + 1}\mat{\Omega}_k\otimes\d\mat{\Omega}_j\otimes\bigkron_{l = j - 1}^{1}\mat{\Omega}_l \Bigr) \biggr) \\
            &= c\sum_{i = 1}^{n} \K(\ten{X}_i\circ\ten{X}_i - \ten{G}_2(\mat{\eta}_{y_i}))\mlm_{\substack{k = 1\\k\neq j}}^{r}\t{(\vec{\mat{\Omega}_k})}\times_j\t{(\d\vec{\mat{\Omega}_j})} \\
            &= c\t{(\d\vec{\mat{\Omega}_j})}\sum_{i = 1}^{n} \Bigl((\ten{X}_i\otimes\ten{X}_i - \K(\ten{G}_2(\mat{\eta}_{y_i})))\mlm_{\substack{k = 1\\k\neq j}}^{r}\t{(\vec{\mat{\Omega}_k})}\Bigr)_{(j)} \\
            &= c\t{(\d\vec{\mat{\Omega}_j})}\vec\sum_{i = 1}^{n} (\ten{X}_i\otimes\ten{X}_i - \K(\ten{G}_2(\mat{\eta}_{y_i})))\mlm_{\substack{k = 1\\k\neq j}}^{r}\t{(\vec{\mat{\Omega}_k})}
    \end{align*}}
    Now, applying the identity $\d \ten{A}(\ten{B}) = \t{(\d\vec{\ten{B}})}\nabla_{\ten{B}}\ten{A}$ gives the required partial gradients.

    Finally, if $\mat{T}_2$ is the identify matrix, then
    \begin{displaymath}
        \vec{\ten{G}_2(\mat{\eta}_y)} = \pinv{(\mat{T}_2\pinv{\mat{D}_p})}\mat{T}_2\pinv{\mat{D}_p}\vec{\ten{g}_2(\mat{\eta}_y)}
        = \mat{D}_p\pinv{\mat{D}_p}\vec{\ten{g}_2(\mat{\eta}_y)}
        = \vec{\ten{g}_2(\mat{\eta}_y)}
    \end{displaymath}
    where the last equality holds because $\mat{N}_p = \mat{D}_p\pinv{\mat{D}_p}$ is the symmetrizer matrix \cite[][Ch.~11]{AbadirMagnus2005}. The symmetrizer matrix $\mat{N}_p$ satisfies $\mat{N}_p\vec{\mat{A}} = \vec{\mat{A}}$ if $\mat{A} = \t{\mat{A}}$, and
    \begin{displaymath}
        \vec{\ten{g}_2(\mat{\eta}_y)} = \vec\E[\ten{X}\circ\ten{X}\mid Y = y] = \vec\E[(\vec{\ten{X}})\t{(\vec{\ten{X}})}\mid Y = y]
    \end{displaymath}
    is the vectorization of a symmetric matrix.
\end{proof}

\begin{proof}[Proof of \cref{thm:kron-manifolds}]\label{proof:kron-manifolds}
    We start by considering the first case and assume that $\manifold{B}$ is spherical with radius $1$ w.l.o.g. We equip $\manifold{K} = \{ \mat{A}\otimes \mat{B} : \mat{A}\in\manifold{A}, \mat{B}\in\manifold{B} \}\subset\mathbb{R}^{p_1 p_2\times q_1 q_2}$ with the subspace topology.\footnote{Given a topological space $(\manifold{M}, \mathcal{T})$ and a subset $\mathcal{S}\subseteq\manifold{M}$ the \emph{subspace topology} on $\manifold{S}$ induced by $\manifold{M}$ consists of all open sets in $\manifold{M}$ intersected with $\manifold{S}$, that is $\{ O\cap\manifold{S} : O\in\mathcal{T} \}$ \cite[]{Lee2012,Lee2018,Kaltenbaeck2021}.} Define the hemispheres $H_i^{+} = \{ \mat{B}\in\manifold{B} : (\vec{\mat{B}})_i > 0 \}$ and $H_i^{-} = \{ \mat{B}\in\manifold{B} : (\vec{\mat{B}})_i < 0 \}$ for $i = 1, ..., p_2 q_2$. The hemispheres are an open cover of $\manifold{B}$ with respect to the subspace topology. Define for every $H_i^{\pm}$, where $\pm$ is a placeholder for ether $+$ or $-$, the function
    \begin{displaymath}
        f_{H_i^{\pm}} : \manifold{A}\times H_i^{\pm}\to\mathbb{R}^{p_1 p_2\times q_1 q_2}
            : (\mat{A}, \mat{B})\mapsto \mat{A}\otimes \mat{B}
    \end{displaymath}
    which is smooth. With the spherical property of $\manifold{B}$ the relation $\|\mat{A}\otimes \mat{B}\|_F = \|\mat{A}\|_F$ for all $\mat{A}\otimes \mat{B}\in\manifold{K}$ ensures that the function $f_{H_i^{\pm}}$, constrained to its image, is bijective with inverse function (identifying $\mathbb{R}^{p\times q}$ with $\mathbb{R}^{p q}$) given by
    \begin{displaymath}
        f_{H_i^{\pm}}^{-1} = \begin{cases}
            f_{H_i^{\pm}}(\manifold{A}\times H_i^{\pm})\to\manifold{A}\times H_i^{\pm} \\
            \mat{C}\mapsto \left(\pm\frac{\|\mat{C}\|_F}{\|\mat{R}(\mat{C})\mat{e}_i\|_2}\mat{R}(\mat{C})\mat{e}_i, \pm\frac{1}{\|\mat{C}\|_F\|\mat{R}(\mat{C})\mat{e}_i\|_2}\mat{R}(\mat{C})\t{\mat{R}(\mat{C})}\mat{e}_i\right)
        \end{cases}
    \end{displaymath}
    where $\pm$ is $+$ for a ``positive'' hemisphere $H_i^{+}$ and $-$ otherwise, $\mat{e}_i\in\mathbb{R}^{p_2 q_2}$ is the $i$th unit vector and $\mat{R}(\mat{C})$ is a ``reshaping'' permutation\footnote{Relating to $\K$ the operation $\mat{R}$ is basically its inverse as $\K(\mat{A}\circ\mat{B}) = \mat{A}\otimes\mat{B}$ with a mismatch in the shapes only.} which acts on Kronecker products as $\mat{R}(\mat{A}\otimes \mat{B}) = (\vec{\mat{A}})\t{(\vec{\mat{B}})}$. This makes $f_{H_i^{\pm}}^{-1}$ a combination of smooth functions ($\mat{0}$ is excluded from $\manifold{A}, \manifold{B}$ guarding against division by zero) and as such it is also smooth. This ensures that $f_{H_i^{\pm}} : \manifold{A}\times {H_i^{\pm}}\to f_{H_i^{\pm}}(\manifold{A}\times {H_i^{\pm}})$ is a diffeomorphism.

    Next, we construct an atlas\footnote{A collection of charts $\{ \varphi_i : i\in I \}$ with index set $I$ of a manifold $\manifold{A}$ is called an \emph{atlas} if the pre-images of the charts $\varphi_i$ cover the entire manifold $\manifold{A}$.} for $\manifold{K}$ which is equipped with the subspace topology. Let $(\varphi_j, U_j)_{j\in J}$ be a atlas of $\manifold{A}\times\manifold{B}$. Such an atlas exists and admits a unique smooth structure as both $\manifold{A}, \manifold{B}$ are embedded manifolds from which we take the product manifold. The images of the coordinate domains $f_H(U_j)$ are open in $\manifold{K}$, since $f_H$ is a diffeomorphism, with the corresponding coordinate maps
    \begin{displaymath}
        \phi_{H_i^{\pm},j} : f_{H_i^{\pm}}(U_j)\to \varphi_j(U_j)
            : \mat{C}\mapsto \varphi_j(f_{H_i^{\pm}}^{-1}(\mat{C})).
    \end{displaymath}
    By construction the set $\{ \phi_{H_i^{\pm},j} : i = 1, ..., p_2 q_2, \pm\in\{+, -\}, j\in J \}$ is an atlas if the charts are compatible. This means we need to check if the transition maps are diffeomorphisms, let $(\phi_{H, j}, V_j), (\phi_{\widetilde{H}, k}, V_k)$ be two arbitrary charts from our atlas, then the transition map $\phi_{\widetilde{H}, k}\circ\phi_{H,j}^{-1}:\phi_{H,j}^{-1}(V_j\cap V_k)\to\phi_{\widetilde{H},k}^{-1}(V_j\cap V_k)$ has the form
    \begin{displaymath}
        \phi_{\widetilde{H}, k}\circ\phi_{H,j}^{-1}
            = \varphi_k\circ f_{\widetilde{H}}^{-1}\circ f_{H}\circ\varphi_{j}^{-1}
            = \varphi_k\circ (\pm\mathrm{id})\circ\varphi_{j}^{-1}
    \end{displaymath}
    where $\pm$ depends on $H, \widetilde{H}$ being of the same ``sign'' and $\mathrm{id}$ is the identity. We conclude that the charts are compatible, which makes the constructed set of charts an atlas. With that we have shown the topological manifold $\manifold{K}$ with the subspace topology admit a smooth atlas that makes it an embedded smooth manifold with dimension equal to the dimension of the product topology $\manifold{A}\times\manifold{B}$; that is, $d = \dim\manifold{A} + \dim\manifold{B}$.

    It remains to show that the cone condition also admits a smooth manifold. $\manifold{K} = \{ \mat{A}\otimes \mat{B} : \mat{A}\in\manifold{A}, \mat{B}\in\widetilde{B} \}$, where $\widetilde{B} = \{ \mat{B}\in\manifold{B} : \|\mat{B}\|_F = 1 \}$, holds if both $\manifold{A}, \manifold{B}$ are cones. Since $g:\manifold{B}\to\mathbb{R}:\mat{B}\mapsto \|\mat{B}\|_F$ is continuous on $\manifold{B}$ with full rank $1$ everywhere, $\widetilde{\manifold{B}} = g^{-1}(1)$ is a $\dim{\manifold{B}} - 1$ dimensional embedded submanifold of $\manifold{B}$. An application of the spherical case proves the cone case.
\end{proof}

\begin{proof}[Proof of \cref{thm:param-manifold}]
    An application of \cref{thm:kron-manifolds} ensures that $\manifold{K}_{\mat{B}}$ and $\manifold{K}_{\mat{\Omega}}$ are embedded submanifolds.

    With $\mat{T}_2$ being a $d\times p(p + 1) / 2$ full rank matrix and the duplication matrix $\mat{D}_p$ is full rank of dimension $p(p + 1) / 2 \times p^2$ we have $\mat{T}_2\pinv{\mat{D}_p}$ to be $d\times p^2$ of full rank. This means that $\mat{P} = \pinv{(\mat{T}_2\pinv{\mat{D}_p})}\mat{T}_2\pinv{\mat{D}_p}$ is a $p^2\times p^2$ projection of rank $d$ and $\mat{I}_{p^2} - \mat{P}$ is then a projection of rank $p^2 - d$. This leads to
    \begin{displaymath}
        \manifold{CK}_{\mat{\Omega}}
            = \{ \mat{\Omega}\in\manifold{K}_{\mat{\Omega}} : (\mat{I}_{p^2} - \mat{P})\vec{\mat{\Omega}} = \mat{0} \}.
    \end{displaymath}
    To show that $\manifold{CK}_{\mat{\Omega}}$ is an embedded submanifold of $\manifold{K}_{\mat{\Omega}}$ we apply the ``Constant-Rank Level Set Theorem'' \cite[][Theorem~5.12]{Lee2012} which states (slightly adapted) the following; Let $\manifold{A}$, $\manifold{B}$ be smooth manifolds and $F:\manifold{A}\to\manifold{B}$ a smooth map such that $\nabla_{\mat{a}} F$ has constant matrix rank for all $\mat{a}\in\manifold{A}$. Then, for every $\mat{b}\in F(\mat{A})\subseteq\manifold{B}$, the preimage $F^{-1}(\{ \mat{b} \})$ is a smooth embedded submanifold of $\manifold{A}$.

    In our setting, we have $F:\manifold{K}_{\mat{\Omega}}\to\mathbb{R}^{p^2}$ defined as $F(\mat{\Omega}) = (\mat{I}_{p^2} - \mat{P})\vec{\mat{\Omega}}$ with gradient $\nabla_{\mat{\Omega}} F = \mat{I}_{p^2} - \mat{P}$ of constant rank. Therefore, $F^{-1}(\{\mat{0}\}) = \manifold{CK}_{\mat{\Omega}}$ is an embedded submanifold of $\manifold{K}_{\mat{\Omega}}$.

    Finally, the finite product manifold of embedded submanifolds is embedded in the finite product space of their ambient spaces, that is $\Theta = \mathbb{R}^p \times \manifold{K}_{\mat{B}}\times\manifold{CK}_{\mat{\Omega}} \subset \mathbb{R}^p\times\mathbb{R}^{p\times q}\times\mathbb{R}^{p\times p}$ is embedded.
\end{proof}

\begin{proof}[Proof of \cref{thm:exists-strong-M-estimator-on-subsets}]
    Let $\hat{\mat{\xi}}_n$ be a (weak/strong) M-estimator for the unconstrained problem. This gives by definition, in any case, that
    \begin{displaymath}
        \sup_{\mat{\xi}\in\Xi} M_n(\mat{\xi}) \leq M_n(\hat{\mat{\xi}}_n) + o_P(1).
    \end{displaymath}
    Since $\emptyset\neq\Theta\subseteq\Xi$, $\sup_{\mat{\theta}\in\Theta} M_n(\mat{\theta}) \leq \sup_{\mat{\xi}\in\Xi} M_n(\mat{\xi})$ and with $M_n(\mat{\xi}) < \infty$ for any $\mat{\xi}\in\Xi$
    \begin{displaymath}
        P\Bigl( \sup_{\mat{\theta}\in\Theta} M_n(\mat{\theta}) < \infty \Bigr)
        \geq
        P\Bigl( \sup_{\mat{\xi}\in\Xi} M_n(\mat{\xi}) < \infty \Bigr)
        \xrightarrow{n\to\infty}
        1.
    \end{displaymath}
    If $\sup_{\mat{\theta}\in\Theta} M_n(\mat{\theta}) < \infty$, then, for any $0 < \epsilon_n$ there exists $\hat{\mat{\theta}}_n\in\Theta$ such that $\sup_{\mat{\theta}\in\Theta} M_n(\mat{\theta}) - \epsilon_n \leq M_n(\hat{\mat{\theta}}_n)$. Therefore, we can choose $\epsilon_n\in o(n^{-1})$, which yields
    \begin{displaymath}
        P\Bigl( M_n(\hat{\mat{\theta}}_n) \geq \sup_{\mat{\theta}\in\Theta} M_n(\mat{\theta}) - o(n^{-1}) \Bigr)
        \geq
        P\Bigl( \sup_{\mat{\theta}\in\Theta} M_n(\mat{\theta}) < \infty \Bigr)
        \xrightarrow{n\to\infty}
        1.
    \end{displaymath}
    The last statement is equivalent to
    \begin{displaymath}
        M_n(\hat{\mat{\theta}}_n) \geq \sup_{\mat{\theta}\in\Theta} M_n(\mat{\theta}) - o_P(n^{-1}),
    \end{displaymath}
    which is the definition of $\hat{\mat{\theta}}_n$ being a strong M-estimator over $\Theta$.
\end{proof}

\begin{proof}[Proof of \cref{thm:M-estimator-consistency-on-subsets}]
    The proof follows the proof of Proposition~2.4 in \cite{BuraEtAl2018} with the same assumptions, except that we only require $\Theta$ to be a subset of $\Xi$. This is accounted for by replacing Lemma~2.3 in \cite{BuraEtAl2018} with \cref{thm:exists-strong-M-estimator-on-subsets} to obtain the existence of a strong M-estimator on $\Theta$.
\end{proof}

\begin{proof}[Proof of \cref{thm:M-estimator-asym-normal-on-manifolds}]
    Let $\varphi:U\to\varphi(U)$ be a coordinate chart\footnote{By \cref{def:manifold}, the chart $\varphi : U\to\varphi(U)$ is bi-continuous, infinitely often continuously differentiable, and has a continuously differentiable inverse $\varphi^{-1} : \varphi(U) \to U$. Furthermore, the domain $U$ is open with respect to the trace topology on $\Theta$, which means that there exists an open set $O\subseteq\mathbb{R}^p$ such that $U = \Theta\cap O$.} with $\mat{\theta}_0\in U\subseteq\Theta$. As $\varphi$ is continuous, $\varphi(\hat{\mat{\theta}}_n)\xrightarrow{p}\varphi(\mat{\theta}_0)$ by the continuous mapping theorem on metric spaces \cite[][Theorem~18.11]{vanderVaart1998}, which implies $P(\varphi(\hat{\mat{\theta}}_n)\in\varphi(U))\xrightarrow{n\to\infty}1$.

    The next step is to apply Theorem~5.23 in \cite{vanderVaart1998} to $\hat{\mat{s}}_n = \varphi(\hat{\mat{\theta}}_n)$. Therefore, assume that $\hat{\mat{s}}_n\in\varphi(U)$. Denote with $\mat{s} = \varphi(\mat{\theta})\in\varphi(U)\subseteq\mathbb{R}^d$ the coordinates of the parameter $\mat{\theta}\in U\subseteq\Theta$ of the $d = \dim(\Theta)$ dimensional manifold $\Theta\subseteq\mathbb{R}^p$. Since $\varphi : U\to\varphi(U)$ is bijective, we can express $m_{\mat{\theta}}$ in terms of $\mat{s} = \varphi(\mat{\theta})$ as $m_{\mat{\theta}} = m_{\varphi^{-1}(\mat{s})}$ for every $\mat{\theta}\in U$. Furthermore, we let
    \begin{displaymath}
        M(\mat{\theta}) = \E[m_{\mat{\theta}}(Z)] \qquad\text{and}\qquad M_{\varphi}(\mat{s}) = \E[m_{\varphi^{-1}(\mat{s})}(Z)] = M(\varphi^{-1}(\mat{s})).
    \end{displaymath}

    \begin{figure}[ht!]
        \centering
        \includegraphics{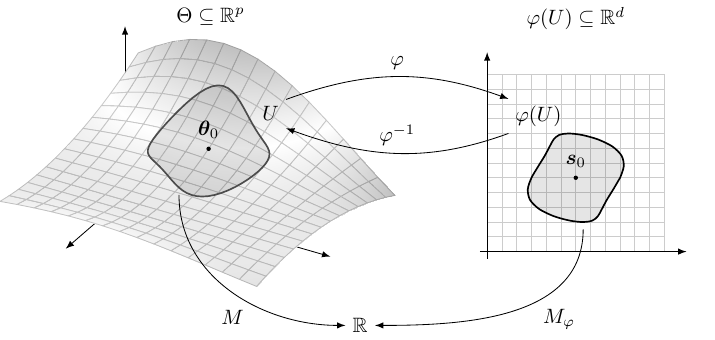}
        \caption{\label{fig:proof:M-estimator-asym-normal-on-manifolds}Depiction of the notation used in the proof of \cref{thm:M-estimator-asym-normal-on-manifolds}. Example with $p = 3$ and $d = \dim(\Theta) = 2$.}
    \end{figure}

    By assumption, the function $M(\mat{\theta})$ is twice continuously differentiable in a neighborhood\footnote{A set $N$ is called a neighborhood of $u$ if there exists an open set $O$ such that $u\in O\subseteq N$.} of $\mat{\theta}_0$. Without loss of generality, we can assume that $U$ is contained in that neighborhood. Then, using the chain rule, we compute the gradient of $M_{\varphi}$ at $\mat{s}_0$ to be $\mat{0}$,
    \begin{displaymath}
        \nabla M_{\varphi}(\mat{s}_0) = {\nabla\varphi^{-1}(\mat{s}_0)}{\nabla M(\varphi^{-1}(\mat{s}_0))} = {\nabla\varphi^{-1}(\mat{s}_0)}{\nabla M(\mat{\theta}_0)} = {\nabla\varphi^{-1}(\mat{s}_0)}\mat{0} = \mat{0}
    \end{displaymath}
    because $\mat{\theta}_0 = \varphi^{-1}(\mat{s}_0)$ is a maximizer of $M$. The second-derivative of $M$, evaluated at $\mat{s}_0 = \varphi(\mat{\theta}_0)$ and using $\nabla M_{\varphi}(\mat{s}_0) = \mat{0}$, is
    \begin{displaymath}
        \nabla^2 M_{\varphi}(\mat{s}_0)
        = \nabla\varphi^{-1}(\mat{s}_0)\nabla^2 M(\varphi^{-1}(\mat{s}_0))\t{\nabla\varphi^{-1}(\mat{s}_0)}
        = \nabla\varphi^{-1}(\mat{s}_0)\mat{H}_{\mat{\theta}_0}\t{\nabla\varphi^{-1}(\mat{s}_0)}.
    \end{displaymath}
    This gives the second-order Taylor expansion of $M_{\varphi}$ at $\mat{s}_0$ as
    \begin{displaymath}
        M_{\varphi}(\mat{s}) = M_{\varphi}(\mat{s}_0) + \frac{1}{2}\t{(\mat{s} - \mat{s}_0)} \nabla^2 M_{\varphi}(\mat{s}_0) (\mat{s} - \mat{s}_0) + \mathcal{O}(\|\mat{s} - \mat{s}_0\|^3)
    \end{displaymath}
    We also need to check the local Lipschitz condition of $m_{\varphi^{-1}(\mat{s})}$. Let $V_{\epsilon}(\mat{s}_0)$ be the open $\epsilon$-ball with center $\mat{s}_0$; i.e., $V_{\epsilon}(\mat{s}_0) = \{ \mat{s}\in\mathbb{R}^d : \|\mat{s} - \mat{s}_0\| < \epsilon \}$. Since $\varphi(U)$ contains $\mat{s}_0$, and is open in $\mathbb{R}^d$, there exists an $\epsilon > 0$ such that $V_{\epsilon}(\mat{s}_0)\subseteq\varphi(U)$. The closed $\epsilon/2$ ball $\overline{V}_{\epsilon / 2}(\mat{s}_0)$ is a neighborhood of $\mat{s}_0$ and $\sup_{\mat{s}\in \overline{V}_{\epsilon / 2}(\mat{s}_0)}\|\nabla\varphi^{-1}(\mat{s})\| < \infty$ due to the continuity of $\nabla\varphi^{-1}$ on $\varphi(U)$ with $\overline{V}_{\epsilon / 2}(\mat{s}_0)\subset V_{\epsilon}(\mat{s}_0)\subseteq\varphi(U)$. Then, for almost every $z$ and every $\mat{s}_1 = \varphi(\mat{\theta}_1), \mat{s}_2 = \varphi(\mat{\theta}_2)\in\overline{V}_{\epsilon / 2}(\mat{s}_0)$,
    \begin{multline*}
        | m_{\varphi^{-1}(\mat{s}_1)}(z) - m_{\varphi^{-1}(\mat{s}_2)}(z) |
        = | m_{\mat{\theta}_1}(z) - m_{\mat{\theta}_2}(z) |
        \overset{(a)}{\leq} u(z) \| \mat{\theta}_1 - \mat{\theta}_2 \| \\
        = u(z) \| \varphi^{-1}(\mat{s}_1) - \varphi^{-1}(\mat{s}_2) \|
        \overset{(b)}{\leq} u(z) \sup_{\mat{s}\in \overline{V}_{\epsilon / 2}(\mat{s}_0)}\|\nabla\varphi^{-1}(\mat{s})\| \|\mat{s}_1 - \mat{s}_2\|
        =: v(z) \|\mat{s}_1 - \mat{s}_2\|.
    \end{multline*}
    Here, $(a)$ holds by assumption and $(b)$ is a result of the mean value theorem. Now, $v(z)$ is measurable and square integrable as a scaled version of $u(z)$. Finally, since $\varphi$ is one-to-one, $\hat{\mat{s}}_n = \varphi(\hat{\mat{\theta}}_n)$ is a strong M-estimator for $\mat{s}_0 = \varphi(\mat{\theta}_0)$ of the objective $M_{\varphi}$. We next apply Theorem~5.23 in \cite{vanderVaart1998} to obtain the asymptotic normality of $\hat{\mat{s}}_n$,
    \begin{displaymath}
        \sqrt{n}(\hat{\mat{s}}_n - \mat{s}_0) \xrightarrow{d} \mathcal{N}_{d}(0, \mat{\Sigma}_{\mat{s}_0}),
    \end{displaymath}
    where the $d\times d$ variance-covariance matrix $\mat{\Sigma}_{\mat{s}_0}$ is given by
    \begin{align*}
        \mat{\Sigma}_{\mat{s}_0} &= (\nabla^2 M_{\varphi}(\mat{s}_0))^{-1}\E[\nabla_{\mat{s}} m_{\varphi^{-1}(\mat{s}_0)}(Z)\t{(\nabla_{\mat{s}} m_{\varphi^{-1}(\mat{s}_0)}(Z))}](\nabla^2 M_{\varphi}(\mat{s}_0))^{-1}.
    \end{align*}
    {%
        \def\PP{\mat{\varPhi}_{\mat{\theta}_0}}%
        \def\EE#1#2{\E[\nabla_{#2} m_{#1}(Z)\t{(\nabla_{#2} m_{#1}(Z))}]}%
        An application of the delta method yields
        \begin{displaymath}
            \sqrt{n}(\hat{\mat{\theta}}_n - \mat{\theta}_0)= \sqrt{n}(\varphi^{-1}(\hat{\mat{s}}_n) - \varphi^{-1}(\mat{s}_0))
            \xrightarrow{d} \mathcal{N}_p(0, \t{\nabla\varphi^{-1}(\mat{s}_0)}\mat{\Sigma}_{\mat{s}_0}{\nabla\varphi^{-1}(\mat{s}_0)}).
        \end{displaymath}
        We continue by reexpressing the $p\times p$ asymptotic variance-covariance matrix of $\hat{\mat{\theta}}_n$ in terms of $\mat{\theta}_0$ instead of $\mat{s}_0 = \varphi(\mat{\theta}_0)$. We let $\PP = \t{\nabla\varphi^{-1}(\varphi(\mat{\theta}_0))} = \t{\nabla\varphi^{-1}(\mat{s}_0)}$ and observe that for all $\mat{s}\in\varphi(U)$, the gradient of $\mat{s}\mapsto m_{\varphi^{-1}(\mat{s})}(z)$ evaluated at $\mat{s}_0 = \varphi(\mat{\theta}_0)$ has the form
        \begin{displaymath}
            \nabla_{\mat{s}}m_{\varphi^{-1}(\mat{s}_0)}(z)= \nabla\varphi^{-1}(\mat{s}_0)\nabla_{\mat{\theta}}m_{\mat{\theta}_0}(z)= \t{\PP}\nabla_{\mat{\theta}}m_{\mat{\theta}_0}(z).
        \end{displaymath}
        Then,
        \begin{multline*}
            \t{\nabla\varphi^{-1}(\mat{s}_0)}\mat{\Sigma}_{\mat{s}_0}{\nabla\varphi^{-1}(\mat{s}_0)}= \PP\mat{\Sigma}_{\mat{s}_0}\t{\PP} \\
            = \PP(\nabla^2 M_{\varphi}(\mat{s}_0))^{-1}\EE{\varphi^{-1}(\mat{s}_0)}{}(\nabla^2 M_{\varphi}(\mat{s}_0))^{-1}\t{\PP} \\
            = {\PP}(\t{\PP}\mat{H}_{\mat{\theta}_0}\PP)^{-1}\t{\PP}\EE{\mat{\theta}_0}{\mat{\theta}}{\PP}(\t{\PP}\mat{H}_{\mat{\theta}_0}\PP)^{-1}\t{\PP} \\
            = \mat{\Pi}_{\mat{\theta}_0}\EE{\mat{\theta}_0}{\mat{\theta}}\mat{\Pi}_{\mat{\theta}_0}
        \end{multline*}
        where the last equality holds because $\Span\PP = T_{\mat{\theta}_0}\Theta$, by \cref{def:tangent-space} of the tangent space $T_{\mat{\theta}_0}\Theta$.

        It remains to show that $\mat{\Pi}_{\mat{\theta}_0} = \mat{P}_{\mat{\theta}_0}\pinv{(\t{\mat{P}_{\mat{\theta}_0}}\mat{H}_{\mat{\theta}_0}\mat{P}_{\mat{\theta}_0})}\t{\mat{P}_{\mat{\theta}_0}}$ for any $p\times k$ matrix $\mat{P}_{\mat{\theta}_0}$ such that $k\geq d$ and $\Span{\mat{P}_{\mat{\theta}_0}} = T_{\mat{\theta}_0}\Theta$. This also ensures that the final result is independent of the chosen chart $\varphi$, since the tangent space does not depend on a specific chart. Therefore, let $\PP = {\mat{Q}}{\mat{R}}$ and $\mat{P}_{\mat{\theta}_0} = \widetilde{\mat{Q}}\widetilde{\mat{R}}$ be their thin QR decompositions, respectively. Both $\mat{Q}, \widetilde{\mat{Q}}$ have dimension $p\times d$, $\mat{Q}$ is semi-orthogonal, and $\mat{R}$ is invertible of dimension $d\times d$ while $\widetilde{\mat{R}}$ is a $d\times k$ full row-rank matrix. Since $\mat{Q}$ is semi-orthogonal, the $p\times p$ matrix $\mat{Q}\t{\mat{Q}}$ is an orthogonal projection onto $\Span\mat{Q} = \Span\mat{P}_{\mat{\theta}_0} = T_{\mat{\theta}_0}\Theta$. This allows to express $\mat{P}_{\mat{\theta}_0}$ in terms of $\mat{Q}$ as
        \begin{displaymath}
            \mat{P}_{\mat{\theta}_0} = \mat{Q}\t{\mat{Q}}\mat{P}_{\mat{\theta}_0}
                = \mat{Q}\t{\mat{Q}}\widetilde{\mat{Q}}\widetilde{\mat{R}} =: {\mat{Q}}\mat{M}.
        \end{displaymath}
        From $\Span\mat{Q} = \Span\mat{P}_{\mat{\theta}_0}$, it follows that the $d\times k$ matrix $\mat{M}$ is also of full row-rank. Also, $\mat{M}\pinv{\mat{M}} = \mat{I}_d = \mat{R}\mat{R}^{-1}$ as a property of the Moore-Penrose pseudo inverse with $\mat{M}$ being of full row-rank. Another property of the pseudo inverse is that for matrices $\mat{A}, \mat{B}$, where $\mat{A}$ has full column-rank and $\mat{B}$ has full row-rank, $\pinv{(\mat{A}\mat{B})} = \pinv{\mat{B}}\pinv{\mat{A}}$. This enables the computation
        \begin{multline*}
            \mat{P}_{\mat{\theta}_0}\pinv{(\t{\mat{P}_{\mat{\theta}_0}}\mat{H}_{\mat{\theta}_0}\mat{P}_{\mat{\theta}_0})}\t{\mat{P}_{\mat{\theta}_0}}= \mat{Q} \mat{M} \pinv{\mat{M}} (\t{\mat{Q}} \mat{H}_{\mat{\theta}_0} \mat{Q})^{-1} \t{(\mat{M} \pinv{\mat{M}})} \t{\mat{Q}} \\
            = \mat{Q} {\mat{R}} {\mat{R}}^{-1} (\t{\mat{Q}} \mat{H}_{\mat{\theta}_0} \mat{Q})^{-1} \t{({\mat{R}} {\mat{R}}^{-1})} \t{\mat{Q}}
            = \PP(\t{\PP}\mat{H}_{\mat{\theta}_0}\PP)^{-1}\t{\PP}
            = \mat{\Pi}_{\mat{\theta}_0}.
        \end{multline*}
    }
\end{proof}

Next, we rewrite the log-likelihood \eqref{eq:log-likelihood} in a simpler form to simplify the proof of \cref{thm:asymptotic-normality-gmlm} and to provide the notation to express the regularity conditions of \cref{thm:asymptotic-normality-gmlm} in a compact form.

Rewriting the first natural parameter component $\mat{\eta}_{1y}$ defined in \eqref{eq:eta1-manifold} gives
\begin{align*}
    \mat{\eta}_{1y}
    &= \vec{\overline{\ten{\eta}}} + \mat{B}\vec{\ten{F}_y}= \mat{I}_p\vec{\overline{\ten{\eta}}} + (\t{(\vec{\ten{F}_y})}\otimes\mat{I}_p)\vec{\mat{B}} \\
    &= \begin{pmatrix}
    \mat{I}_p & \t{(\vec{\ten{F}_y})}\otimes\mat{I}_p
    \end{pmatrix}\begin{pmatrix}
    \vec{\overline{\ten{\eta}}} \\
    \vec{\mat{B}}
    \end{pmatrix}.
\end{align*}
For the second natural parameter component $\mat{\eta}_2$, modeled in \eqref{eq:eta2-manifold}, we have
\begin{displaymath}
    \langle \mat{\eta}_2, \mat{T}_2\vech((\vec{\ten{X}})\t{(\vec{\ten{X}})}) \rangle
        = \langle \t{(\mat{T}_2\pinv{\mat{D}_p})}\mat{\eta}_2, \vec(\ten{X}\circ\ten{X}) \rangle
        = \langle c\,\mat{\Omega}, \ten{X}\circ\ten{X} \rangle
\end{displaymath}
which means that
\begin{displaymath}
    c \vec{\mat{\Omega}} = \t{(\mat{T}_2\pinv{\mat{D}_p})}\mat{\eta}_2.
\end{displaymath}
The inverse relation is
\begin{displaymath}
    \mat{\eta}_2 = c\t{(\pinv{(\mat{T}_2\pinv{\mat{D}_p})})}\vec\mat{\Omega} = c\t{(\pinv{(\mat{T}_2\pinv{\mat{D}_p})})}\mat{D}_p\vech\mat{\Omega},
\end{displaymath}
describing the linear relation between $\mat{\eta}_2$ and $\vech{\mat{\Omega}}$. This gives the following relation between $\mat{\eta}_y = (\mat{\eta}_{1y}, \mat{\eta}_2)$ and $\mat{\xi} = (\vec{\overline{\ten{\eta}}}, \vec{\mat{B}}, \vech{\mat{\Omega}})\in\Xi$,
\begin{equation}
    \mat{\eta}_y = \begin{pmatrix}
        \mat{I}_p & \t{(\vec{\ten{F}_y})}\otimes\mat{I}_p & 0 \\
        0 & 0 & c\t{(\pinv{(\mat{T}_2\pinv{\mat{D}_p})})}\mat{D}_p
    \end{pmatrix}\begin{pmatrix}
        \vec{\overline{\ten{\eta}}} \\
        \vec{\mat{B}} \\
        \vech{\mat{\Omega}}
    \end{pmatrix} =: \mat{F}(y)\mat{\xi} \label{eq:eta-to-xi-linear-relation}
\end{equation}
where $\mat{F}(y)$ is a $(p + d)\times p (p + 2 q + 3) / 2$ dimensional matrix-valued function in $y$. Moreover, for every $y$ the matrix $\mat{F}(y)$ is of full rank $p + d$.
The log-likelihood of model \eqref{eq:quad-density} for the unconstrained parameters $\xi\in\Xi$ is
\begin{displaymath}
    l_n(\mat{\xi})
        = \frac{1}{n}\sum_{i = 1}^{n} (\langle \mat{t}(\ten{X}), \mat{\eta}_{y} \rangle - b(\mat{\eta}_y))
        =: \frac{1}{n}\sum_{i = 1}^{n} m_{\mat{\xi}}(Z_i)
\end{displaymath}
where $Z_i = (\ten{X}_i, Y_i)$. Using \eqref{eq:eta-to-xi-linear-relation} we can write
\begin{displaymath}
    m_{\mat{\xi}}(z) = \langle\mat{t}(\ten{X}), \mat{F}(y)\mat{\xi}\rangle - b(\mat{F}(y)\mat{\xi}).
\end{displaymath}

The following are the regularity conditions for the log-likelihood required by \cref{thm:asymptotic-normality-gmlm}.

\begin{condition}\label{cond:differentiable-and-convex}
    The mapping $\mat{\xi}\mapsto m_{\mat{\xi}}(z)$ is twice continuously differentiable for almost every $z$ and $z\mapsto m_{\mat{\xi}}(z)$ is measurable. Moreover, $\mat{\eta}\mapsto b(\mat{\eta})$ is strictly convex. Furthermore, for every $\widetilde{\mat{\eta}}$, $P(\mat{F}(Y)\mat{\xi} = \widetilde{\mat{\eta}}) < 1$.
\end{condition}

\begin{condition}\label{cond:moments}
    $\E\|\t{\mat{t}(\ten{X})}\mat{F}(Y)\| < \infty$, and $\E\|\t{\mat{t}(\ten{X})}\mat{F}(Y)\|^2 < \infty$.
\end{condition}

\begin{condition}\label{cond:finite-sup-on-compacta}
    The mapping $\mat{\eta}\mapsto b(\mat{\eta})$ is twice continuously differentiable and for every non-empty compact $K\subseteq\Xi$,
    \begin{gather*}
        \E\sup_{\mat{\xi}\in K}\|b(\mat{F}(Y)\mat{\xi})\| < \infty, \qquad
        \E\sup_{\mat{\xi}\in K}\|\t{\nabla b(\mat{F}(Y)\mat{\xi})}\mat{F}(Y)\|^2 < \infty, \\
        \E\sup_{\mat{\xi}\in K}\| \t{\mat{F}(Y)}\nabla^2 b(\mat{F}(Y)\mat{\xi})\mat{F}(Y) \| < \infty.
    \end{gather*}
\end{condition}

The following are technical Lemmas needed for the proof of \cref{thm:asymptotic-normality-gmlm}.

\begin{lemma}\label{thm:kron-perm}
    Given $r \geq 2$ matrices $\mat{A}_k$ of dimension $p_j\times q_j$ for $k = 1, \ldots, r$, then there exists a unique permutation matrix $\mat{S}_{\mat{p}, \mat{q}}$ such that
    \begin{equation}\label{eq:kron-to-outer-perm}
        \vec\bigkron_{k = r}^{1}\mat{A}_k = \mat{S}_{\mat{p}, \mat{q}}\vec\bigouter_{k = 1}^{r}\mat{A}_k.
    \end{equation}
    The permutation $\mat{S}_{\mat{p}, \mat{q}}$ with indices $\mat{p} = (p_1, \ldots, p_r)$ and $\mat{q} = (q_1, \ldots, q_r)$ is defined recursively as
    \begin{equation}\label{eq:S_pq}
        \mat{S}_{\mat{p}, \mat{q}} = \mat{S}_{\bigl( \prod_{k = 1}^{r - 1}p_k, p_r \bigr), \bigl( \prod_{k = 1}^{r - 1}q_k, q_r \bigr)} \bigl(\mat{I}_{p_r q_r}\otimes\mat{S}_{(p_1, \ldots, p_{r-1}), (q_1, \ldots, q_{r-1})}\bigr)
    \end{equation}
    with initial value
    \begin{displaymath}
        \mat{S}_{(p_1, p_2), (q_1, q_2)} = \mat{I}_{q_2}\otimes\mat{K}_{q_1, p_2}\otimes\mat{I}_{p_1}
    \end{displaymath}
    where $\mat{K}_{p, q}$ is the \emph{commutation matrix} \cite[][Ch.~11]{AbadirMagnus2005};  that is, the permutation such that $\vec{\t{\mat{A}}} = \mat{K}_{p, q}\vec{\mat{A}}$ for every $p\times q$ dimensional matrix $\mat{A}$.
\end{lemma}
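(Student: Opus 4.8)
The plan is to prove the identity by induction on the number of factors $r$, using the two-factor case as the base and the recursive definition of $\mat{S}_{\mat{p},\mat{q}}$ to drive the inductive step. Two elementary Kronecker-product facts do all the work. First, for column vectors $\mat{u},\mat{v}$ we have $\vec(\mat{u}\t{\mat{v}}) = \mat{v}\otimes\mat{u}$; applied to $\mat{A}\circ\mat{B}\equiv(\vec\mat{A})\t{(\vec\mat{B})}$ this gives $\vec(\mat{A}\circ\mat{B}) = (\vec\mat{B})\otimes(\vec\mat{A})$, so after vectorization an outer product is simply a Kronecker product of the vectorized factors in reversed order. Second, the mixed-product property $(\mat{A}\otimes\mat{B})(\mat{C}\otimes\mat{D}) = (\mat{A}\mat{C})\otimes(\mat{B}\mat{D})$, specialized to $\mat{a}\otimes(\mat{S}\mat{b}) = (\mat{I}\otimes\mat{S})(\mat{a}\otimes\mat{b})$, lets me extract the induction-hypothesis permutation acting on the trailing factors.

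For the base case $r=2$, I would invoke the standard vectorization identity for a Kronecker product \cite[][Ch.~11]{AbadirMagnus2005}, namely $\vec(\mat{A}\otimes\mat{B}) = (\mat{I}_q\otimes\mat{K}_{n,p}\otimes\mat{I}_m)((\vec\mat{A})\otimes(\vec\mat{B}))$ for $\mat{A}\in\mathbb{R}^{p\times q}$ and $\mat{B}\in\mathbb{R}^{m\times n}$. Taking $\mat{A}=\mat{A}_2$ and $\mat{B}=\mat{A}_1$ yields $\vec(\mat{A}_2\otimes\mat{A}_1) = (\mat{I}_{q_2}\otimes\mat{K}_{q_1,p_2}\otimes\mat{I}_{p_1})((\vec\mat{A}_2)\otimes(\vec\mat{A}_1))$, and since $(\vec\mat{A}_2)\otimes(\vec\mat{A}_1) = \vec(\mat{A}_1\circ\mat{A}_2)$ by the first fact, this is exactly $\mat{S}_{(p_1,p_2),(q_1,q_2)}\vec(\mat{A}_1\circ\mat{A}_2)$ with the stated initial value.

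For the inductive step, assume the claim for $r-1$ factors and write $\bigkron_{k=r}^{1}\mat{A}_k = \mat{A}_r\otimes\mat{C}$ where $\mat{C} = \bigkron_{k=r-1}^{1}\mat{A}_k$ has dimension $(\prod_{k=1}^{r-1}p_k)\times(\prod_{k=1}^{r-1}q_k)$. Applying the base case to the pair $(\mat{C},\mat{A}_r)$, with $\mat{C}$ playing the role of the ``first'' and $\mat{A}_r$ the ``second'' factor, gives $\vec(\mat{A}_r\otimes\mat{C}) = \mat{S}_{(\prod_{k=1}^{r-1}p_k,\, p_r),(\prod_{k=1}^{r-1}q_k,\, q_r)}\vec(\mat{C}\circ\mat{A}_r)$. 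Next, $\vec(\mat{C}\circ\mat{A}_r) = (\vec\mat{A}_r)\otimes(\vec\mat{C})$; substituting the induction hypothesis $\vec\mat{C} = \mat{S}'\vec\mat{D}$, with $\mat{S}' = \mat{S}_{(p_1,\ldots,p_{r-1}),(q_1,\ldots,q_{r-1})}$ and $\mat{D} = \bigouter_{k=1}^{r-1}\mat{A}_k$, and pulling $\mat{S}'$ out via the mixed-product property gives $(\vec\mat{A}_r)\otimes(\mat{S}'\vec\mat{D}) = (\mat{I}_{p_r q_r}\otimes\mat{S}')((\vec\mat{A}_r)\otimes(\vec\mat{D})) = (\mat{I}_{p_r q_r}\otimes\mat{S}')\vec\bigouter_{k=1}^{r}\mat{A}_k$. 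Composing the two factors reproduces precisely the recursion \eqref{eq:S_pq}.

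Finally, for uniqueness and the permutation-matrix property, I would argue that as the factors $\mat{A}_k$ range over all matrices of the prescribed dimensions, the entries of $\bigouter_{k=1}^{r}\mat{A}_k$ run over all products $\prod_{k}(\mat{A}_k)_{i_k j_k}$, and choosing each $\mat{A}_k$ to be a single-entry indicator matrix realizes every standard basis vector as $\vec\bigouter_{k=1}^{r}\mat{A}_k$; hence any matrix satisfying \eqref{eq:kron-to-outer-perm} is determined column by column, which gives uniqueness. That $\mat{S}_{\mat{p},\mat{q}}$ is itself a permutation matrix follows because it is assembled from commutation matrices and identities through Kronecker and ordinary products, operations that preserve the class of permutation matrices. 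I expect the only genuinely delicate part to be the consistent bookkeeping of conventions — the reversed Kronecker order $k=r,\ldots,1$, the identification of which factor is ``first'' versus ``second'' in the base case, and the row/column subscripts of $\mat{K}$ — since the remaining algebra is a direct composition of the two facts above.
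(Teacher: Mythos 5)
Your proposal is correct and follows essentially the same route as the paper's proof: the two-factor Magnus--Neudecker vectorization identity as the base case, then induction via the split $\mat{A}_r\otimes\bigkron_{k=r-1}^{1}\mat{A}_k$, pulling the inner permutation out with the mixed-product identity to recover the recursion \eqref{eq:S_pq}. Your added argument for uniqueness and the permutation-matrix property (via indicator matrices realizing every standard basis vector) is a correct supplement that the paper leaves implicit.
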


\begin{proof}
    Lemma~7 in \cite{MagnusNeudecker1986} states that
    \begin{align}
        \vec(\mat{A}_2\otimes\mat{A}_1)
        &= (\mat{I}_{q_2}\otimes\mat{K}_{q_1, p_2}\otimes\mat{I}_{p_1})(\vec{\mat{A}_2}\otimes\vec{\mat{A}_1}) \label{eq:MagnusNeudecker1986-vec-kron-identity} \\
        &= (\mat{I}_{q_2}\otimes\mat{K}_{q_1, p_2}\otimes\mat{I}_{p_1})\vec(\mat{A}_1\circ \mat{A}_2). \nonumber
    \end{align}
    This proves the statement for $r = 2$. The general statement for $r > 2$ follows by induction. Assuming \eqref{eq:kron-to-outer-perm} holds for $r - 1$, the induction step is 
    \begin{multline*}
        \vec{\bigkron_{k = r}^{1}}\mat{A}_k
        = \vec\Bigl(\mat{A}_r\otimes\bigkron_{k = r - 1}^{1}\mat{A}_k\Bigr) \\
        \overset{\eqref{eq:MagnusNeudecker1986-vec-kron-identity}}{=} \Bigl( \mat{I}_{q_r}\otimes\mat{K}_{\prod_{k = 1}^{r - 1}q_k, p_r}\otimes\mat{I}_{\prod_{k = 1}^{r - 1}p_k} \Bigr)\vec\Bigl((\vec\mat{A}_r)\otimes\vec\bigkron_{k = r - 1}^{1}\mat{A}_k\Bigr) \\
        = \mat{S}_{\bigl( \prod_{k = 1}^{r - 1}p_k, p_r \bigr), \bigl( \prod_{k = 1}^{r - 1}q_k, q_r \bigr)}\vec\Bigl[\Bigl(\vec\bigkron_{k = r - 1}^{1}\mat{A}_k\Bigr)\t{(\vec\mat{A}_r)}\Bigr] \\
        \overset{\eqref{eq:kron-to-outer-perm}}{=} \mat{S}_{\bigl( \prod_{k = 1}^{r - 1}p_k, p_r \bigr), \bigl( \prod_{k = 1}^{r - 1}q_k, q_r \bigr)}\vec\Bigl[\mat{S}_{(p_1, \ldots, p_{r-1}), (q_1, \ldots, q_{r-1})}\Bigl(\vec\bigouter_{k = 1}^{r - 1}\mat{A}_k\Bigr)\t{(\vec\mat{A}_r)}\Bigr] \\
        \overset{(a)}{=} \mat{S}_{\bigl( \prod_{k = 1}^{r - 1}p_k, p_r \bigr), \bigl( \prod_{k = 1}^{r - 1}q_k, q_r \bigr)} \bigl(\mat{I}_{p_rq_r}\otimes\mat{S}_{(p_1, \ldots, p_{r-1}), (q_1, \ldots, q_{r-1})}\bigr)\vec\Bigl[\Bigl(\vec\bigouter_{k = 1}^{r - 1}\mat{A}_k\Bigr)\t{(\vec\mat{A}_r)}\Bigr] \\
        =\mat{S}_{\mat{p},\mat{q}}\vec\bigouter_{k = 1}^{r}\mat{A}_k.
    \end{multline*}
    Equality $(a)$ uses the relation $\vec(\mat{C}\mat{a}\t{\mat{b}}) = (\mat{I}_{\dim(\mat{b})}\otimes\mat{C})\vec(\mat{a}\t{\mat{b}})$ for a matrix $\mat{C}$ and vectors $\mat{a}, \mat{b}$.
\end{proof}

\begin{lemma}\label{thm:kron-manifold-tangent-space}
    Let $\manifold{A}_k\subseteq\mathbb{R}^{p_k\times q_k}\backslash\{\mat{0}\}$ for $k = 1, \ldots, r$ be smooth embedded submanifolds as well as ether a sphere or a cone. Then
    \begin{displaymath}
        \manifold{K} = \Bigl\{ \bigkron_{k = r}^{1}\mat{A}_k : \mat{A}_k\in\manifold{A}_k \Bigr\}
    \end{displaymath}
    is an embedded manifold in $\mathbb{R}^{p\times q}$ for $p = \prod_{k = 1}^{r} p_k$ and $q = \prod_{k = 1}^{r} q_k$.
    Furthermore, define for $j = 1, \ldots, r$ the matrices
    \begin{equation}\label{eq:kron-differential-span}
        \mat{\Gamma}_j
            = \bigkron_{k = r}^{1}(\mat{I}_{p_k q_k}\mathrm{\ if\ } j = k \mathrm{\ else\ }\vec{\mat{A}_k})
            = \bigkron_{k = r}^{j + 1}(\vec{\mat{A}_k})\otimes\mat{I}_{p_j q_j}\otimes\bigkron_{k = j - 1}^{1}(\vec{\mat{A}_k})
    \end{equation}
    and let $\gamma_j$ be $p_j q_j\times d_j$ matrices with $d_j \geq\dim\manifold{A}_j$ which span the tangent space $T_{\mat{A}_j}\manifold{A}_j$ of $\manifold{A}$ at $\mat{A}_j\in\manifold{A}_j$, that is $\Span\gamma_j = T_{\mat{A}_j}\manifold{A}_j$.
    Then, with the permutation matrix $\mat{S}_{\mat{p}, \mat{q}}$ defined in \eqref{eq:S_pq}, the $p q \times \sum_{k = 1}^{r} d_j$ dimensional matrix
    \begin{displaymath}
        \mat{P}_{\mat{A}} = \mat{S}_{\mat{p}, \mat{q}}\left[\mat{\Gamma}_1\mat{\gamma}_1, \mat{\Gamma}_2\mat{\gamma}_2, \ldots, \mat{\Gamma}_r\mat{\gamma}_r\right]
    \end{displaymath}
    spans the tangent space $T_{\mat{A}}\manifold{K}$ of $\manifold{K}$ at $\mat{A} = \bigkron_{k = r}^{1}\mat{A}_k\in\manifold{K}$.
\end{lemma}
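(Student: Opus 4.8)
The plan is to treat the two assertions separately: first that $\manifold{K}$ is an embedded manifold, and then that $\mat{P}_{\mat{A}}$ spans its tangent space at $\mat{A}$.

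For the manifold claim I would argue by induction on $r$, peeling off one factor at a time and applying \cref{thm:kron-manifolds} to $\manifold{A}_r$ and the accumulated product $\manifold{K}' = \{\bigkron_{k = r - 1}^{1}\mat{A}_k\}$. The only thing that must be tracked across the induction is whether $\manifold{K}'$ is still a sphere or a cone, so that one of the two hypotheses of \cref{thm:kron-manifolds} applies. This is immediate from two elementary facts: $\|\mat{A}\otimes\mat{B}\|_F = \|\mat{A}\|_F\|\mat{B}\|_F$, so a Kronecker product of spheres is again spherical; and $c(\mat{A}\otimes\mat{B}) = (c\mat{A})\otimes\mat{B}$, so any Kronecker product involving at least one cone is again a cone. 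Bookkeeping the dimension formula of \cref{thm:kron-manifolds} (an increment of $\dim\manifold{A}_r$ under the sphere condition, and $\dim\manifold{A}_r - 1$ under the cone condition) then yields $\dim\manifold{K} = \sum_{k}\dim\manifold{A}_k - \max(m - 1, 0)$, where $m$ is the number of the $\manifold{A}_k$ that are cones; I would record this count, as it is exactly what the tangent-space dimension must match.

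For the tangent-space claim I would first establish the inclusion $\Span\mat{P}_{\mat{A}}\subseteq T_{\mat{A}}\manifold{K}$ by exhibiting each column as a curve velocity. Given $\mat{V}_j \in T_{\mat{A}_j}\manifold{A}_j$, choose curves $\mat{A}_j(t)\in\manifold{A}_j$ with $\mat{A}_j(0) = \mat{A}_j$ and $\dot{\mat{A}}_j(0) = \mat{V}_j$; then $\gamma(t) = \bigkron_{k = r}^{1}\mat{A}_k(t)$ is a curve in $\manifold{K}$. Using \cref{thm:kron-perm} to write $\vec\gamma(t) = \mat{S}_{\mat{p},\mat{q}}\vec\bigouter_{k=1}^{r}\mat{A}_k(t) = \mat{S}_{\mat{p},\mat{q}}\bigkron_{k=r}^{1}\vec\mat{A}_k(t)$ and differentiating by the Leibniz rule gives $\t{\nabla\gamma(0)} = \mat{S}_{\mat{p},\mat{q}}\sum_{j}\mat{\Gamma}_j\vec\mat{V}_j$, with $\mat{\Gamma}_j$ exactly as in \eqref{eq:kron-differential-span}. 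Writing $\vec\mat{V}_j = \mat{\gamma}_j\mat{w}_j$ shows this velocity equals $\mat{P}_{\mat{A}}$ applied to the stacked $\mat{w}_j$, so by \cref{def:tangent-space} every column of $\mat{P}_{\mat{A}}$ lies in $T_{\mat{A}}\manifold{K}$.

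The reverse inclusion is where the real work lies, and I would obtain it by a dimension count showing $\dim\Span\mat{P}_{\mat{A}} = \dim\manifold{K}$. Since $\mat{S}_{\mat{p},\mat{q}}$ is a permutation and each $\mat{\Gamma}_j$ has full column rank (its factors $\vec\mat{A}_k$ are nonzero), this reduces to computing the kernel of the linear map $(\mat{V}_1,\ldots,\mat{V}_r)\mapsto\sum_j\mat{\Gamma}_j\vec\mat{V}_j$ on $\bigoplus_j T_{\mat{A}_j}\manifold{A}_j$. The key computation is that $\sum_j \mat{\Gamma}_j\vec\mat{V}_j = 0$ forces each $\mat{V}_j$ to be radial: contracting the identity against functionals $\phi_k$ with $\phi_k(\vec\mat{A}_k) = 1$ in every slot except a fixed $j_0$ collapses all terms and leaves $\vec\mat{V}_{j_0} \in \Span\vec\mat{A}_{j_0}$, say $\vec\mat{V}_j = c_j\vec\mat{A}_j$. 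Substituting back and using $\bigkron_{k=r}^{1}\vec\mat{A}_k \neq 0$ gives the single constraint $\sum_j c_j = 0$, which is precisely the scaling non-identifiability $\mat{A}_j\otimes\mat{A}_{j'} = (c\mat{A}_j)\otimes(c^{-1}\mat{A}_{j'})$. Tangency then restricts the admissible $c_j$: a radial direction $c_j\vec\mat{A}_j$ is tangent to $\manifold{A}_j$ iff $\manifold{A}_j$ is a cone, since for a sphere $T_{\mat{A}_j}\manifold{A}_j\subseteq(\vec\mat{A}_j)^{\perp}$ forces $c_j = 0$. Hence the kernel has dimension $\max(m - 1, 0)$, so $\dim\Span\mat{P}_{\mat{A}} = \sum_k\dim\manifold{A}_k - \max(m-1,0) = \dim\manifold{K}$, and combined with the forward inclusion this proves $\Span\mat{P}_{\mat{A}} = T_{\mat{A}}\manifold{K}$. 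The main obstacle is this kernel computation: correctly isolating the radial directions and matching the resulting dimension drop to the cone/sphere dichotomy inherited from \cref{thm:kron-manifolds}.
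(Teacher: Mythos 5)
Your proposal is correct, and its core computation coincides with the paper's: both differentiate $\vec\bigkron_{k=r}^{1}\mat{A}_k$ via the product rule and \cref{thm:kron-perm} to obtain $\mat{S}_{\mat{p},\mat{q}}\sum_j\mat{\Gamma}_j\vec{\d\mat{A}_j}$ with the $\mat{\Gamma}_j$ of \eqref{eq:kron-differential-span}, and both get the manifold claim by induction on \cref{thm:kron-manifolds}. Where you diverge is in justifying that this span actually \emph{equals} $T_{\mat{A}}\manifold{K}$: the paper simply asserts that ``the differential provides the gradient of a surjective map into the manifold'' and that the span of the gradient therefore spans the tangent space, then replaces each $\vec{\d\mat{A}_j}$ by $\mat{\gamma}_j$ via a chart. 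You instead prove the two inclusions separately --- the forward one by exhibiting each column as a curve velocity, and the reverse one by computing the kernel of $(\mat{V}_1,\ldots,\mat{V}_r)\mapsto\sum_j\mat{\Gamma}_j\vec{\mat{V}_j}$ on $\bigoplus_j T_{\mat{A}_j}\manifold{A}_j$ and matching its dimension $\max(m-1,0)$ (with $m$ the number of cone factors) against the dimension of $\manifold{K}$ delivered by the induction. This is a genuine strengthening rather than mere pedantry: surjectivity of a smooth parametrization does not by itself force its differential to span the tangent space (the paper's claim really rests on the hemisphere-chart diffeomorphisms constructed in the proof of \cref{thm:kron-manifolds}, which it does not invoke here), and your kernel computation makes explicit that the only rank deficiency of $\mat{P}_{\mat{A}}$ comes from the radial directions $c_j\vec{\mat{A}_j}$ with $\sum_j c_j=0$, i.e.\ exactly the scaling non-identifiability of the Kronecker factors, with spherical factors contributing no defect because their tangent spaces are orthogonal to the radial direction. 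The cost is a longer argument; the benefit is that the dimension bookkeeping, which the paper leaves implicit, is verified end to end.
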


\begin{proof}
    The statement that $\manifold{K}$ is an embedded manifold follows via induction using \cref{thm:kron-manifolds}.
    We compute the differential of the vectorized Kronecker product using \cref{thm:kron-perm} where $\mat{S}_{\mat{p}, \mat{q}}$ is the permutation \eqref{eq:S_pq} defined therein.
    \begin{multline*}
        \d\vec\bigotimes_{k = r}^{1}\mat{A}_k
        = \vec\sum_{j = 1}^{r}\bigkron_{k = r}^{1}(\ternary{k = j}{\d\mat{A}_j}{\mat{A}_k}) \\
        =\mat{S}_{\mat{p},\mat{q}}\vec\sum_{j = 1}^{r}\Bigl(\bigouter_{k = 1}^{r}(\ternary{k = j}{\d\mat{A}_j}{\mat{A}_k})\Bigr)
        =\mat{S}_{\mat{p}, \mat{q}}\sum_{j = 1}^{r}\bigkron_{k = r}^{1}(\ternary{k = j}{\vec\d\mat{A}_j}{\vec\mat{A}_k}) \\
        =\mat{S}_{\mat{p}, \mat{q}}\sum_{j =1}^{r}\Bigl(\bigkron_{k = r}^{1}(\ternary{k = j}{\mat{I}_{p_j q_j}}{\vec\mat{A}_k})\Bigr)\vec\d\mat{A}_j= \mat{S}_{\mat{p}, \mat{q}}\sum_{j = 1}^{r}\mat{\Gamma}_j\vec\d\mat{A}_j \\
        =\mat{S}_{\mat{p}, \mat{q}}[\mat{\Gamma}_1, \ldots, \mat{\Gamma}_r]\begin{pmatrix}
            \vec\d\mat{A}_1 \\ \vdots \\ \vec\d\mat{A}_r
        \end{pmatrix}
    \end{multline*}
    Due to the definition of the manifold, this differential provides the gradient of a surjective map into the manifold. The span of the gradient then spans the tangent space.

    Now, we take a closer look at the differentials $\vec{\d\mat{A}_j}$ for $j = 1, \ldots, r$. Let $\varphi_j$ be a chart of $\manifold{A}_j$ in a neighborhood of $\mat{A}_j$. Then, $\mat{A}_j = \varphi_j^{-1}(\varphi_j(\mat{A}_j))$ which gives
    \begin{displaymath}
        \vec{\d\mat{A}_j} = \t{\nabla\varphi_j^{-1}(\varphi_j(\mat{A}_j))}\vec\d\varphi_j(\mat{A}_j).
    \end{displaymath}
    Therefore, for every matrix $\mat{\gamma}_j$ such that $\Span{\mat{\gamma}_j} = T_{\mat{A}_j}\manifold{A}_j$ holds $\Span{\t{\nabla\varphi_j^{-1}(\varphi_j(\mat{A}_j))}} = \Span{\mat{\gamma}_j}$ by \cref{def:tangent-space} of the tangent space. We get
    \begin{displaymath}
        \Span\mat{S}_{\mat{p}, \mat{q}}[\mat{\Gamma}_1, \ldots, \mat{\Gamma}_r]\begin{pmatrix}
        \vec\d\mat{A}_1 \\ \vdots \\ \vec\d\mat{A}_r
    \end{pmatrix}
        =\Span\mat{S}_{\mat{p}, \mat{q}}[\mat{\Gamma}_1\mat{\gamma}_1, \ldots, \mat{\Gamma}_r\mat{\gamma}_r]
        =\Span\mat{P}_{\mat{A}}
    \end{displaymath}
    which concludes the proof.
\end{proof}

\begin{proof}[Proof of \cref{thm:asymptotic-normality-gmlm}]
    The proof consists of three parts. First, we show the existence of a consistent strong M-estimator by applying \cref{thm:M-estimator-consistency-on-subsets}. Next, we apply \cref{thm:M-estimator-asym-normal-on-manifolds} to obtain its asymptotic normality. We conclude by computing the missing parts of the asymtotic covariance matrix $\mat{\Sigma}_{\mat{\theta}_0}$ provided by \cref{thm:M-estimator-asym-normal-on-manifolds}.

    We check whether the conditions of \cref{thm:M-estimator-consistency-on-subsets} are satisfied. On $\Xi$, the mapping $\mat{\xi}\mapsto m_{\mat{\xi}}(z) = m_{\mat{\xi}}(\ten{X},y) = \langle \mat{F}(y)\mat{\xi}, \mat{t}(\ten{X}) \rangle - b(\mat{F}(y)\mat{\xi})$ is strictly concave for every $z$ because $\mat{\xi}\mapsto\mat{F}(y)\mat{\xi}$ is linear and $b$ is strictly convex by \cref{cond:differentiable-and-convex}. Since $\ten{X} \mid Y$ is distributed according to \eqref{eq:quadratic-exp-fam}, the function $M(\mat{\xi}) = \E m_{\mat{\xi}}(Z)$ is well defined by \cref{cond:moments}. Let $\mat{\xi}_k = (\vec{\overline{\ten{\eta}}_k}, \vec{\mat{B}_k}, \vech{\mat{\Omega}_k})$, and $f_{\mat{\xi}_k}$ be the pdf of $\ten{X} \mid Y$ indexed by $\mat{\xi}_k$, for $k = 1, 2$. If $\mat{\xi}_1\ne \mat{\xi}_2$, then $f_{\mat{\xi}_1} \neq f_{\mat{\xi}_2}$, which obtains that the true $\mat{\theta}_0$ is a unique maximizer of $\mat{\theta}_0\in\Theta\subseteq\Xi$ by applying Lemma~5.35 from \cite{vanderVaart1998}. Finally, under \cref{cond:finite-sup-on-compacta}, all assumptions of \cref{thm:M-estimator-consistency-on-subsets} are fulfilled yielding the existence of an consistent strong M-estimator over $\Theta\subseteq\Xi$.

    Next, let $\hat{\mat{\theta}}_n$ be a strong M-estimator on $\Theta\subseteq\Xi$, whose existence and consistency was shown in the previous step. Since $z\mapsto m_{\mat{\xi}}(z)$ is measurable for all $\mat{\xi}\in\Xi$, it is also measurable in a neighborhood of $\mat{\theta}_0$. The differentiability of $\mat{\theta}\mapsto m_{\mat{\theta}}(z)$ is stated in \cref{cond:differentiable-and-convex}. For the Lipschitz condition, let $K\subseteq\Xi$ be a compact neighborhood of $\mat{\theta}_0$, which exists since $\Xi$ is open. Then,
    \begin{multline*}
        \left| m_{\mat{\theta}_1}(z) - m_{\mat{\theta}_2}(z) \right|
        = \left| \langle \mat{t}(\ten{X}), \mat{F}(y)(\mat{\theta}_1 - \mat{\theta}_2) \rangle - b(\mat{F}(z)\mat{\theta}_1) + b(\mat{F}(z)\mat{\theta}_2) \right| \\
        \leq (\| \t{\mat{F}(y)}\mat{t}(\ten{X}) \|_2 + \sup_{\mat{\theta}\in K}\| \nabla b(\mat{F}(y)\mat{\theta}) \mat{F}(y)\| ) \| \mat{\theta}_1 - \mat{\theta}_2 \|_2
        =: u(z)\| \mat{\theta}_1 - \mat{\theta}_2 \|_2
    \end{multline*}
    with $u(z)$ being measurable and square integrable derives from \cref{cond:finite-sup-on-compacta}. The existence of a second-order Taylor expansion of $\mat{\theta}\mapsto M(\mat{\theta}) = \E m_{\mat{\theta}}(Z)$ in a neighborhood of $\mat{\theta}_0$ holds by \cref{cond:finite-sup-on-compacta}. Moreover, the Hessian $\mat{H}_{\mat{\theta}_0}$ is non-singular by the strict convexity of $b$ stated in \cref{cond:differentiable-and-convex}. Now, we can apply \cref{thm:M-estimator-asym-normal-on-manifolds} to obtain the asymptotic normality of $\sqrt{n}(\hat{\mat{\theta}}_n - \mat{\theta}_0)$ with variance-covariance structure 
    \begin{equation}\label{eq:asymptotic-covariance-gmlm}
        \mat{\Sigma}_{\mat{\theta}_0} = \mat{\Pi}_{\mat{\theta}_0} \E[\nabla m_{\mat{\theta}_0}(Z)\t{(\nabla m_{\mat{\theta}_0}(Z))}]\mat{\Pi}_{\mat{\theta}_0}
    \end{equation}
    where $\mat{\Pi}_{\mat{\theta}_0} = \mat{P}_{\mat{\theta}_0}(\t{\mat{P}_{\mat{\theta}_0}}\mat{H}_{\mat{\theta}_0}\mat{P}_{\mat{\theta}_0})^{-1}\t{\mat{P}_{\mat{\theta}_0}}$ and $\mat{P}_{\mat{\theta}_0}$ is any $p\times \dim(\Theta)$ matrix such that it spans the tangent space of $\Theta$ at $\mat{\theta}_0$. That is, $\Span \mat{P}_{\mat{\theta}_0} = T_{\mat{\theta}_0}\Theta$.

    Finally, we compute a matrix $\mat{P}_{\mat{\theta}_0}$ such that $\Span{\mat{P}_{\mat{\theta}_0}} = T_{\mat{\theta}_0}\Theta$ for $\Theta = \mathbb{R}^p\times\manifold{K}_{\mat{B}}\times\manifold{CK}_{\mat{\Omega}}$ as in \cref{thm:param-manifold}. Since the manifold $\Theta$ is a product manifold we get a block diagonal structure for $\mat{P}_{\mat{\theta}_0}$ as
    \begin{displaymath}
        \mat{P}_{\mat{\theta}_0} = \begin{pmatrix}
            \mat{I}_p & 0 & 0 \\
            0 & \mat{P}_{\mat{B}_0} & 0 \\
            0 & 0 & \mat{P}_{\mat{\Omega}_0}
        \end{pmatrix}
    \end{displaymath}
    where $\mat{I}_p$ is the identity matrix spanning the tangent space of $\mathbb{R}^p$, which is identified with $\mathbb{R}^p$ itself. The blocks $\mat{P}_{\mat{B}_0}$ and $\mat{P}_{\mat{\Omega}_0}$ need to span the tangent spaces of $\manifold{K}_{\mat{B}}$ and $\manifold{CK}_{\mat{\Omega}}$, respectively. Both $\manifold{K}_{\mat{B}}$ and $\manifold{CK}_{\mat{\Omega}}$ are manifolds according to \cref{thm:kron-manifolds} under the cone condition. The constraint manifold $\manifold{CK}_{\mat{\Omega}}$ is the intersection of $\manifold{K}_{\mat{\Omega}}$ with the span of the projection $\pinv{(\mat{T}_2\pinv{\mat{D}_p})}\mat{T}_2\pinv{\mat{D}_p}$ meaning that the differential $\vec{\d\mat{\Omega}}$ on $\manifold{CK}_{\mat{\Omega}}$ fulfills $\vec{\d\mat{\Omega}} = \pinv{(\mat{T}_2\pinv{\mat{D}_p})}\mat{T}_2\pinv{\mat{D}_p}\vec{\d\mat{\Omega}}$. Now, we can apply \cref{thm:kron-manifold-tangent-space} for $\manifold{K}_{\mat{B}}$ and $\manifold{K}_{\mat{\Omega}}$ which give
    \begin{align*}
        \mat{P}_{\mat{B}_0} &= \mat{S}_{\mat{p}, \mat{q}}[\mat{\Gamma}_{\mat{\beta}_1}\mat{\gamma}_{\mat{\beta}_1}, \ldots, \mat{\Gamma}_{\mat{\beta}_r}\mat{\gamma}_{\mat{\beta}_r}], \\
        \mat{P}_{\mat{\Omega}_0} &= \pinv{(\mat{T}_2\pinv{\mat{D}_p})}\mat{T}_2\pinv{\mat{D}_p}\mat{S}_{\mat{p}, \mat{p}}[\mat{\Gamma}_{\mat{\Omega}_1}\mat{\gamma}_{\mat{\Omega}_1}, \ldots, \mat{\Gamma}_{\mat{\Omega}_r}\mat{\gamma}_{\mat{\Omega}_r}]
    \end{align*}
    where the matrices $\mat{S}_{\mat{p}, \mat{q}}$, $\mat{\Gamma}_{\mat{\beta}_j}$, $\mat{\gamma}_{\mat{\beta}_j}$, $\mat{\Gamma}_{\mat{\Omega}_j}$ and $\mat{\gamma}_{\mat{\Omega}_j}$ are described in \cref{thm:kron-manifold-tangent-space} for the Kronecker manifolds $\manifold{K}_{\mat{B}}$ and $\manifold{K}_{\mat{\Omega}}$. Leading to
    \begin{equation}\label{eq:param-manifold-span}
        \mat{P}_{\mat{\theta}_0} = \begin{pmatrix}
            \mat{I}_p & 0 & 0 \\
            0 & \mat{S}_{\mat{p}, \mat{q}}[\mat{\Gamma}_{\mat{\beta}_1}\mat{\gamma}_{\mat{\beta}_1}, \ldots, \mat{\Gamma}_{\mat{\beta}_r}\mat{\gamma}_{\mat{\beta}_r}] & 0 \\
            0 & 0 & \pinv{(\mat{T}_2\pinv{\mat{D}_p})}\mat{T}_2\pinv{\mat{D}_p}\mat{S}_{\mat{p}, \mat{p}}[\mat{\Gamma}_{\mat{\Omega}_1}\mat{\gamma}_{\mat{\Omega}_1}, \ldots, \mat{\Gamma}_{\mat{\Omega}_r}\mat{\gamma}_{\mat{\Omega}_r}]
        \end{pmatrix}.
    \end{equation}
\end{proof}

\section{Examples}\label{app:examples}

\begin{example}[Vectorization]\label{ex:vectorization}
    Given a matrix
    \begin{displaymath}
        \mat{A} = \begin{pmatrix}
            1 & 4 & 7 \\
            2 & 5 & 8 \\
            3 & 6 & 9
        \end{pmatrix}
    \end{displaymath}
    its vectorization is $\vec{\mat{A}} = \t{(1, 2, 3, 4, 5, 6, 7, 8, 9)}$ and its half vectorization $\vech{\mat{A}} = \t{(1, 2, 3, 5, 6, 9)}$. Let a $\ten{A}$ be a tensor of dimension $3\times 3\times 3$ given by
    \begin{displaymath}
        \ten{A}_{:,:,1} = \begin{pmatrix}
            1 & 4 & 7 \\
            2 & 5 & 8 \\
            3 & 6 & 9
        \end{pmatrix},
        \qquad
        \ten{A}_{:,:,2} = \begin{pmatrix}
            10 & 13 & 16 \\
            11 & 14 & 17 \\
            12 & 15 & 18
        \end{pmatrix},
        \qquad
        \ten{A}_{:,:,3} = \begin{pmatrix}
            19 & 22 & 25 \\
            20 & 23 & 26 \\
            21 & 24 & 27
        \end{pmatrix}.
    \end{displaymath}
    Then the vectorization of $\ten{A}$ is given by
    \begin{displaymath}
        \vec{\ten{A}} = \t{(1, 2, 3, 4, ..., 26, 27)}\in\mathbb{R}^{27}.
    \end{displaymath}
\end{example}

\begin{example}[Matricization]
    Let $\ten{A}$ be the $3\times 4\times 2$ tensor given by
    \begin{displaymath}
        \ten{A}_{:,:,1} = \begin{pmatrix}
            1 & 4 & 7 & 10 \\
            2 & 5 & 8 & 11 \\
            3 & 6 & 9 & 12
        \end{pmatrix},
        \ten{A}_{:,:,2} = \begin{pmatrix}
            13 & 16 & 19 & 22 \\
            14 & 17 & 20 & 23 \\
            15 & 18 & 21 & 24
        \end{pmatrix}.
    \end{displaymath}
    Its matricizations are then
    \begin{gather*}
        \ten{A}_{(1)} = \begin{pmatrix}
            1 & 4 & 7 & 10 & 13 & 16 & 19 & 22 \\
            2 & 5 & 8 & 11 & 14 & 17 & 20 & 23 \\
            3 & 6 & 9 & 12 & 15 & 18 & 21 & 24
        \end{pmatrix},
        \qquad
        \ten{A}_{(2)} = \begin{pmatrix}
             1 &  2 &  3 & 13 & 14 & 15 \\
             4 &  5 &  6 & 16 & 17 & 18 \\
             7 &  8 &  9 & 19 & 20 & 21 \\
            10 & 11 & 12 & 22 & 23 & 24
        \end{pmatrix}, \\
        \ten{A}_{(3)} = \begin{pmatrix}
             1 &  2 &  3 &  4 &  5 &  6 &  7 &  8 &  9 &  10 &  11 &  12 \\
            13 & 14 & 15 & 16 & 17 & 18 & 19 & 20 & 21 &  22 &  23 &  24
        \end{pmatrix}.
    \end{gather*}
\end{example}

\end{document}